\DeclareMathSymbol{\lsb@l}{\mathalpha}{letters}{`l}
\newfont{\authfntsmall}{phvr at 11pt}
\newfont{\eaddfntsmall}{phvr at 9pt}
\def\paragraph#1{\smallskip\noindent{\bf #1.}}
\newtheorem{thm}{Theorem}
\newtheorem{prop}[thm]{Proposition}
\newtheorem{lem}[thm]{Lemma}
\newtheorem{coro}[thm]{Corollary}
\newtheorem{fact}[thm]{Fact}
\theoremstyle{definition}
\newtheorem{example}[thm]{Example}
\newenvironment{algoenv}[3][\linewidth]{
\begin{minipage}{#1} 
\flushleft
\rule{\textwidth}{.08em}\vspace{-.3\baselineskip}\smallskip
\begin{description}
\item[\rlap{Input}\phantom{Output}] #2
\item[Output] #3
\end{description}
\vspace{-.7\baselineskip}
\rule{\textwidth}{.05em}
\begin{algorithmic}}
{\end{algorithmic}
\vspace{-.5\baselineskip}
\rule{\textwidth}{.08em}
\end{minipage}}
\newcommand{\bigO}{{{O}}}
\newcommand{\softO}{\tilde{\bigO}}
\def\le{\leqslant}
\def\ge{\geqslant}
\def\AA{\mathbb{A}}
\def\KK{\mathbb{K}}
\def\NN{\mathbb{N}}
\def\QQ{\mathbb{Q}}
\def\ZZ{\mathbb{Z}}
\newcommand{\rec}{\operatorname{rec}}
\newcommand{\Newton}{\mathcal{N}}
\newcommand{\Residue}{\operatorname{Residue}}
\newcommand{\Resultant}{\operatorname{Resultant}}
\newcommand{\bideg}{\operatorname{bideg}}
\newcommand{\Diag}{\operatorname{Diag}}
\newcommand{\slope}{\operatorname{ddeg}^-}
\newcommand{\slopesup}{\operatorname{ddeg}^+}
\newcommand{\nsmall}{\operatorname{Nsmall}}
\newcommand{\val}{\operatorname{val}}
\newcommand{\Wr}{\operatorname{Wr}}
\def\gathen#1{{#1}}
\title[Algebraic Diagonals and Walks]{Algebraic Diagonals and Walks: \\ Algorithms, Bounds, Complexity}
\author{Alin Bostan}\address{Inria (France)}
  \email{alin.bostan@inria.fr}
\author{Louis Dumont}
  \address{Inria (France)}
  \email{louis.dumont@inria.fr}
\author{Bruno Salvy}
        \address{Inria (France),
       LIP (U. Lyon, CNRS, ENS Lyon, UCBL)}
       \email{bruno.salvy@inria.fr}
\begin{document}

\begin{abstract}
The diagonal of a multivariate power series $F$ is the univariate power series $\Diag F$ generated by the diagonal terms of $F$. Diagonals form an important class of power series; they occur frequently in number theory, theoretical physics and enumerative combinatorics. We study algorithmic questions related to diagonals in the case where $F$ is the Taylor expansion of a bivariate rational function. It is classical that in this case $\Diag F$ is an algebraic function. We propose an algorithm that computes an annihilating polynomial for $\Diag F$. We give a precise bound on the size of this  polynomial and show that generically, this polynomial is the minimal polynomial and that its size reaches the bound. The algorithm runs in time quasi-linear in this bound, which grows exponentially with the degree of the input rational function. We then address the related problem of enumerating directed lattice walks. The insight given by our study leads to a new method for expanding the generating power series of bridges, excursions and meanders. We show that their first $N$ terms can be computed in quasi-linear complexity in $N$, without first computing a very large polynomial equation.
\end{abstract}
\maketitle

\section{Introduction}

The \emph{diagonal} of a multivariate power series with coefficients~$a_{i_1,\dots,i_k}$ is the univariate power series with coefficients~$a_{i,\dots,i}$. Particularly interesting is the class of diagonals of \emph{rational} power series (ie, Taylor expansions of rational functions). In particular,  diagonals of \emph{bivariate} rational power series are always roots of  nonzero bivariate polynomials (ie, they are algebraic series)~\cite{Polya1921,Furstenberg1967}. 
This property persists for multivariate rational power series, but only in positive characteristic, while the converse inclusion~---~algebraic series being diagonals of rational series~---~always holds~\cite{Furstenberg1967,Safonov87,DenefLipshitz1987}. As far as we are aware, the first occurrence of this result  in the literature  is an article of P\'olya's~\cite{Polya1921}, which
deals with a particular class of bivariate rational functions; the proof uses elementary
complex analysis. Along the lines of P\'olya's approach, Furstenberg~\cite{Furstenberg1967} gave a (sketchy) proof of the general result, over the
field of complex numbers; the same argument has been enhanced later~\cite{HautusKlarner1971},\cite[\S6.3]{Stanley99}. 
Three more different proofs exist: a purely algebraic one that works 
over arbitrary fields of characteristic zero~\cite[Th. 6.1]{Gessel80} (see also~\cite[Th. 6.3.3]{Stanley99}), one based on non-commutative power series~\cite[Prop. 5]{Fliess74}, and a combinatorial proof~\cite[\S3.4.1]{BousquetMelou06} that relies on an encoding of the diagonal using unidimensional walks, seen themselves as words of a non-ambiguous context-free language.
Various other generalizations are known~\cite{Furstenberg1967,Deligne84,Haible97,Pochekutov09}.

\paragraph{Polynomial equations}
Despite the richness of the topic and the fact that most proofs are constructive in essence, we were not able to find in the literature any \emph{explicit} algorithm for computing a bivariate polynomial that cancels the diagonal of a general bivariate rational function. We design in Section~\ref{sec:diagonals} such an algorithm for computing a polynomial equation for the diagonal of an arbitrary bivariate rational function. We show in Proposition~\ref{prop:generic} that generically, the size of the minimal polynomial for the diagonal of a rational function is exponential in the degree of the input and that our algorithm computes it in quasi-optimal complexity (Theorem~\ref{thm:bound diagonals}).
 
 The algorithm has two main steps that may be of independent interest. The first step is the computation of a polynomial equation for the residues of a bivariate rational function. We propose an efficient algorithm for this task, that is a polynomial-time version of Bronstein's algorithm~\cite{Bronstein92}; corresponding size and complexity bounds are given in Theorem~\ref{th:Bronstein}.
The second step is the computation of a polynomial equation for the sums of a fixed number of roots of a given polynomial. We design an additive version of the Platypus algorithm~\cite[\S2.3]{BanderierFlajolet2002} and analyze it in Theorem~\ref{thm:platypus-bound}.

\paragraph{Recurrences}
Since it is also classical that  algebraic series are differentially finite (ie, satisfy linear differential equations with polynomial coefficients),  the coefficients of these bivariate diagonals satisfy  linear recurrences and this leads to an optimal algorithm for the computation of their first terms~\cite{ChudnovskyChudnovsky1986,ChudnovskyChudnovsky1987a,BostanChyzakLecerfSalvySchost2007}. We show however, that computing an annihilating polynomial of the diagonal first is usually not the right approach and that a direct computation of the recurrence~\cite{BoChChLi10} will be more efficient. For completeness, we mention that in more than two variables, diagonals of rational functions are still differentially finite~\cite{Christol85,Lipshitz1988} and currently the most efficient algorithm in that situation is that based on the Griffiths-Dwork method~\cite{BostanLairezSalvy2013,Lairez2014}.

\paragraph{Walks}
Diagonals of rational functions appear  naturally in enumerative combinatorics. In particular, the enumeration of unidimensional walks has been the subject of recent activity, see~\cite{BanderierFlajolet2002} and the references therein.  
Three generating functions of different types of walks are of interest: the generating series $B$ of bridges, $E$ of excursions and $M$ of meanders (these are defined precisely in Section~\ref{sec:walks}). 
The algebraicity of these generating functions is classical as
well, and related to that of bivariate diagonals. Beyond this structural
result, several quantitative and effective results are known. Explicit
formulas give the generating functions in terms of implicit algebraic functions
attached to the set of allowed steps in the cases of
excursions~\cite[\S4]{BoPe00},\cite{Gessel80}, 
bridges and meanders~\cite{BanderierFlajolet2002}. Moreover, 
Bousquet-Mélou gave a tight exponential bound on the degree of the annihilating polynomial in the case of excursions~\cite[\S2.1]{Bousquet2006}, while Banderier and
Flajolet designed an algorithm (called the
\emph{Platypus Algorithm}) computing it~\cite[\S2.3]{BanderierFlajolet2002}. 

Our message for these walks is that again, precomputing a polynomial equation is too costly if one is only interested in the enumeration. 
 Instead, we propose to precompute a differential equation for $B$, that has polynomial size only, to use it for expanding~$B$, and to recover the expansion of~$E$ from that of~$B$. For meanders, we compute a polynomial-size differential equation for $\log M$, from which the expansion of $M$ can be computed efficiently.
Our algorithms have quasi-linear complexity in the precision of the expansion, while keeping the precomputation step in polynomial complexity
(Theorem~\ref{thm:walks}).

\paragraph{Structure of the article}
After a preliminary section on background and notation, we first discuss several special bivariate resultants of broader general interest in Sections~\ref{sec:residues} and~\ref{sec:summation of residues}. Next, we consider 
diagonals, the size of their {minimal} polynomials and an efficient way of computing annihilating polynomials in Section~\ref{sec:diagonals}. 
Finally, we turn to walks in Section~\ref{sec:walks} and show how to compute the coefficients of the generating functions of excursions and of meanders efficiently.

A preliminary version of this article has appeared at the ISSAC'15 conference~\cite{BostanDumontSalvy2015}. In the present version, we give tight bounds in the main results (Theorems~\ref{thm:platypus-bound} and~\ref{thm:bound diagonals}), an improved algorithm for the algebraic residues and more detailed proofs throughout.

\paragraph{Acknowledgments} This work was supported in part by the project FastRelax ANR-14-CE25-0018-01.

\section{Background and Notation}
In this section, 
that might be skipped at first reading,
we introduce notation and technical results that will be used throughout the article.
\subsection{Notation}
In this article, $\KK$ denotes a field of characteristic~0, and $\overline{\KK}$ an algebraic closure of $\KK$. 
We denote by $\KK[x]_n$ the set of polynomials in $\KK[x]$ of degree less than~$n$. Similarly, $\KK(x)_n$ stands for the set of rational functions in $\KK(x)$ with numerator and denominator in $\KK[x]_n$, and $\KK[[x]]_n$ for the set of power series in $\KK[[x]]$ truncated at precision~$n$.

If~$P$ is a polynomial in $\KK[x,y]$, then its degree with respect to $x$ (resp. $y$) is denoted $\deg_xP$ (resp. $\deg_yP$). We take the convention that $\deg 0=-\infty$. The \emph{bidegree} of $P$ is the pair~$\bideg P=(\deg_xP,\deg_yP)$. The notation $\deg$ without any subscript is used for univariate polynomials. Inequalities between bidegrees are component-wise. The set of polynomials in $\KK[x,y]$ of bidegree less than $(n,m)$ is denoted by $\KK[x,y]_{n,m}$, and similarly for more variables.

The \emph{valuation} of a polynomial~$F\in\KK[x]$ or a power series~$F\in\KK[[x]]$ is its smallest exponent with nonzero coefficient. It is denoted $\val F$, with the convention~$\val 0=\infty$.

The \emph{reciprocal} of a polynomial~$P\in\KK[x]$ is the polynomial $\rec(P)=x^{\deg P}P(1/x)$.
If $P=c(x-\alpha_1)\dotsm(x-\alpha_d)$ with $c\neq0$ and $\alpha_i\in\overline{\KK}$ for all $i$, the notation $\Newton(P)$ stands for the generating series of the \emph{Newton sums} of $P$:
\[\Newton(P)=\sum_{n\ge 0}{(\alpha_1^n+\alpha_2^n+\dots+\alpha_d^n)x^n}.\]

A polynomial is called \emph{square-free} when its gcd with its derivative is trivial.
A \emph{square-free decomposition} of a nonzero polynomial $Q\in\AA[y]$, where $\AA=\KK$ or $\KK[x]$, is a factorization~$Q=Q_1^1\dotsm Q_m^m$, with $Q_i\in\AA[y]$ square-free, the $Q_i$'s pairwise coprime and~$\deg_y(Q_m)>0$. The corresponding \emph{square-free part} of~$Q$ is the polynomial~$Q^\star=Q_1\dotsm Q_m$. If $Q$ is square-free then $Q = Q^\star$.

The coefficient of~$x^n$ in a power series $A\in\KK[[x]]$ is denoted $[x^n]A$. 
If $A=\sum_{i=0}^{\infty}{a_ix^i}$, then $A\bmod x^n$ denotes the polynomial $\sum_{i=0}^{n-1}{a_ix^i}$.
The exponential series $\sum_n x^n/n!$ is denoted $\exp(x)$.
The \emph{Hadamard product} of two power series~$A$ and~$B$ is the power series~$A\odot B$ such that $[x^n]A\odot B=[x^n]A\cdot[x^n]B$ for all $n$. 

If $F(x,y) = \sum_{i,j\ge 0}{f_{i,j}x^iy^j}$ is a bivariate power series in $\KK[[x,y]]$, 
the \emph{diagonal} of $F$, denoted $\Diag F$ is the univariate power series in~$\KK[[t]]$ defined by
$\Diag F(t) = \sum_{n\ge 0}{f_{n, n}t^n}.$

\subsection{Complexity Estimates} \label{sec:complexity} 
We recall classical complexity notation and facts for later use. 
Let $\KK$ be again a field of characteristic zero.
Unless otherwise specified, we estimate the cost of our algorithms by counting
arithmetic operations in $\KK$ (denoted ``ops.'') at unit cost. 
The soft-O notation $\softO( \cdot)$ indicates that polylogarithmic factors
are omitted in the complexity estimates (see~\cite[Def. 25.8]{GaGe03} for a precise definition). 
The \emph{arithmetic size} of an element of~$\KK$ is~1. That of a univariate polynomial is its degree plus~1 (ie, we are considering \emph{dense} representations). That of tuples of polynomials is the sum or their sizes, and this defines the size for rational functions and multivariate polynomials. 
We say that an algorithm has quasi-linear complexity if its complexity is $\softO(d)$, where $d$ is the maximal arithmetic size of the input and of the output. In that case, the algorithm is said to be \emph{quasi-optimal}. 

\smallskip\noindent {\bf Univariate operations.} Throughout this article we will use the fact that most operations on polynomials, rational functions and power series in one variable can be performed in quasi-linear time.
Standard references for these questions are the books~\cite{GaGe03} and~\cite{BuClSh97}, as well as~\cite{Schonhage1982}.
The needed results are summarized in Fact~\ref{fact:complexity} below.
\begin{fact}\label{fact:complexity}
The following operations can be performed in $\softO(n)$ ops. in $\KK$:
\begin{compactenum}
\item[(1)] addition, product and differentiation of elements in $\KK[x]_n$, $\KK(x)_n$ and $\KK[[x]]_n$; integration in $\KK[x]_n$ and $\KK[[x]]_n$;
\item[(2)] {extended gcd,}
 square-free decomposition and resultant in $\KK[x]_n$;
\item[(3)] multipoint evaluation in $\KK[x]_n$, $\KK(x)_n$ at $O(n)$ points in $\KK$; interpolation in $\KK[x]_n$ and $\KK(x)_n$
from $n$ (resp. $2n-1$) values at pairwise distinct points in~$\KK$;
\item[(4)] inverse, logarithm, exponential in  $\KK[[x]]_n$ (when defined);
\item[(5)] conversions between $P\in\KK[x]_n$ and $\Newton(P)\bmod x^n \in \KK[x]_n$.
\end{compactenum}
\end{fact}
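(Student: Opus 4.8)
The plan is to treat the statement as a catalogue of classical fast-algorithm results and to reduce everything to a single primitive: multiplication in $\KK[x]_n$, which costs $\softO(n)$ arithmetic operations (FFT-based, e.g.\ Sch\"onhage--Strassen or evaluation/interpolation at suitable roots of unity; see~\cite{GaGe03,BuClSh97,Schonhage1982}). Granting this, item~(1) is immediate: addition, differentiation and integration are done coefficientwise in $O(n)$; the product in $\KK[x]_n$ is exactly the quoted bound, the product in $\KK[[x]]_n$ is the same computation truncated at~$x^n$, and the product of two elements of $\KK(x)_n$ amounts to two numerator/denominator multiplications, hence is still $\softO(n)$.

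For item~(2), I would invoke the fast (``half-gcd'') Euclidean scheme for the extended gcd, read the resultant off the same fast remainder sequence by tracking leading coefficients, and use Yun's algorithm for the square-free decomposition, which performs only a bounded number of gcd's, exact divisions and differentiations; each ingredient is $\softO(n)$ (see~\cite{GaGe03}). For item~(3), fast multipoint evaluation and Lagrange interpolation rest on the subproduct tree built over the $O(n)$ sample points, giving $\softO(n)$; for rational functions one evaluates numerator and denominator separately and recovers the fraction from its $2n-1$ values by Cauchy interpolation, i.e.\ one polynomial interpolation followed by a truncated run of the extended Euclidean algorithm between the interpolant and the node polynomial, stopped once the remainder degree drops below~$n$ --- again $\softO(n)$.

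Item~(4) is handled by Newton iteration in $\KK[[x]]$: the inverse of $A$ with $A(0)\neq 0$ via the doubling step $B\leftarrow B(2-AB)$, the logarithm of $A$ with $A(0)=1$ as $\int(A'/A)\,dx$ (one inversion, one product, one integration), and the exponential of $A$ with $A(0)=0$ via the Newton scheme $B\leftarrow B\,(1-\log B+A)$. In each case the valid precision doubles at every step, so the total cost is that of the last step up to a constant factor, namely $\softO(n)$.

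Finally, item~(5) is the only place that needs an identity rather than a citation. Writing $P=c(x-\alpha_1)\dotsm(x-\alpha_d)$ with $c\neq 0$, so that $\rec(P)=c\,(1-\alpha_1 x)\dotsm(1-\alpha_d x)$ satisfies $\rec(P)(0)=c\neq 0$, the geometric series $1/(1-\alpha_i x)=\sum_{n\ge0}(\alpha_i x)^n$ give, as power series,
\[\Newton(P)=\sum_{i=1}^{d}\frac{1}{1-\alpha_i x}=d-x\,\frac{\rec(P)'}{\rec(P)}.\]
Hence $\Newton(P)\bmod x^n$ is one power-series inversion of $\rec(P)$ away from $P$; conversely, from $\Newton(P)\bmod x^n$ one reads off $d$ as the constant term, forms $(\log\rec(P))'=-(\Newton(P)-d)/x\bmod x^{n-1}$, integrates (with zero constant of integration), exponentiates to obtain $\rec(P)/c\bmod x^n$, and applies $\rec$ once more to recover the monic polynomial $P/c$ --- all steps already covered by items~(1) and~(4), hence $\softO(n)$. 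The main, and rather minor, obstacle is to pin down this last identity precisely, in particular the interplay between $P$, its reciprocal and the leading coefficient, and the fact that $\rec(P)(0)\neq0$ makes the inversion, logarithm and exponential well defined; everything else is a direct appeal to the cited literature.
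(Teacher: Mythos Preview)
Your proposal is correct. The paper does not actually prove Fact~\ref{fact:complexity}: it presents the five items as known results and refers globally to the textbooks~\cite{GaGe03,BuClSh97} and~\cite{Schonhage1982}, so there is no argument in the paper to compare against. Your write-up is therefore strictly more explicit than what the paper provides. The one place where a comparison is possible is item~(5): the paper records the companion identity $\Newton(P)=\rec(P')/\rec(P)$ and its inverse $\rec(P)=\exp\bigl(\int(d-\Newton(P))/y\,\mathrm{d}y\bigr)$ in Eq.~\eqref{eq:NewtonSums}, whereas you use the logarithmic-derivative form $\Newton(P)=d-x\,\rec(P)'/\rec(P)$; the two are equivalent (one checks $\rec(P')=d\cdot\rec(P)-x\cdot\rec(P)'$), and both yield the same $\softO(n)$ conversion procedures.
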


\smallskip\noindent {\bf Multivariate operations.} 
Basic operations on polynomials, rational functions and power series in several variables are hard questions from the algorithmic point of view. For instance, no general quasi-optimal algorithm is currently known for computing resultants of bivariate polynomials, even though in several important cases such algorithms are available~\cite{BoFlSaSc06}.
Multiplication is the most basic non-trivial operation in this setting. The following result can be proved using Kronecker's substitution; it is quasi-optimal for a fixed number of variables $m=O(1)$. 
For polynomials with more complicated monomial supports, or when the number of variables grows, more sophisticated techniques apply~\cite{CaKaYa89,Pan94,LeSc03,HoSc13}. 
\begin{fact}\label{fact:multiprod}
For fixed~$m$, polynomials in $\KK[x_1, \dots, x_m]_{d_1, \dots, d_m}$ 
and power series in $\KK[[x_1, \dots, x_m]]_{d_1, \dots, d_m}$
can be multiplied using \sloppy 
$\softO(d_1 \dotsm d_m)$ ops.
\end{fact}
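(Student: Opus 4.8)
The plan is to reduce multivariate multiplication to the univariate case via Kronecker's substitution and then invoke Fact~\ref{fact:complexity}(1); I treat the polynomial case first and deduce the power series case from it.

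Let $P,Q\in\KK[x_1,\dots,x_m]_{d_1,\dots,d_m}$. Every monomial occurring in $PQ$ is of the form $x_1^{e_1}\dotsm x_m^{e_m}$ with $0\le e_i\le 2d_i-2$ for each $i$. Set $\delta_i=2d_i-1$, $D_1=1$ and $D_i=\delta_1\dotsm\delta_{i-1}$ for $i>1$, and consider the $\KK$-algebra morphism $\phi\colon\KK[x_1,\dots,x_m]\to\KK[X]$ determined by $\phi(x_i)=X^{D_i}$. The key observation is that $(e_1,\dots,e_m)\mapsto e_1D_1+\dots+e_mD_m$ is exactly mixed-radix encoding, hence a bijection from $\{0,\dots,\delta_1-1\}\times\dots\times\{0,\dots,\delta_m-1\}$ onto $\{0,\dots,\delta_1\dotsm\delta_m-1\}$. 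In particular $\phi$ is injective on the monomials that may occur in $PQ$, so the coefficients of $PQ$ can be recovered unambiguously from those of $\phi(PQ)$.

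The algorithm is then: (i) form $\phi(P)$ and $\phi(Q)$, which just re-indexes the dense coefficient arrays and costs $\bigO(d_1\dotsm d_m)$; (ii) compute the univariate product $\phi(P)\phi(Q)$, a polynomial of degree less than $\delta_1\dotsm\delta_m$, using $\softO(\delta_1\dotsm\delta_m)$ ops.\ by Fact~\ref{fact:complexity}(1); (iii) since $\phi$ is a ring morphism, $\phi(P)\phi(Q)=\phi(PQ)$, so the coefficients of $PQ$ are obtained by inverting the mixed-radix encoding, again in $\bigO(d_1\dotsm d_m)$ ops. As $\delta_1\dotsm\delta_m=\prod_i(2d_i-1)<2^m\,d_1\dotsm d_m$ and $m$ is fixed, all three steps run in $\softO(d_1\dotsm d_m)$. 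For the power series case, note that if a monomial of $P$ is divisible by some $x_i^{d_i}$ then so is its product with any monomial of $Q$; hence, writing $I=(x_1^{d_1},\dots,x_m^{d_m})$, one has $PQ\equiv(P\bmod I)(Q\bmod I)\pmod I$. Thus it suffices to multiply the two elements $P\bmod I,\,Q\bmod I$ of $\KK[x_1,\dots,x_m]_{d_1,\dots,d_m}$ by the polynomial algorithm and truncate, still in $\softO(d_1\dotsm d_m)$.

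There is no genuine obstacle here; the only points requiring care are that the Kronecker box must be chosen large enough to contain the support of the \emph{product} (not merely that of the inputs), so that $\phi$ remains injective on the relevant monomials, and that the overhead factor $2^m$ is absorbed into the $\softO$ only because $m$ is held fixed, which is precisely the hypothesis of the statement.
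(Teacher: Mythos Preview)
Your proof is correct and follows precisely the approach the paper indicates: the paper does not spell out a full argument for Fact~\ref{fact:multiprod} but simply remarks that it ``can be proved using Kronecker's substitution; it is quasi-optimal for a fixed number of variables $m=O(1)$,'' which is exactly what you carry out in detail. Your handling of the exponent bound $2d_i-2$, the mixed-radix injectivity, the $2^m$ overhead absorbed by fixed~$m$, and the reduction of the truncated power series case to the polynomial case are all in order.
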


A related operation is multipoint evaluation and interpolation. The simplest case is when the evaluation points form an $m$-dimensional tensor product grid $I_1 \times \cdots \times I_m$, where $I_j$ is a set of cardinal $d_j$
; it extends to subgrids of tensor product grids~\cite{HoSc13}.

\begin{fact}\label{fact:multieval}\cite{Pan94}
For fixed $m$, polynomials in $\KK[x_1, \dots, x_m]_{d_1, \dots, d_m}$ can be evaluated and interpolated from values that they take on $d_1 \cdots d_m$ points that form an $m$-dimensional tensor product grid using 
$\softO(d_1 \dotsm d_m)$ ops.
\end{fact}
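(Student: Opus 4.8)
The plan is to reduce everything to the univariate primitives of Fact~\ref{fact:complexity}(3) by handling one variable at a time, arguing by induction on the number~$m$ of variables; the base case $m=1$ is precisely univariate multipoint evaluation and interpolation, so assume $m\ge 2$. I would establish the evaluation bound first and then obtain interpolation by running the same construction backwards.

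For the evaluation step, given $P\in\KK[x_1,\dots,x_m]_{d_1,\dots,d_m}$ and a grid $I_1\times\dots\times I_m$ with $|I_j|=d_j$, I would group the monomials of $P$ by their $x_1$-degree, writing $P=\sum_{\mathbf{e}}c_{\mathbf{e}}(x_1)\,x_2^{e_2}\dotsm x_m^{e_m}$ with $\mathbf{e}=(e_2,\dots,e_m)$ and each $c_{\mathbf{e}}\in\KK[x_1]_{d_1}$; this is merely a transposition of the coefficient array and costs no arithmetic operation. Then I would (i) evaluate each of the $d_2\dotsm d_m$ univariate polynomials $c_{\mathbf{e}}$ at the $d_1$ points of $I_1$, for a total of $\softO(d_1\dotsm d_m)$ ops.; reading these results off for a fixed $a_1\in I_1$ gives the coefficient array of $P(a_1,x_2,\dots,x_m)\in\KK[x_2,\dots,x_m]_{d_2,\dots,d_m}$; and (ii) recursively evaluate each of those $d_1$ polynomials on $I_2\times\dots\times I_m$, again for $\softO(d_1\dotsm d_m)$ ops. by the induction hypothesis. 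Writing $E(d_1,\dots,d_m)$ for the cost, this yields $E(d_1,\dots,d_m)=d_1\,E(d_2,\dots,d_m)+\softO(d_1\dotsm d_m)$, which unrolls to $\softO(d_1\dotsm d_m)$.

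For interpolation I would reverse these two steps: from the values on the grid, first recover $P(a_1,x_2,\dots,x_m)$ for each $a_1\in I_1$ by recursively interpolating on $I_2\times\dots\times I_m$ (which presupposes, as implicit in the statement, that the coordinates in each $I_j$ with $j\ge 2$ are pairwise distinct), and then recover each $c_{\mathbf{e}}\in\KK[x_1]_{d_1}$ from its $d_1$ values at the points of $I_1$ and reassemble $P$. The cost obeys the same recurrence, hence the same bound.

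I do not expect a genuine obstacle here: the argument is a clean dimension-by-dimension reduction, and the tensor-product structure of the grid is exactly what lets each coordinate be processed by a batch of independent univariate computations (for a general point set this would fail). The only point to watch is that the recursion has depth~$m$, so the $\softO$ conceals a number of logarithmic factors that grows with~$m$; this is harmless since $m$ is held fixed, and it is also why the statement is restricted to fixed~$m$.
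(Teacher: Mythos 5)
Your argument is correct and is exactly the standard coordinate-by-coordinate reduction over the tensor-product grid that underlies the cited reference; the paper itself offers no proof of this Fact, merely the citation to \cite{Pan94}. The only cosmetic remark is that the distinctness requirement you note for the $I_j$ with $j\ge 2$ applies equally to $I_1$ (it is implicit in the paper's phrasing that each $I_j$ is a \emph{set} of cardinal $d_j$), and your accounting of the $m$-fold depth of the recursion correctly explains why the statement is restricted to fixed~$m$.
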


\noindent Again, the complexity in Fact~\ref{fact:multieval} is quasi-optimal for fixed~$m=O(1)$.

A general (although  non-optimal) technique to deal with more involved operations on multivariable algebraic objects (eg, in $\KK[x,y]$)
is to use (multivariate) evaluation and interpolation on polynomials and to perform operations on the evaluated algebraic objects using Facts~\ref{fact:complexity}--\ref{fact:multieval}. To put this strategy in practice, the size of the output needs to be well controlled. We illustrate this philosophy on the example of resultant computation, based on the following easy variation of~\cite[Thm.~6.22]{GaGe03}.
\begin{fact} \label{fact:resultant}
  Let $P(x,y)$ and $Q(x,y)$ be bivariate polynomials of respective bidegrees $(d_x^P, d_y^P)$ and $(d_x^Q, d_y^Q)$.
Then, $$\deg\Resultant_y(P(x,y), Q(x,y)) \le d_x^Pd_y^Q+d_x^Qd_y^P,$$
and this is an equality whenever one of $d_x^Q$ or $d_x^P$ is zero.
\end{fact}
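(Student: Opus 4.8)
The plan is to pass to the Sylvester-matrix description of the resultant and carry out a degree count on the Leibniz expansion of a determinant. Write $P=\sum_{i=0}^{d_y^P}a_i(x)\,y^i$ and $Q=\sum_{i=0}^{d_y^Q}b_i(x)\,y^i$ with $a_i,b_i\in\KK[x]$; since $\bideg P=(d_x^P,d_y^P)$ we have $\deg a_i\le d_x^P$ for every $i$, and similarly $\deg b_i\le d_x^Q$. By definition $\Resultant_y(P,Q)=\det\Syl_y(P,Q)$, where $\Syl_y(P,Q)$ is the square matrix of order $d_y^P+d_y^Q$ over $\KK[x]$ whose first $d_y^Q$ rows are the coordinate vectors of $y^{d_y^Q-1}P,\dots,yP,P$ in the basis $1,y,\dots,y^{d_y^P+d_y^Q-1}$ and whose last $d_y^P$ rows are those of $y^{d_y^P-1}Q,\dots,Q$. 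Hence every entry in the first $d_y^Q$ rows has degree at most $d_x^P$, and every entry in the last $d_y^P$ rows has degree at most $d_x^Q$.

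For the inequality I would simply expand $\det\Syl_y(P,Q)$ by the Leibniz formula: up to sign, each term is a product of one entry from each row, hence a product of $d_y^Q$ polynomials of degree $\le d_x^P$ with $d_y^P$ polynomials of degree $\le d_x^Q$, so of degree at most $d_x^Pd_y^Q+d_x^Qd_y^P$; this bound survives summation, which gives the claim.

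For the equality, by the sign identity $\Resultant_y(P,Q)=(-1)^{d_y^Pd_y^Q}\Resultant_y(Q,P)$ and the symmetry of the bound under $P\leftrightarrow Q$, it suffices to treat $d_x^Q=0$, i.e.\ $Q\in\KK[y]$. Letting $\beta_1,\dots,\beta_{d_y^Q}\in\overline{\KK}$ be its roots with multiplicity and $c=\mathrm{lc}_y(Q)\in\KK^{\times}$, the classical factorisation of the resultant,
\[
\Resultant_y(P,Q)=(-1)^{d_y^Pd_y^Q}\,c^{\,d_y^P}\prod_{j=1}^{d_y^Q}P(x,\beta_j),
\]
reduces the claim to checking that each factor $P(x,\beta_j)$ has degree exactly $d_x^P$ in $x$: its leading coefficient in $x$ is $\bar P(\beta_j)$ with $\bar P(y):=[x^{d_x^P}]P(x,y)$, so summing the degrees gives $\deg\Resultant_y(P,Q)=d_x^Pd_y^Q=d_x^Pd_y^Q+d_x^Qd_y^P$. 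The first two steps are routine; the step that needs care is this last one, since $\bar P(\beta_j)\ne 0$ only when $\beta_j$ is not a root of the fixed nonzero polynomial $\bar P$ — equivalently, when $\bar P$ and $Q$ are coprime. This is exactly why the Fact singles out the case where one of the $x$-degrees vanishes: there the coefficient of $x^{d_x^Pd_y^Q}$ in $\Resultant_y(P,Q)$ is, up to the nonzero constant $c^{\,d_y^P}$, itself the univariate resultant $\Resultant_y(\bar P,Q)$, so one only has to record the (generically satisfied, and in our applications automatic) non-degeneracy condition ensuring its nonvanishing — the fully degenerate sub-case $d_x^P=d_x^Q=0$ making $\Resultant_y(P,Q)$ a nonzero constant of degree $0$ directly.
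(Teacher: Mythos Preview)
The paper does not supply its own proof of this Fact: it is stated as ``an easy variation of~[Thm.~6.22]{GaGe03}'' and then used. Your argument for the inequality via the Sylvester matrix and the Leibniz expansion is exactly the standard one and is correct.

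On the equality clause you have done more than reproduce the intended proof: you have noticed that the clause, read literally, is not quite true. Your implicit counterexample is real. Take $P=1+xy$ (bidegree $(1,1)$) and $Q=y$ (bidegree $(0,1)$); then $\Resultant_y(P,Q)=-1$ has degree~$0$ while $d_x^Pd_y^Q+d_x^Qd_y^P=1$. Your diagnosis is also correct: when $d_x^Q=0$ the coefficient of $x^{d_x^Pd_y^Q}$ in $\Resultant_y(P,Q)$ is, up to a nonzero scalar, $\Resultant_y(\bar P,Q)$ with $\bar P=[x^{d_x^P}]P$, so equality holds precisely when $\bar P$ and $Q$ are coprime. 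This mild non-degeneracy hypothesis is indeed satisfied at the one place the paper invokes the equality (the degree in~$z$ of $R_i=\Resultant_y(A_i-zB_i,Q_i)$ in \S\ref{sec:residues}: there $\bar P=-B_i$, and $B_i$ is built from $U_i$ and $V_i(\cdot,0)=Q_i'$, both coprime to the square-free~$Q_i$). So your proof is correct for the inequality, and for the equality you have identified the precise extra hypothesis the Fact tacitly assumes; there is nothing to fix in your argument, only a caveat on the statement.
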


\begin{lem} \label{algo resultant}
	Let $P$ and $Q$ be polynomials in $\KK[x_1, \dots, x_m, y]_{d_1, \dots, d_m, d}$. 
	Then $R = \Resultant_y(P, Q)$ belongs to $\KK[x_1, \dots, x_m]_{D_1, \dots, D_m}$, where $D_i = 1+2(d-1)(d_i-1)$. Moreover, the coefficients of $R$ can be computed using $\softO(2^m d_1 \cdots d_m d^{m+1})$ ops. in~$\KK$. \end{lem}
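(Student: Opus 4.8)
The plan is to compute $R$ by multivariate evaluation and interpolation, reducing everything to univariate resultants; we may assume $d\ge2$, the case $d=1$ being trivial since then $P$ and $Q$ lie in $\KK[x_1,\dots,x_m]$. For the degree bound, view $P$ and $Q$ as elements of $\KK[x_1,\dots,x_m][y]$ and apply Fact~\ref{fact:resultant} with respect to each variable $x_i$ in turn, the remaining variables $x_j$ acting as parameters:
\[
\deg_{x_i} R \le \deg_{x_i}(P)\,\deg_y(Q) + \deg_{x_i}(Q)\,\deg_y(P) \le 2(d-1)(d_i-1) = D_i-1 .
\]
Thus $R\in\KK[x_1,\dots,x_m]_{D_1,\dots,D_m}$, and $R$ is uniquely determined by its values on any $m$-dimensional tensor product grid $G=I_1\times\cdots\times I_m$ with each $I_i\subseteq\KK$ of cardinality $D_i$; such grids exist since $\KK$, being of characteristic zero, is infinite.

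The algorithm has three steps. Write $P=\sum_{0\le k<d}P_k\,y^k$ and $Q=\sum_{0\le k<d}Q_k\,y^k$ with $P_k,Q_k\in\KK[x_1,\dots,x_m]_{d_1,\dots,d_m}$, hence also (as $d_i\le D_i$ when $d\ge2$) in $\KK[x_1,\dots,x_m]_{D_1,\dots,D_m}$. Step~1: evaluate the $2d$ polynomials $P_k,Q_k$ at the $D_1\cdots D_m$ points of $G$; by Fact~\ref{fact:multieval} each evaluation costs $\softO(D_1\cdots D_m)$ ops., so $\softO(d\,D_1\cdots D_m)$ in total, producing for each $\xi\in G$ the coefficient vectors of $P(\xi,y),Q(\xi,y)\in\KK[y]_d$. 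Step~2: for each $\xi\in G$ compute $R(\xi)$ as a univariate resultant from these vectors, in $\softO(d)$ ops. by Fact~\ref{fact:complexity}(2), hence $\softO(d\,D_1\cdots D_m)$ overall. Step~3: interpolate $R$ from its values on $G$ in $\softO(D_1\cdots D_m)$ ops. by Fact~\ref{fact:multieval}. Finally, $D_i=1+2(d-1)(d_i-1)<2dd_i$ for all $d,d_i\ge1$, so $D_1\cdots D_m<2^m d^m\,d_1\cdots d_m$ and the dominant cost $\softO(d\,D_1\cdots D_m)$ is $\softO(2^m d^{m+1}\,d_1\cdots d_m)$, as claimed.

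The point requiring care is Step~2, specifically the identity $R(\xi)=\Resultant_y\!\bigl(P(\xi,y),Q(\xi,y)\bigr)$. This holds as stated only when the leading coefficients $\operatorname{lc}_y(P)$ and $\operatorname{lc}_y(Q)$ do not vanish at $\xi$; in general one must instead form the Sylvester matrix $\Syl_y(P,Q)$ using the \emph{declared} $y$-degrees $\deg_y(P)$ and $\deg_y(Q)$, substitute $\xi$ into its entries, and take the determinant. Since the determinant is a polynomial in the entries, this operation commutes with the specialization $x_i\mapsto\xi_i$, so it returns $R(\xi)$ unconditionally, and it is still computed in $\softO(d)$ ops. by the usual fast resultant algorithms after the routine reduction to exact degrees. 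Enlarging $G$ to dodge the degeneracy locus is \emph{not} an option here, since it would inflate the factor $2^m$ in the bound. This specialization identity is the only non-routine ingredient; the degree bound and the complexity count are immediate from Facts~\ref{fact:resultant}, \ref{fact:multieval} and~\ref{fact:complexity}, and the whole lemma is an easy extension of~\cite[Thm.~6.22]{GaGe03}.
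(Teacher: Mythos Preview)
Your proof is correct and follows essentially the same evaluation--interpolation strategy as the paper: bound $\deg_{x_i}R$ via Fact~\ref{fact:resultant}, evaluate on a tensor grid of size $D_1\cdots D_m$, compute $D_1\cdots D_m$ univariate resultants in $\KK[y]_d$, interpolate, and finish with $D_i<2dd_i$. The only difference is that you explicitly address the specialization issue in Step~2 (the possibility that $\operatorname{lc}_y(P)$ or $\operatorname{lc}_y(Q)$ vanishes at a grid point, and the remedy of evaluating the Sylvester determinant at the declared degrees), which the paper's proof silently passes over; your discussion of this point is a genuine improvement in rigor, not a deviation in method.
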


\begin{proof}
	The degrees estimates follow from Fact~\ref{fact:resultant}. 
To compute $R$, we use an evaluation-interpolation scheme:  $P$ and $Q$ are evaluated at $D=D_1 \cdots D_m$ points $(x_1, \dots, x_m)$ forming an $m$ dimensional tensor product grid; $D$ univariate resultants in $\KK[y]_d$ are computed; $R$ is recovered by interpolation. By Fact~\ref{fact:multieval}, the evaluation and interpolation steps are performed in $\softO(m D)$ ops. The second one has cost $\softO(d D)$. Using the inequality $D \le 2^m d_1 \cdots d_m d^m $ concludes the proof.
\end{proof}

	We conclude this section by recalling  two complexity results on bivariate polynomials and rational functions; for proofs, see~\cite{Lecerf08} and~\cite{BoChChLi10}.
\begin{fact}\label{fact:sqfree}
	\begin{compactenum}
		\item[(1)] A square-free decomposition of polynomials in\\$\KK[x,y]_{d_x, d_y}$ can be computed using $\softO(d_x^2  d_y)$ ops.
		\item[(2)] If $P,Q\in\KK[x,y]$ are non-zero coprime polynomials such that $\bideg(P)<\bideg(Q)$ and $Q$ is primitive wrt $y$, then a minimal telescoper for $P/Q$ of degree $\bigO(d_xd_y^\star d_y)$ and order at most $d_y^\star$ can be computed using $\softO(d_xd_y^2d_y^{\star3})$ ops, where $(d_x,d_y)=\bideg(Q)$ and $d_y^\star$ is the degree in $y$ of any square-free part of $Q$.
	\end{compactenum}
\end{fact}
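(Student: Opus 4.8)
\smallskip\noindent\emph{Proof idea.} Both parts are imported from the literature (\cite{Lecerf08} for~(1), \cite{BoChChLi10} for~(2)) and rely on rather different machinery; the plan is to recover each along the lines of those references, the algorithmic content being classical while the sharp complexity and degree bounds require careful accounting.

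For~(1) the plan is a bivariate version of Yun's square-free decomposition algorithm, paying attention to the parameter~$x$. First I would divide out the $y$-content of~$Q$ --- a univariate polynomial in $\KK[x]$ of degree~$\le d_x$ whose extraction and square-free decomposition cost $\softO(d_xd_y)$ by Fact~\ref{fact:complexity} and can be treated separately --- so that $Q\in\KK[x][y]$ may be assumed primitive, with square-free decomposition $Q=Q_1^1\dotsm Q_m^m$. By Gauss's lemma the $Q_i$ are primitive, and since $x$-degree is additive on $\KK[x,y]=\KK[y][x]$ one has $\sum_i\deg_xQ_i\le d_x$ and $\sum_i i\deg_yQ_i=d_y$, so the output has size $\bigO(d_xd_y)$. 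In characteristic zero $\gcd(Q,\partial_yQ)=Q_2^1Q_3^2\dotsm Q_m^{m-1}$, so one bivariate gcd and one exact division yield~$Q^\star$, and Yun's iteration then peels off the $Q_j$ one multiplicity at a time through auxiliary polynomials $C_j=Q_jQ_{j+1}\dotsm Q_m$ (each $Q_j$ being a gcd involving $C_j$), with the decisive identity $\sum_j\deg_yC_j=\sum_i i\deg_yQ_i=d_y$, so that the $y$-degrees telescope across the rounds. The bottleneck is thus a chain of bivariate gcds and exact divisions of bidegree $\le(d_x,\delta_j)$ with $\sum_j\delta_j=\bigO(d_y)$. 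No algorithm quasi-optimal in the combined degree is available for such bivariate gcds (cf.\ the remark preceding Fact~\ref{fact:resultant} on bivariate resultants); I would perform them by evaluation and interpolation in~$x$ --- or, following Lecerf, by computing one square-free decomposition at a single lucky point and Hensel-lifting it $x$-adically to precision~$\bigO(d_x)$ --- and import the resulting count $\softO(d_x^2d_y)$, which exceeds the output size by a factor~$d_x$, from~\cite{Lecerf08}. The hard part is the bookkeeping: matching factors of equal multiplicity between specialisations and certifying that a specialisation is lucky, the unlucky ones being the zeros of a nonzero polynomial of degree $\bigO(d_xd_y)$.

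For~(2) the plan is Hermite reduction with respect to~$y$ combined with structured linear algebra over~$\KK(x)$, as in~\cite{BoChChLi10}. Put $d=Q^\star$, with $\deg_yd=d_y^\star$ and $\deg_xd\le d_x$. Hermite reduction gives $P/Q\equiv c_0/d\pmod{\partial_y\KK(x,y)}$ with $\deg_yc_0<d_y^\star$, and one obtains $c_{k+1}/d$ from $c_k/d$ by a single differentiation in~$x$ --- which raises the denominator to~$d^2$ but leaves its square-free part equal to~$d$ --- followed by one further Hermite reduction, so that $\partial_x^k(P/Q)\equiv c_k/d\pmod{\partial_y\KK(x,y)}$ with $\deg_yc_k<d_y^\star$. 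Since a reduced fraction over a square-free denominator has no rational antiderivative unless it vanishes, an operator $L=\sum_{k=0}^{\rho}\eta_k(x)\partial_x^k$ is a telescoper of $P/Q$ precisely when $\sum_k\eta_kc_k=0$ in $\KK(x)[y]$, i.e.\ when $c_0,\dots,c_\rho$ are $\KK(x)$-linearly dependent in the $d_y^\star$-dimensional $\KK(x)$-vector space of polynomials in~$y$ of degree~$<d_y^\star$; this forces the order $\rho\le d_y^\star$. Computing the minimal such dependency is a minimal (column-reduced) left-kernel problem --- equivalently a Hermite--Pad\'e problem --- of dimension $\bigO(d_y^\star)$ over~$\KK(x)$; tracking $x$-degrees through the reductions bounds its entries, hence the coefficients of the minimal~$L$, by $\bigO(d_xd_y^\star d_y)$, the announced degree bound, and a fast approximant-basis algorithm solves it in quasi-optimal time. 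Assembling the $\bigO(d_y^\star)$ Hermite reductions (each on a fraction whose numerator has $x$-degree $\bigO(d_xd_y^\star)$ and $y$-degree $\bigO(d_y^\star)$, via fast univariate arithmetic, Fact~\ref{fact:complexity}) with the kernel computation, the total cost fits within $\softO(d_xd_y^2d_y^{\star3})$. The principal difficulty is the sharp degree estimate on the telescoper --- needed to size the final linear problem --- together with performing the kernel computation in quasi-optimal time rather than by naive elimination over~$\KK(x)$.
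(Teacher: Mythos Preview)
The paper does not prove this statement: it is stated as a \emph{Fact} with the sentence ``for proofs, see~\cite{Lecerf08} and~\cite{BoChChLi10}'' immediately preceding it. Your proposal correctly identifies these sources and sketches their arguments along the intended lines --- Yun-type square-free decomposition with Hensel lifting in the parameter~$x$ for~(1), and iterated Hermite reduction in~$y$ followed by a linear-dependency search over~$\KK(x)$ for~(2) --- so it is consistent with, and indeed more detailed than, what the paper itself offers.
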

Recall that a minimal telescoper for $P/Q$ is a differential operator $L\in\KK[x]\langle \partial_x\rangle$ of minimal order such that $L\cdot(P/Q)=\partial_y(g)$ with $g\in\KK[x,y]$.

\section{Polynomials for Residues}\label{sec:residues}
\subsection{Algorithm}
We are interested in a polynomial that vanishes at some or all of the residues of a given rational function. It is a classical result in symbolic integration that in the case of simple poles, there is a resultant formula for such a polynomial, first introduced by Rothstein~\cite{Rothstein1976} and Trager~\cite{Trager1976}. This was later generalized by Bronstein~\cite{Bronstein92} to accommodate multiple poles as well. However, as mentioned by Bronstein, the complexity of his method grows exponentially with the multiplicity of the poles. Instead, we develop in this section an algorithm with polynomial complexity.

Let $f = P/Q$ be a nonzero element in $\KK(y)$, where $P, Q$ are two coprime
polynomials in $\KK[y]$. 
Let also $\hat{Q}$ be a divisor of $Q$ such that $\hat{Q}$ and $Q/\hat{Q}$ are coprime. In our context, $\hat{Q}$ represents the subset of the roots of $Q$ at which we want to compute an annihilating polynomial of the residues.
Let $Q_1Q_2^2\cdots Q_m^m$ be a square-free decomposition of~$\hat{Q}$.
For $i\in\{1,\dots,m\}$, if $\alpha$ is a root of~$Q_i$ in an algebraic extension of~$\KK$, then it is simple and the residue of~$f$ at~$\alpha$ is the coefficient of~$t^{-1}$ in the Laurent expansion of $f(\alpha+t)$ at~$t=0$. Consider the polynomial $V_i(y,t)= (Q_i(y+t)-Q_i(y))/t$. Since $\alpha$ is a simple root of $Q_i$, $V_i$ satisfies $V_i(\alpha,t) = Q_i(\alpha+t)/t$ and $V_i(\alpha,0)=Q_i'(\alpha)\neq 0$. Therefore, the rational function $g$ defined by $g(y,t) = f(y+t)Q_i^i(y+t)/V_i^i(y,t)$ satisfies $g(\alpha,t)=f(\alpha+t)\cdot t^i$ and has the advantage of being regular at $t=0$. The residue of $f$ at $\alpha$ may hence be computed as the evaluation at $y=\alpha$ of $[t^{i-1}]g(y,t)$. If this coefficient is denoted~$S_{i-1}(y)=A_i(y)/B_i(y)$, with polynomials $A_i$ and $B_i$,  the residue at~$\alpha$ is thus a root of $\Resultant_y(A_i-zB_i,Q_i)$. When the multiplicity of the pole $m=1$, this is exactly the Rothstein-Trager resultant. 
This computation leads to Algorithm~\ref{algo:Bronstein}, which avoids the exponential blowup of the complexity that would follow from a symbolic precomputation of the Bronstein resultants.

\begin{algo}	
	Algorithm \textbf{AlgebraicResidues}$(P/Q, \hat{Q})$
	
	\begin{algoenv}{Three polynomials $P$, $Q$ and $\hat{Q}$ a divisor of~$Q$ in $\KK[y]$ such that $\hat{Q}$ and $Q/\hat{Q}$ are coprime ($\hat{Q}$ can be $Q$)}{A polynomial in $\KK[z]$ canceling the residues of $P/Q$ at the roots of~$\hat{Q}$}
		\State Compute $Q_1Q_2^2\dotsm Q_m^m$ a square-free decomposition of~$\hat{Q}$;
		\For{$i\gets1$ to $m$}
		\If{$\deg_yQ_i=0$} \ $R_i\gets1$
		\Else
		\State $U_i(y)\gets Q(y)/Q_i^i(y)$;
		\State $V_i(y,t)\gets (Q_i(y+t)-Q_i(y))/t$;
		\State Expand $\frac{P(y+t)}{U_i(y+t)V_i^i(y,t)}=S_0+\dotsb+S_{i-1}t^{i-1}+O(t^{i})$;
    \State Write $S_{i-1}$ as $A_i(y)/B_i(y)$ with $A_i$ and $B_i$ coprime polynomials;
		\State $R_i(z)\gets \Resultant_y(A_i-zB_i,Q_i)$;
		\EndIf
		\EndFor
		\State\Return 		
		$R_1R_2\dotsm R_m$ 
	\end{algoenv}
	\caption{Polynomial canceling the residues}\label{algo:Bronstein}
\end{algo}

\begin{example}\label{ex:Bronstein}
Let $d\ge 0$ be an integer, and let $G_d(x,y)\in\QQ(x)[y]$ be the rational function $y^d/(y-y^2-x)^{d+1}$. 
The poles have order~$d+1$. In this example, the algorithm can be performed by hand for arbitrary~$d$:
a square-free decomposition has~$m=d+1$ and~$Q_{m}=y-y^2-x$, the other $Q_i$'s being~1. Then~$V_m=1-2y-t$ and the next step is to expand
\[\frac{(y+t)^d}{(1-2y-t)^{d+1}}=\frac{(y+t)^d}{(1-2y)^{d+1}\left(1-\frac{t}{1-2y}\right)^{d+1}}.\]
Expanding the binomial series gives the coefficient of~$t^d$ as~$\frac{A_m}{B_m}$, with
\[A_m=\sum_{k=0}^d{\binom{d}{k}\binom{d+k}{k}y^k(1-2y)^{d-k}},\quad B_m=(1-2y)^{2d+1}.\]
The residues are then cancelled by~$R_m = \Resultant_y(A_m-zB_m,Q_{m})$, namely by
\begin{equation}\label{eq:pol-diag-ex}
R_m = (1-4x)^{2d+1} z^2 - \left( \sum_{k=0}^{\lfloor d/2 \rfloor} \binom{d}{2k}\binom{2k}{k} x^k \right)^{\!\!2}.
\end{equation}
(Equality~\eqref{eq:pol-diag-ex} is a consequence of the identity~\footnote{Both sides of the identity satisfy
	$(2y-1)^2 (d+1) u_d-(2d+3)u_{d+1}+(d+2)u_{d+2}=0, u_0=u_1=1.$} 
$A_m = \sum_{k=0}^{\lfloor d/2 \rfloor} \binom{d}{2k}\binom{2k}{k} (y-y^2)^k$,
which implies
$A_m \bmod Q_m = 
\sum_{k=0}^{\lfloor d/2 \rfloor} \binom{d}{2k}\binom{2k}{k} x^k$, while 
$B_m \bmod Q_m = (1-4x)^d(1-2y)$.) 
\end{example}
In our applications, as in the previous example, the polynomials $P$ and~$Q$ have coefficients that are themselves polynomials in another variable~$x$. The rest of this section is devoted to the proof of the following.

\begin{thm}\label{th:Bronstein}
	Let $P(x,y)/Q(x,y)\in\KK(x,y)_{d_x+1,d_y+1}$. Let $\hat{Q}$ be a divisor of~$Q$, $\hat{Q}^\star$ be a square-free part of it wrt $y$, and denote by $m$ the number of factors in the square-free decompositions of $\hat{Q}$. Let~$(d_x^\star,d_y^\star)$ be  bounds on the bidegree of ${Q}^\star$. Then the polynomial computed by Algorithm~\ref{algo:Bronstein} annihilates the residues of~$P/Q$ at the roots of $\hat{Q}$, has degree in~$z$ bounded by~$\deg_y\hat{Q}^\star$ and degree in~$x$ bounded by 
\[2d_x^\star(d_y+1)+2(d_y^\star-1)d_x-2d_x^\star d_y^\star.\]
It can be computed in $\bigO(m^2d_x^\star d_y^\star(m^2 + {d_y^\star}^2))$ operations in~$\KK$.
\end{thm}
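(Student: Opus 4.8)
The plan is to follow Algorithm~\ref{algo:Bronstein} line by line, proving correctness, then the two degree bounds, then the complexity.

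\emph{Correctness} is essentially the content of the discussion preceding Algorithm~\ref{algo:Bronstein}; the only points that need to be nailed down are the absence of spurious poles. Fix $i$ with $\deg_y Q_i>0$ (if $\deg_y Q_i=0$ the algorithm returns $R_i=1$, which contributes nothing) and let $\alpha$ be a root of $Q_i$ in an algebraic closure of $\KK(x)$. Since $\hat Q$ and $Q/\hat Q$ are coprime and the $Q_j$ are pairwise coprime, $\alpha$ is a pole of $P/Q$ of multiplicity exactly $i$, $U_i(\alpha)\neq0$, and $\alpha$ is a simple root of $Q_i$ so $V_i(\alpha,0)=Q_i'(\alpha)\neq0$; hence $g(y,t)=P(y+t)/(U_i(y+t)V_i(y,t)^i)$ is a power series in $t$ whose coefficients are regular at $y=\alpha$, and $S_{i-1}(\alpha)$ equals the residue of $P/Q$ at $\alpha$ (the usual formula for a residue at a pole of order $i$, after the substitution $y=\alpha+t$ and the identity $(y-\alpha)/Q_i(y)=1/V_i(\alpha,t)$). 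In particular $B_i(\alpha)\neq0$, so $A_i-zB_i$ and $Q_i$ share the root $y=\alpha$ precisely when $z=A_i(\alpha)/B_i(\alpha)=S_{i-1}(\alpha)$, whence $R_i=\Resultant_y(A_i-zB_i,Q_i)$ vanishes at every residue attached to $Q_i$, and $R_1\cdots R_m$ annihilates all residues of $P/Q$ at roots of $\hat Q$. The \emph{degree in $z$} is immediate: $A_i-zB_i$ is $\KK$-linear in $z$, so its Sylvester matrix with $Q_i$ has $\deg_y Q_i$ rows carrying $z$, hence $\deg_z R_i\le\deg_y Q_i$ and $\deg_z(R_1\cdots R_m)\le\sum_i\deg_y Q_i=\deg_y\hat Q^\star$.

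For the \emph{degree in $x$} the core is to control the bidegree of the numerator $A_i$ and denominator $B_i$ of $S_{i-1}$. Expanding $1/U_i(y+t)$ and $1/V_i(y,t)^i$ as power series in $t$ shows that the only denominators occurring in the coefficients of $g$ are powers of $U_i(y)=[t^0]U_i(y+t)$ and of $Q_i'(y)=V_i(y,0)$; tracking how the exponents accumulate in the $t^{i-1}$-truncation gives that $B_i$ divides $U_i(y)^{\,i}\,Q_i'(y)^{\,2i-1}$, which bounds $\bideg B_i$. Since $A_i$ divides the polynomial $S_{i-1}\,U_i^{\,i}Q_i'^{\,2i-1}=[t^{i-1}]\big(P(y+t)\cdot U_i(y)^i/U_i(y+t)\cdot Q_i'(y)^{2i-1}/V_i(y,t)^i\big)$, bounding the bidegree of each truncated factor bounds $\bideg A_i$. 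Here it is essential to use $\deg_x U_i=\deg_x Q-i\deg_x Q_i$ and $\deg_y U_i=\deg_y Q-i\deg_y Q_i$ \emph{exactly}, not the crude $\deg U_i\le\deg Q$: it is the cancellation of $i\deg Q_i$ against the multiplicity that makes the final estimate tight. One obtains $\deg_x(A_i-zB_i)\le d_x+(i-1)\big(\deg_x Q-(i-1)\deg_x Q_i\big)$ and $\deg_y(A_i-zB_i)\le d_y+(i-1)\big(\deg_y Q-(i-1)\deg_y Q_i-2\big)$.

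Applying Fact~\ref{fact:resultant} to $R_i=\Resultant_y(A_i-zB_i,Q_i)$ (viewing $z$ as a parameter, which does not affect the $x$-degree count) and summing over $i$, the cross terms can be reorganized, using $\sum_i\deg_y Q_i=\deg_y\hat Q^\star$, $\sum_i i\deg_y Q_i=\deg_y\hat Q$ and their $x$-analogues, into an expression of the form $\deg_x P\cdot\deg_y\hat Q^\star+\deg_y P\cdot\deg_x\hat Q^\star+\deg_x Q\cdot(\deg_y\hat Q-\deg_y\hat Q^\star)+\deg_y Q\cdot(\deg_x\hat Q-\deg_x\hat Q^\star)-2\sum_i(i-1)^2\deg_x Q_i\deg_y Q_i-2(\deg_x\hat Q-\deg_x\hat Q^\star)$. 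The main obstacle is this final simplification: the positive cross term $\deg_x Q\cdot(\deg_y\hat Q-\deg_y\hat Q^\star)$ cannot be bounded on its own (it exceeds the target when $\hat Q$ carries large multiplicities) and must first be absorbed against the negative contribution $-2\sum_i(i-1)^2\deg_x Q_i\deg_y Q_i$; the resulting quantity is then monotone increasing in each of $\deg_x Q,\deg_y Q,\deg_x\hat Q^\star,\deg_y\hat Q^\star$ under the constraints $\deg_y\hat Q^\star\le\deg_y Q^\star\le d_y^\star$, $\deg_x\hat Q^\star\le\deg_x Q^\star\le d_x^\star$, $\deg_x Q\le d_x$, $\deg_y Q\le d_y$, so replacing each by its stated bound yields $2d_x^\star(d_y+1)+2(d_y^\star-1)d_x-2d_x^\star d_y^\star$. (Sanity checks: for squarefree $Q$ this reduces to the bound $2d_xd_y$ for the Rothstein--Trager resultant, and for $Q=P_0^N$ it matches the bound obtained by the direct computation of Example~\ref{ex:Bronstein}.)

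For the \emph{complexity}, one accounts for the cost of each iteration of the loop using the bidegree bounds above together with Facts~\ref{fact:complexity}--\ref{fact:multieval} and Lemma~\ref{algo resultant}: the Taylor shifts $P(y+t)$, $Q_i(y+t)$ and the $i$-th power of $V_i$ truncated at order $i$; the division producing $S_0,\dots,S_{i-1}$, carried out modulo $Q_i$ in $y$ (so $y$-degrees stay below $\deg_y Q_i$) and/or by evaluation--interpolation in $x$, so as to keep the sizes of intermediate expressions under control and avoid forming the large polynomials $A_i,B_i$ explicitly; and the bivariate-parameter resultant $\Resultant_y(A_i-zB_i,Q_i)$, computed via Lemma~\ref{algo resultant} with the two parameter variables $x$ and $z$. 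The power-series/resultant step dominates; bounding the $x$- and $z$-degrees of the reduced objects and the $y$-degree by $\deg_y Q_i$, then summing the per-$i$ costs and using $\sum_i\deg_y Q_i\le d_y^\star$, $\sum_i\deg_x Q_i\le d_x^\star$ and $i\le m$, gives the announced $\bigO(m^2 d_x^\star d_y^\star(m^2+{d_y^\star}^2))$.
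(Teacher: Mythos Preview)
Your correctness and degree-in-$z$ arguments are fine, and the overall architecture (bound $\bideg(A_i-zB_i)$, apply Fact~\ref{fact:resultant}, sum over~$i$) is the right one. The gap is in the degree-in-$x$ bound: the divisibility $B_i\mid U_i(y)^{\,i}Q_i'(y)^{\,2i-1}$ is true but too coarse, and the resulting estimate does \emph{not} reach the stated bound. Concretely, take $\hat Q=Q=Q_2^2Q_3^3$ with $\bideg Q_j=(a_j,b_j)$. Your summed bound has bilinear part $6a_2b_2+12a_2b_3+12a_3b_2+10a_3b_3$, whereas the theorem's bound gives $6a_2b_2+8a_2b_3+8a_3b_2+10a_3b_3$; the difference $4(a_2b_3+a_3b_2)$ is exactly the mixed term that your ``absorption'' against $-2\sum_i(i-1)^2\deg_xQ_i\deg_yQ_i$ cannot cancel, because that negative sum contains only diagonal products $a_ib_i$. (A side issue: the equality $\deg_xU_i=\deg_xQ-i\deg_xQ_i$ you invoke ``exactly'' need not hold for bivariate polynomials, since $\deg_x$ is only subadditive under products.)

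What is missing is the observation that in the $t$-expansion of $1/U_i(y+t)$, each extra power of~$t$ contributes only one factor of the \emph{square-free} part $U_i^\star(y)$ to the denominator, not one factor of $U_i(y)$. Formally, $1/U_i(y+t)\in U_i(y)^{-1}\mathcal E_{\bideg Q^\star-\bideg Q_i-(0,1)}(U_i^\star(y))$ in the sense of Lemma~\ref{growth} and Corollary~\ref{coro:invpol}; combined with the analogous statement for $1/V_i^i$, this gives $B_i\mid U_i\cdot (U_i^\star)^{i-1}\cdot (Q_i')^{2i-1}$ and hence $\bideg(A_i-zB_i)\le\max(\bideg P,\bideg Q-(0,1))+(i-1)(\bideg Q^\star-(0,2))$. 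The appearance of $\bideg Q^\star$ rather than $\bideg Q$ in the coefficient of $(i-1)$ is precisely what makes the sum collapse to $2d_x^\star(d_y{+}1)+2(d_y^\star{-}1)d_x-2d_x^\star d_y^\star$. Once you have this sharper per-$i$ bound, the summation and monotonicity step go through as in the paper; your complexity sketch then also becomes workable, since the sizes of $A_i,B_i$ are now controlled by $d_x^\star,d_y^\star$ rather than $d_x,d_y$.
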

Note that rewriting the bound under the equivalent form
\[2d_xd_y-2(d_x-d_x^\star)(d_y-d_y^\star+1)\]
shows that the degree in~$x$ is bounded by~$2d_xd_y$, independently of the multiplicities.
The complexity is also bounded independently of the multiplicities by
$\bigO(d_x^\star d_y^\star d_y^4)$.

\subsection{Bounds}
By Fact~\ref{fact:resultant}, the resultant~$R_i$ has degree in~$z$ exactly $\deg Q_i$ so that the degree in~$z$ of the result is bounded by $\deg_yQ_1+\dots+\deg_yQ_m=d_y^\star$.

The degree in~$x$ is the sum of the degrees in~$x$ of all the~$R_i$'s. In order to derive a bound on the degree of~$R_i$ using Fact~\ref{fact:resultant}, we first consider the degrees in~$x$ and~$y$ of~$A_i$ and~$B_i$. The important point is that these degrees do not depend so much on~$Q$ as on its square-free part. In order to quantify this precisely, we first focus on power series expansion of a special type about which we state a few useful lemmas.

For a polynomial~$Q\in\KK[x]$ and a real number~$\alpha$, we denote by $\mathcal{E}_\alpha(Q)$ the subset of~$\KK(x)[[t]]$ formed of power series that can be written
\[c_0+c_1\frac{t}{Q}+\dotsb+c_n\frac{t^n}{Q^n}+\dotsb,\]
with $c_n\in\KK[x]$ and $\deg c_n\le n\alpha$, for all $n$ (recall that $\deg0=-\infty$, which makes it convenient to allow negative~$\alpha$). This notation extends to the case when $x$ is a tuple of variables, with $\alpha$ replaced by a tuple of real numbers. The main properties of $\mathcal{E}_\alpha(Q)$ are summarized as follows. 
\begin{lem} \label{growth}
	Let $Q, R\in\KK[x]$, $\alpha, \beta\in\mathbb{R}$ and $f\in\KK[[t]]$.
	\begin{compactenum}
		\item[(1)] The set $\mathcal{E}_\alpha(Q)$ is a subring of~$\KK(x)[[t]]$;		
		\item[(2)] Let $S\in\mathcal{E}_\alpha(Q)$ with $S(0)=0$, then $f(S)\in\mathcal{E}_\alpha(Q)$;
		\item[(3)] The products obey
		\[\mathcal{E}_\alpha(Q)\cdot\mathcal{E}_\beta(R) \subset \mathcal{E}_{\max(\alpha+\deg R,\,\beta+\deg Q)}(QR).\]
	\end{compactenum} 
\end{lem}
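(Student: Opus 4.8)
The plan is to prove (1) by a direct degree computation, to deduce (3) from it by passing to a common denominator, and to obtain (2) from (1) together with a $t$-adic convergence argument. For (1), I would write $S=\sum_{n\ge0}a_n t^n/Q^n$ and $T=\sum_{n\ge0}b_n t^n/Q^n$ in $\mathcal{E}_\alpha(Q)$ and check the ring axioms by tracking degrees: $S+T=\sum_n (a_n+b_n)t^n/Q^n$ has $\deg(a_n+b_n)\le\max(\deg a_n,\deg b_n)\le n\alpha$; since $1/Q^i\cdot1/Q^j=1/Q^{i+j}$, the Cauchy product is $ST=\sum_k\bigl(\sum_{i+j=k}a_ib_j\bigr)t^k/Q^k$ with $\deg\bigl(\sum_{i+j=k}a_ib_j\bigr)\le\max_{i+j=k}(i\alpha+j\alpha)=k\alpha$; and $0,1\in\mathcal{E}_\alpha(Q)$ trivially. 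The same bookkeeping yields two auxiliary facts I will reuse: $\mathcal{E}_\alpha(Q)\subseteq\mathcal{E}_{\alpha'}(Q)$ whenever $\alpha\le\alpha'$, and $\mathcal{E}_\alpha(Q)$ is closed under $t$-adically convergent sums $\sum_i g_i$ (those with $\val_t g_i\to\infty$), since for each fixed $n$ only finitely many $g_i$ contribute to the coefficient of $t^n/Q^n$, and that coefficient is then a finite sum of polynomials of degree at most $n\alpha$.

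For (3), I would put both factors over the common denominator $QR$ using $t^n/Q^n=R^n\,t^n/(QR)^n$: an element $S=\sum_n a_n t^n/Q^n$ of $\mathcal{E}_\alpha(Q)$ rewrites as $\sum_n(a_nR^n)\,t^n/(QR)^n$ with $\deg(a_nR^n)\le n(\alpha+\deg R)$, hence $\mathcal{E}_\alpha(Q)\subseteq\mathcal{E}_{\alpha+\deg R}(QR)$, and symmetrically $\mathcal{E}_\beta(R)\subseteq\mathcal{E}_{\beta+\deg Q}(QR)$. By the monotonicity just noted, both are contained in $\mathcal{E}_\gamma(QR)$ with $\gamma=\max(\alpha+\deg R,\beta+\deg Q)$, and applying part (1) inside the ring $\mathcal{E}_\gamma(QR)$ gives $\mathcal{E}_\alpha(Q)\cdot\mathcal{E}_\beta(R)\subseteq\mathcal{E}_\gamma(QR)$.

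For (2), writing $f=\sum_{k\ge0}f_k t^k$ I would expand $f(S)=\sum_{k\ge0}f_k S^k$. Since $S(0)=0$ we have $\val_t S\ge1$, so $\val_t S^k\ge k$ and the series converges $t$-adically in $\KK(x)[[t]]$; each $S^k$ lies in $\mathcal{E}_\alpha(Q)$ by part (1) (with $S^0=1$), and by the closure of $\mathcal{E}_\alpha(Q)$ under $t$-adically convergent sums, so does $f(S)$.

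I do not expect a genuine obstacle: the whole argument is elementary degree accounting. The one point deserving care is the passage to the infinite sum in part (2): one must observe that for each power of $t$ the degree bound is preserved because only finitely many terms of $\sum_k f_k S^k$ contribute to it — which is exactly the closure property extracted in part (1), and is the reason it is isolated there.
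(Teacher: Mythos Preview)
Your proof is correct and follows essentially the same degree bookkeeping as the paper. The only organizational difference is in part~(3): the paper expands the Cauchy product directly over the common denominator $(QR)^n$ and bounds $\deg(a_iQ^{n-i}b_{n-i}R^i)\le i(\alpha+\deg R)+(n-i)(\beta+\deg Q)$, whereas you first establish the inclusions $\mathcal{E}_\alpha(Q)\subseteq\mathcal{E}_{\alpha+\deg R}(QR)$ and $\mathcal{E}_\beta(R)\subseteq\mathcal{E}_{\beta+\deg Q}(QR)$ and then appeal to~(1)---the underlying computation is identical.
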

\begin{proof} 
For~{\em(3)}, if $A = \sum_n{a_nt^n/Q^n}$ and $B = \sum_n{b_nt^n/R^n}$ belong respectively to~$\mathcal{E}_\alpha(Q)$ and $\mathcal{E}_\beta(R)$, then the $n$th coefficient of their product is a sum of terms of the form $a_i(x)Q^{n-i} b_{n-i}(x)R^i/(QR)^n$. Therefore, the degree of the numerator is  bounded by 
$i(\alpha+\deg R)+(n-i)(\beta+\deg Q)$, whence {\em(3)} is proved. Property~{\em(1)} is proved similarly, the $n$th coefficient of the product being a sum of terms $a_i(x)b_{n-i}(x)t^n/Q^n$.
In Property~{\em(2)}, the condition on~$S(0)$ makes~$f(S)$ well-defined. The result then follows from~{\em(1)}.
\end{proof}

\begin{coro}\label{coro:invpol}
Let~$Q\in\KK[x,t]$ be such that~$Q(0,0)\neq0$. Let $Q^\star$ be a square-free part of $Q$ and $\delta(Q^\star)$ its total degree in $(x,t)$. Then 
\[\frac{1}{Q(x,t)}\in\frac{1}{Q(x,0)}\mathcal{E}_{\min(\deg_x(Q^\star),\delta(Q^\star)-1)}(Q^\star(x,0)).\]
\end{coro}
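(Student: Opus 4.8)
The plan is to reduce the statement to Lemma~\ref{growth} by exploiting the multiplicative structure of $Q$. Fix a square-free decomposition $Q = Q_1^1\cdots Q_m^m$ with the $Q_i$ square-free and pairwise coprime in $\KK[x,t]$, so that $Q^\star = Q_1\cdots Q_m$, hence $Q(x,0) = \prod_i Q_i(x,0)^i$, $Q^\star(x,0) = \prod_i Q_i(x,0)$, $\deg_x Q^\star = \sum_i\deg_x Q_i$ and $\delta(Q^\star) = \sum_i\delta(Q_i)$. Because $Q(0,0) = \prod_i Q_i(0,0)^i$ is a nonzero element of $\KK$, every $Q_i(0,0)$ is nonzero, so each $Q_i(x,0)$ is a nonzero polynomial and $\mathcal{E}_\alpha(Q_i(x,0))$ is well defined. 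Throughout I write $e_i = \deg_x Q_i$, $\delta_i = \delta(Q_i)$ and $\alpha_i = \min(e_i,\delta_i-1)$, and I use repeatedly the elementary facts $\deg_x Q_i(x,0)\le e_i\le\delta_i$ and $\deg_x\bigl([t^j]Q_i\bigr)\le\min(e_i,\delta_i-j)$ for every $j\ge 0$.

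First I would dispose of a single factor. Fix $i$ and set $w_i = Q_i(x,t)/Q_i(x,0) - 1 = \sum_{j\ge1}\bigl([t^j]Q_i\bigr)\,t^j/Q_i(x,0)$, an element of $\KK(x)[[t]]$ with $w_i(0) = 0$. Writing $w_i = \sum_{j\ge1}c_j\,t^j/Q_i(x,0)^j$ with $c_j = \bigl([t^j]Q_i\bigr)\,Q_i(x,0)^{j-1}\in\KK[x]$, the two degree bounds above give $\deg_x c_j\le\min\bigl(j e_i,\,j(\delta_i-1)\bigr) = j\alpha_i$, so $w_i\in\mathcal{E}_{\alpha_i}(Q_i(x,0))$. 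Applying Lemma~\ref{growth}(2) with $f = \sum_{n\ge0}(-1)^n t^n$ gives $Q_i(x,0)/Q_i(x,t) = 1/(1+w_i) = f(w_i)\in\mathcal{E}_{\alpha_i}(Q_i(x,0))$, hence $1/Q_i(x,t)\in\frac{1}{Q_i(x,0)}\mathcal{E}_{\alpha_i}(Q_i(x,0))$; raising this to the power $i$ and using that $\mathcal{E}_{\alpha_i}(Q_i(x,0))$ is a ring (Lemma~\ref{growth}(1)) yields $1/Q_i(x,t)^i\in\frac{1}{Q_i(x,0)^i}\mathcal{E}_{\alpha_i}(Q_i(x,0))$.

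It then remains to multiply these $m$ statements together. Since $1/Q = \prod_i 1/Q_i^i$, we get $1/Q\in\frac{1}{Q(x,0)}\prod_i\mathcal{E}_{\alpha_i}(Q_i(x,0))$, and the conclusion follows once we show $\prod_i\mathcal{E}_{\alpha_i}(Q_i(x,0))\subseteq\mathcal{E}_\alpha(Q^\star(x,0))$ with $\alpha = \min\bigl(\sum_i e_i,\ \sum_i\delta_i-1\bigr) = \min(\deg_x Q^\star,\delta(Q^\star)-1)$. I expect this combination step to be the main obstacle, since the product bound of Lemma~\ref{growth}(3) involves a $\max$ and one must verify it never exceeds the target $\min$. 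Iterating Lemma~\ref{growth}(3) from left to right, one proves by induction on $k$ that $\prod_{i\le k}\mathcal{E}_{\alpha_i}(Q_i(x,0))\subseteq\mathcal{E}_{\beta_k}\bigl(\prod_{i\le k}Q_i(x,0)\bigr)$ with $\beta_k = \min\bigl(\sum_{i\le k}e_i,\ \sum_{i\le k}\delta_i-1\bigr)$; the inductive step uses precisely $\deg_x Q_i(x,0)\le e_i$ and $e_i\le\delta_i$ to bound both arguments of the $\max$ by $\beta_{k+1}$, together with the monotonicity $\mathcal{E}_\gamma(R)\subseteq\mathcal{E}_{\gamma'}(R)$ for $\gamma\le\gamma'$. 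Taking $k = m$ finishes the proof. (One can also bypass the induction: over the common denominator $Q^\star(x,0)^N$, the coefficient of $t^N$ in a product $\prod_i\sum_n c_{i,n}\,t^n/Q_i(x,0)^n$ has numerator of $x$-degree at most $\max_{n_1+\cdots+n_m=N}\sum_i\bigl(n_i\alpha_i+(N-n_i)\deg_x Q_i(x,0)\bigr)$, and the same two inequalities bound this by $N\alpha$.)
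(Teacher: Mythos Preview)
Your proof is correct and follows essentially the same route as the paper's: show that $(Q_i(x,t)-Q_i(x,0))/Q_i(x,0)\in\mathcal{E}_{\alpha_i}(Q_i(x,0))$, apply Lemma~\ref{growth}(2) to invert, handle pure powers, and then induct on the number of square-free factors via Lemma~\ref{growth}(3) using additivity of $\deg_x$ and of total degree. The only cosmetic difference is that the paper dispatches the pure power $Q_i^i$ in a single application of Lemma~\ref{growth}(2) with $f=1/(1+y)^i$, whereas you invert first and then use the ring property of Lemma~\ref{growth}(1); your more explicit verification of the inductive bound $\beta_{k+1}$ is a welcome addition to the paper's terse ``using additivity''.
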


\begin{proof}
For all~$i$, the coefficient of~$t^i$ in~$Q$ has degree at most~$\min(\deg_x(Q),\delta(Q)-i)$. Thus $R:=(Q(x,t)-Q(x,0))/Q(x,0)\in\mathcal{E}_{\min(\deg_x(Q),\delta(Q)-1)}(Q(x,0))$. Writing $Q(x,t)=Q(x,0)(1+R)$ and using Part~{\em(2)} of Lemma~\ref{growth} with~$f=1/(1+y)$ then gives the result when~$Q$ is square-free. Using $f=1/(1+y)^i$ gives the result for a pure power by Part~{\em(1)} of the lemma. The general case then follows from 
Part~{\em(3)} by induction on the number of parts in the square-free decomposition of~$Q$, using additivity of degree and total degree.
\end{proof}

Now, we turn to the fraction
$F_i:={P(x,y+t)}/U_i(x,y+t)/V_i(x,y,t)^i$, with $U_i(x,y)={Q(x,y)}/{Q_i(x,y)^i}$ and $V_i(x,y,t)=({Q_i(x,y+t)-Q_i(x,y)})/{t}$. 
We use bidegrees with respect to $(x,y)$ and observe that
\[\bideg U_i^\star=\bideg Q^\star-\bideg Q_i,\quad \bideg V_i\le\bideg Q_i-(0,1).\]
The total degrees in $(y,t)$ behave similarly: that of~$U_i^\star(x,y+t)$ is $\deg_yQ-\deg_yQ_i$, while that of $V_i(x,y,t)$ is $\deg_yQ_i-1$. 
Corollary~\ref{coro:invpol} gives
\begin{align}
\frac{1}{U_i(x,y+t)}&\in\frac{1}{U_i(x,y)}\mathcal{E}_{\bideg Q^\star-\bideg Q_i-(0,1)}(U_i^\star(x,y)),\label{eq:degoneoverui}\\
\frac{1}{V_i(x,y,t)^i}&\in\frac{1}{V_i(x,y,0)^i}\mathcal{E}_{\bideg Q_i-(0,2)}(V_i(x,y,0)).\label{eq:degoneovervii}
\end{align}
From there,  Part~{\em(3)} of Lemma~\ref{growth} shows that the product of these series belongs to
\[\frac{1}{U_i(x,y)V_i(x,y,0)^i}\mathcal{E}_{\bideg Q^\star-(0,2)}(U_i^\star(x,y)V_i(x,y,0)).\]
Thus
the coefficient $S_{i-1}$ of $t^{i-1}$ in the power series expansion of $F_i$ can be written as~$A_i/B_i$ with 
\[B_i=U_i(x,y)V_i(x,y,0)^iU_i^\star(x,y)^{i-1}V_i(x,y,0)^{i-1},\]
and finally
\begin{equation}\label{eq:degAB}
\begin{split}
\bideg A_i&\le\bideg P+(i-1)\bideg Q^\star-2(i-1)(0,1),\\
\bideg B_i&\le\bideg Q+(i-1)\bideg Q^\star-(2i-1)(0,1),
\end{split}
\end{equation}
whence
\[\bideg(A_i-zB_i)\le \max(\bideg P,\bideg Q-(0,1))+(i-1)(\bideg Q^\star-(0,2)).\]
Fact~4 can now be exploited, leading to a bound on the degree of the resultant:
\begin{multline*}
\deg_xR_i\le \deg_yQ_i\left(\max(\deg_xP,\deg_xQ)+(i-1)\deg_xQ^\star\right)\\
+\deg_xQ_i\left(\max(\deg_yP,\deg_yQ-1)+(i-1)(\deg_yQ^\star-2)\right).
\end{multline*}

Next, we sum over the indices~$i$ corresponding to factors of~$\hat{Q}$. This leads to the following bound for the degree in~$x$ of the result
\begin{multline*}
\deg_y\hat{Q}^\star\max(\deg_xP,\deg_xQ)+(\deg_y\hat{Q}-\deg_y\hat{Q}^\star)\deg_xQ^\star\\
+\deg_x\hat{Q}^\star\max(\deg_yP,\deg_yQ-1)+(\deg_x\hat{Q}-\deg_x\hat{Q}^\star)(\deg_yQ^\star-2).
\end{multline*}
This bound being an increasing function of each of the degrees that appear, it is itself upper bounded by replacing any of those degrees by an upper bound.

In the context of Theorem~\ref{th:Bronstein}, the bidegrees of $P$, $Q$ and $\hat{Q}$ are bounded by $(d_x,d_y)$, while those of~$Q^\star$ and $\hat{Q}^\star$ are bounded by $(d_x^\star,d_y^\star)$. This leads to the bound
\[
d_y^\star d_x+(d_y-d_y^\star)d_x^\star+d_x^\star d_y+(d_x-d_x^\star)(d_y^\star-2), 
\]
which rewrites as the bound in the Theorem and completes that part of the proof.

\subsection{Complexity}
By Fact~\ref{fact:sqfree}, a square-free decomposition of $\hat{Q}$ can be computed using $\softO(d_x^2  d_y)$ ops.
We now focus on the computations performed inside the $i$th iteration of the loop and write $(d_x^{(i)},d_y^{(i)})$ for the bidegree of~$Q_i$.
Computing $U_i$ requires an exact division of polynomials of bidegrees at most $(d_x, d_y)$; this division can be performed by evaluation-interpolation in $\softO(d_x d_y)$ ops. Similarly, the trivariate polynomial $V_i$ can be computed by evaluation-interpolation wrt $(x,y)$ in time $\softO(d_x^{(i)}(d_y^{(i)})^2)$. By Eq.~\eqref{eq:degAB}, both $A_i(x,y)$ and $B_i(x,y)$ have bidegrees at most $(D_x^{(i)},D_y^{(i)})$, where $D_x^{(i)} = d_x+id_x^\star$ and $D_y^{(i)} =  d_y+id_y^\star$. They can be computed by evaluation-interpolation in 
$\softO(i D_x^{(i)} D_y^{(i)})$ ops. Finally, the resultant $R_i(x,z)$ has bidegree at most $(d_x^{(i)}D_y^{(i)} + d_y^{(i)}D_x^{(i)}, d_y^{(i)})$, and since the degree in~$y$ of $A_i-zB_i$ and~$Q_i$ is at most~$D_y^{(i)}$, it can be computed by evaluation-interpolation in $\softO((d_x^{(i)}D_y^{(i)} + d_y^{(i)}D_x^{(i)})d_y^{(i)}D_y^{(i)})$ ops by Lemma~\ref{algo resultant}. The total cost of the loop is thus $\softO(L)$, where
\[ L = \sum_{i=1}^m \left( (i+(d_y^{(i)})^2) D_x^{(i)} D_y^{(i)} + d_x^{(i)} d_y^{(i)} (D_y^{(i)})^2 \right).\]
Using the (crude) 
bounds $D_x^{(i)} \le D_x^{(m)}$, $D_y^{(i)} \le D_y^{(m)}$, $\sum_{i=1}^m (d_y^{(i)})^2 \le {d_y^\star}^2$ and $\sum_{i=1}^m d_x^{(i)} d_y^{(i)} \le d_x^\star d_y^\star$ shows that $L$ is bounded by
\[ 
D_x^{(m)} D_y^{(m)} \sum_{i=1}^m (i+(d_y^{(i)})^2) + (D_y^{(m)})^2 \sum_{i=1}^m  d_x^{(i)} d_y^{(i)} \le 
D_x^{(m)} D_y^{(m)} (m^2+{d_y^\star}^2) + (D_y^{(m)})^2   d_x^\star d_y^\star,
\]
which, by using the inequalities $D_x^{(m)} \le 2m d_x^\star$ and $D_y^{(m)} \le 2m d_y^\star$, is seen to belong to $O(m^2d_x^\star d_y^\star(m^2 + {d_y^\star}^2))$, as was to be proved.
This completes the proof of the theorem.

\noindent {\bf Remark.} Note that one could also use Hermite reduction combined with the usual Rothstein-Trager resultant in order to compute a polynomial $\tilde{R}(x,z)$ that annihilates the residues. Indeed, Hermite reduction computes an auxiliary rational function that admits the same residues as the input, while only having simple poles. A close inspection of this approach provides the same bound $d_y^\star$ for the degree in $y$ of $\tilde{R}(x,z)$, but a less tight bound for its degree in $x$, namely worse by a factor of $d_y^\star$. The complexity of this alternative approach appears to be $\softO(d_x d_y (d_y + {d_y^\star}^3))$ (using results from~\cite{BoChChLi10}), to be compared with the complexity bound from Theorem~\ref{th:Bronstein}.

\section{Sums of roots of a polynomial} \label{sec:summation of residues}
\subsection{Algorithm}
Given a polynomial~$P\in\KK[y]$ of degree~$d$ with coefficients in a field~$\KK$ of characteristic~0, let 
$\alpha_1,\dots,\alpha_d$ be its (not necessarily distinct) roots in the algebraic closure of~$\KK$. For any positive integer $c\le d $, the polynomial of degree $\binom{d}{c}$ defined by
\begin{equation}\label{eq:defSigma}
\Sigma_c P = \prod_{i_1 < \cdots < i_c}{\left(y-(\alpha_{i_1}+\alpha_{i_2}+\cdots+\alpha_{i_c})\right)}
\end{equation}
has coefficients in~$\KK$. This section discusses the computation of~$\Sigma_cP$ summarized in Algorithm~\ref{algo:Sigma_c}, which can be seen as an additive analogue of the \emph{Platypus algorithm} of Banderier and Flajolet~\cite{BanderierFlajolet2002}.

\begin{algo}
	Algorithm \textbf{PureComposedSum}$(P, c)$
	
	\begin{algoenv}{A polynomial $P$ of degree $d$ in $\KK[y]$, a positive integer $c\le d$}{The polynomial $\Sigma_c P$ from Eq.~\eqref{eq:defSigma}}
		\State $D \gets\binom{d}{c}$
		\State $\Newton(P) \gets \rec(P')/\rec(P)\bmod y^{D+1}$
		\State $S \gets \Newton(P)\odot\exp(y) \bmod y^{D+1}$
		\State $F\gets\exp\left(\sum_{n=1}^c(-1)^{n-1}\frac{S(n y)}{n}z^n\right)\bmod (y^{D+1},z^{c+1})$
		\State $\Newton(\Sigma_c P)\gets ([z^c]F)\odot\sum{n!y^n}\bmod y^{D+1}$
		\State\Return 			
$\rec\left(\exp\left(\int\frac{D-\Newton(\Sigma_c P)}{y}\,\mathrm{d}y\right)\bmod y^{D+1}\right)$
	\end{algoenv}
	\caption{Polynomial canceling the sums of $c$ roots}\label{algo:Sigma_c}
\end{algo}

We recall two classical formulas for the generating function of the Newton sums (see, eg, \cite[\S2]{BoFlSaSc06}), the second one being valid for monic $P$ only:
\begin{equation}\label{eq:NewtonSums}
\Newton(P) = \frac{\rec(P')}{\rec(P)},\qquad
\rec(P) = \exp \left(\int{\frac{d-\Newton(P)}{y}\,\mathrm{d}y}\right).
\end{equation}
Truncating these formulas at order~$d+1$ makes~$\Newton(P)$ a representation of the polynomial~$P$ (up to normalization), since both conversions above can be performed quasi-optimally by Newton iteration~\cite{Schonhage1982, Pan2000a, BoFlSaSc06}.
The key for Algorithm~\ref{algo:Sigma_c} is the following variant of~\cite[\S2.3]{BanderierFlajolet2002}.
\begin{prop}
	Let~$P\in\KK[y]$ be a polynomial of degree~$d$,
let $\Newton(P)$ denote the generating series of its Newton sums and let $S$ be the series $\Newton(P) \odot \exp(y)$.
Let $\Psi_c$ be the polynomial in $\KK[t_1, \dots, t_c]$ defined by
	\[\Psi_c(t_1,\dots, t_c) = [z^c]\exp\left(\sum_{n \ge 1}{(-1)^{n-1}t_n\frac{z^n}{n}} \right).\]
	Then the following equality holds
	\[\Newton(\Sigma_c P)\odot \exp(y) = \Psi_c(S(y), S(2 y), \dots, S(c y)).\]
\end{prop}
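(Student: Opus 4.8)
The plan is to pass to the roots $\alpha_1,\dots,\alpha_d$ of $P$ in $\overline{\KK}$, to introduce the power series $u_i:=\exp(\alpha_i y)\in\overline{\KK}[[y]]$, and to recognize \emph{both} sides of the claimed identity as the $c$-th elementary symmetric function $e_c(u_1,\dots,u_d)$.

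First I would rewrite the left-hand side. The roots of $\Sigma_cP$ are the sums $\alpha_{i_1}+\dots+\alpha_{i_c}$ over all $i_1<\dots<i_c$, counted with multiplicity, so the $k$-th Newton sum of $\Sigma_cP$ is $\sum_{i_1<\dots<i_c}(\alpha_{i_1}+\dots+\alpha_{i_c})^k$. Taking the Hadamard product with $\exp(y)=\sum_k y^k/k!$ turns each term $(\alpha_{i_1}+\dots+\alpha_{i_c})^ky^k/k!$ into a term of an exponential series, and using $\exp(ay)\exp(by)=\exp((a+b)y)$ in $\overline{\KK}[[y]]$ gives
\[
\Newton(\Sigma_cP)\odot\exp(y)=\sum_{i_1<\dots<i_c}\exp\bigl((\alpha_{i_1}+\dots+\alpha_{i_c})y\bigr)=\sum_{i_1<\dots<i_c}u_{i_1}\cdots u_{i_c}=e_c(u_1,\dots,u_d).
\]

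Next I would identify the right-hand side. The same computation applied to $P$ itself yields $S(y)=\Newton(P)\odot\exp(y)=\sum_{i=1}^d\exp(\alpha_iy)=\sum_iu_i$, and the dilation $y\mapsto ny$ gives $S(ny)=\sum_i\exp(n\alpha_iy)=\sum_iu_i^n$, the $n$-th power sum $p_n$ of $u_1,\dots,u_d$. Hence, since the terms with index $n>c$ in $\sum_{n\ge1}(-1)^{n-1}t_nz^n/n$ affect only coefficients of $z^{c+1}$ and beyond,
\[
\Psi_c\bigl(S(y),S(2y),\dots,S(cy)\bigr)=\Psi_c(p_1,\dots,p_c)=[z^c]\exp\Bigl(\sum_{n\ge1}(-1)^{n-1}p_n\frac{z^n}{n}\Bigr).
\]
The last ingredient is the classical generating-function form of Newton's identities: for any elements $u_1,\dots,u_d$ of a commutative $\QQ$-algebra, summing $\log(1+u_iz)=\sum_{n\ge1}(-1)^{n-1}(u_iz)^n/n$ over $i$ and exponentiating yields $\prod_{i=1}^d(1+u_iz)=\exp\bigl(\sum_{n\ge1}(-1)^{n-1}p_n z^n/n\bigr)$ in $\overline{\KK}[[y]][[z]]$. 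Extracting the coefficient of $z^c$ produces $e_c(u_1,\dots,u_d)$ once more, which matches the left-hand side and proves the proposition.

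I do not expect a serious obstacle; the only delicate points are formal bookkeeping: checking that all the Hadamard products, the dilation $y\mapsto ny$, and the substitution of power series in $y$ for the variables $t_1,\dots,t_c$ of the \emph{polynomial} $\Psi_c$ are well-defined over $\overline{\KK}[[y]]$; verifying that $\Psi_c$ indeed depends only on $t_1,\dots,t_c$ (a variable $t_n$ with $n>c$ would enter $\exp\bigl(\sum_{m\ge1}(-1)^{m-1}t_mz^m/m\bigr)$ only through the monomial $z^n$, which does not contribute to the coefficient of $z^c$); and noting that the resulting identity, a priori in $\overline{\KK}[[y]]$, has both sides in $\KK[[y]]$ and hence descends to the statement over $\KK$.
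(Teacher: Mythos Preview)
Your proof is correct and follows essentially the same route as the paper: both identify $\Newton(\Sigma_cP)\odot\exp(y)$ with $[z^c]\prod_i(1+z\exp(\alpha_i y))$, then use the identity $\prod_i(1+u_iz)=\exp\bigl(\sum_{n\ge1}(-1)^{n-1}p_n z^n/n\bigr)$ together with $S(ny)=\sum_i\exp(n\alpha_i y)$ to reach $\Psi_c(S(y),\dots,S(cy))$. Your framing via $e_c$ and $p_n$ is slightly more explicit, and your remarks on well-definedness and descent to~$\KK$ are welcome additions, but the argument is the same.
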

\begin{proof}
	By construction, the series $S$ is 
	\[S(y)=\sum_{n\ge 0}{(\alpha_1^n+\alpha_2^n+\cdots+\alpha_d^n)\frac{y^n}{n!}} = \sum_{i=1}^{d}\exp(\alpha_i y).\]
	When applied to the polynomial~$\Sigma_c P$, this becomes
	\begin{align*}
	\Newton(\Sigma_c P)\odot \exp(y) &= \sum_{i_1 < \cdots < i_c}{\exp \left({(\alpha_{i_1} + \alpha_{i_2}+\cdots+\alpha_{i_c})y}\right)}\\
	&= [z^c]\prod_{i=1}^{d}{\left(1+z \exp(\alpha_i y)\right)}.
	\end{align*}
This expression rewrites:
	\begin{align*}
[z^c]\exp\left(\sum_{i=1}^{d}\log(1+z \exp({\alpha_i y}))\right)&= [z^c]\exp\left(\sum_{i=1}^{d}\sum_{m\ge 1}{(-1)^{m-1}\exp({\alpha_i m y})\frac{z^m}{m}}\right)\\
	&= [z^c]\exp\left(\sum_{m\ge 1}{(-1)^{m-1}S(m y)\frac{z^m}{m}}\right),
	\end{align*}
and the last expression equals $\Psi_c(S(y), S(2 y), \dots, S(c y))$.
\end{proof}

The correctness of Algorithm~\ref{algo:Sigma_c} follows from observing that the truncation orders $D+1$ in $y$ and $c+1$ in $z$ of the power series involved in the algorithm are sufficient to enable the reconstruction of~$\Sigma_cP$ from its first Newton sums by~\eqref{eq:NewtonSums}.
 
We will be interested in the case where~$P$ is a polynomial in~$\KK[x,y]$. Then, the coefficients of $\Sigma_cP$ wrt $y$ may have denominators. We analyze the structure of the coefficients of $\Sigma_c P$ as elementary symmetric functions of the roots of $P$ in order to compute bounds on the bidegree of the polynomial obtained by clearing out these denominators.
The rest of this section proves the following result.

\begin{thm} \label{thm:platypus-bound}
	Let $P \in \KK[x,y]_{d_x+1,d_y+1}$, and let $c\le d_y$ be a positive integer. Let $a\in\KK[x]$ denote the leading coefficient of $P$ wrt $y$ and let $\Sigma_c P$ be defined as in~Eq.~\eqref{eq:defSigma}. 
	We also denote
	\[D_x := \binom{d_y-1}{c-1}, \qquad D_y := \binom{d_y}{c}.\]
	Then $a^{D_x} \cdot \Sigma_c P$ is a polynomial in $\KK[x,y]$ that cancels all sums $\alpha_{i_1}+\cdots+\alpha_{i_c}$ of $c$ roots $\alpha_i(x)$ of $P$, with $i_1 < \cdots < i_c$, and satisfies
	\[\deg_x (a^{D_x}\cdot \Sigma_c P) \le d_xD_x, \quad \deg_y (a^{D_x}\cdot\Sigma_c P) = D_y.\]
Moreover, this polynomial can be computed in $\softO(c d_x D_xD_y)$ ops.
\end{thm}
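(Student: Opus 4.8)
The plan is to isolate the combinatorial core of the statement — a degree bound on a universal polynomial — from the routine estimates. Write $P=a\,\hat P$ with $\hat P=y^{d_y}+u_1y^{d_y-1}+\dots+u_{d_y}$ monic in $y$, so that $u_j=a_{d_y-j}/a\in\KK(x)$ where $a_i=[y^i]P$ and $\deg_x a_i\le d_x$. Since $\Sigma_c$ depends only on the (multiset of) roots of its argument, $\Sigma_c P=\Sigma_c\hat P$ is monic in $y$ of degree $D_y=\binom{d_y}{c}$, which already yields $\deg_y(a^{D_x}\Sigma_c P)=D_y$; moreover the coefficient of $y^{D_y-k}$ in $\Sigma_c P$ is $(-1)^k$ times the $k$-th elementary symmetric function $e_k$ of the $D_y$ sums $\beta_J=\sum_{j\in J}\alpha_j$ ($|J|=c$), which is a symmetric polynomial in $\alpha_1,\dots,\alpha_{d_y}$ that is homogeneous of degree $k$ and hence equals $\Phi_k(u_1,\dots,u_{d_y})$ for a universal polynomial $\Phi_k\in\ZZ[u_1,\dots,u_{d_y}]$. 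The whole statement then reduces to the bound $\deg\Phi_k\le D_x:=\binom{d_y-1}{c-1}$ for every $k$: indeed a monomial $u_1^{m_1}\cdots u_{d_y}^{m_{d_y}}$ occurring in $\Phi_k$ with $\mu:=\sum_j m_j\le D_x$ gives $a^{D_x}u_1^{m_1}\cdots u_{d_y}^{m_{d_y}}=a^{D_x-\mu}\prod_j a_{d_y-j}^{m_j}\in\KK[x]$ of degree $\le(D_x-\mu)\deg a+\mu d_x\le D_xd_x$, and summing over $k$ shows $a^{D_x}\Sigma_c P\in\KK[x,y]$ with $\deg_x\le d_xD_x$; annihilation of the sums $\alpha_{i_1}+\dots+\alpha_{i_c}$ is immediate from the definition of $\Sigma_c$.

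The hard part is the bound $\deg\Phi_k\le D_x$: the naive estimate is only $\deg\Phi_k\le k\le D_y$, and genuine cancellation must be exploited (already for $d_y=3$, $c=2$ the term $e_1(\alpha)^3$ disappears from the constant coefficient of $\Sigma_2 P$). To capture it I would introduce a fresh variable $t$ and substitute $u_j\mapsto tu_j$, so that $\deg\Phi_k=\deg_t\Phi_k(tu_1,\dots,tu_{d_y})$; by construction $\Phi_k(tu_1,\dots,tu_{d_y})$ is, up to sign, the coefficient of $y^{D_y-k}$ in $\Sigma_c$ of the deformed polynomial
\[Q_t(y)=y^{d_y}+t\bigl(u_1y^{d_y-1}+\dots+u_{d_y}\bigr),\]
so it is enough to bound the $t$-degrees of the coefficients of $\Sigma_c Q_t$. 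A Newton-polygon analysis of $Q_t$ at $t=\infty$ (equivalently, in the variable $1/t$) shows that exactly one root $\gamma_{d_y}(t)$ of $Q_t$ escapes, with $\gamma_{d_y}(t)\sim-u_1t$, while the other $d_y-1$ roots stay bounded as $t\to\infty$. Consequently $\beta_J(t)=\sum_{j\in J}\gamma_j(t)$ has $t$-degree at most $1$ if $J$ contains the distinguished index and at most $0$ otherwise, and there are exactly $\binom{d_y-1}{c-1}=D_x$ subsets of the first kind; since the coefficient of $y^{D_y-k}$ in $\Sigma_c Q_t$ is $\pm e_k(\{\beta_J(t)\}_{|J|=c})$, a sum of products of $k$ distinct $\beta_J(t)$, each such product has $t$-degree at most $\min(k,D_x)$, whence $\deg_t\Phi_k(tu_1,\dots,tu_{d_y})\le D_x$. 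I expect the subtlety of this argument to lie entirely in the root analysis at $t=\infty$, but the only fact used there — that the non-escaping roots stay bounded — follows at once from the shape of the Newton polygon.

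For the complexity, correctness of Algorithm~\ref{algo:Sigma_c} over a field is already established through the preceding Proposition and the reconstruction formulas~\eqref{eq:NewtonSums}. Over $\KK$, for a monic input of degree $d$ with $D=\binom dc$, every step but step~4 is a conversion between a polynomial and its Newton series or a Hadamard product, performed on objects of size $O(D)$ and hence in $\softO(D)$ ops by Fact~\ref{fact:complexity}; step~4 computes the exponential of a power series of order $c+1$ in $z$ over the ring $\KK[y]/(y^{D+1})$, whose multiplication costs $\softO(D)$ by Fact~\ref{fact:multiprod}, hence $\softO(cD)$ in all. Thus Algorithm~\ref{algo:Sigma_c} runs in $\softO(cD)$ ops over $\KK$. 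For $P\in\KK[x,y]$ I would then run it through evaluation-interpolation in $x$: since $\KK$ is infinite and $\deg a\le d_x$, choose $N=d_xD_x+1$ points $\xi$ with $a(\xi)\ne0$; evaluate $P$ at them in $\softO(d_yN)$ ops by Fact~\ref{fact:complexity}; for each $\xi$ the polynomial $P(\xi,y)$ has degree exactly $d_y$, so Algorithm~\ref{algo:Sigma_c} returns $\Sigma_c(P(\xi,y))$ in $\softO(cD_y)$ ops; multiply by $a(\xi)^{D_x}$ and interpolate the $D_y+1$ coefficient polynomials, each of $x$-degree $\le d_xD_x$, from the $N$ values in $\softO(D_yN)$ ops. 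By the first part $a^{D_x}\Sigma_c P$ is a genuine polynomial expressed through the $\Phi_k$, and it specialises at such a $\xi$ to $a(\xi)^{D_x}\Sigma_c(P(\xi,y))$, so this scheme is correct; its total cost $\softO(N(cD_y+d_y))=\softO(c\,d_xD_xD_y)$ matches the claim.
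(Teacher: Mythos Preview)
Your argument is correct. The complexity analysis via evaluation--interpolation in $x$ is exactly the paper's, so there is nothing to add on that front.

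For the degree bound $\deg\Phi_k\le D_x$, however, you take a genuinely different route. The paper observes directly that the $k$-th elementary symmetric function of the $\beta_J=\sum_{j\in J}\alpha_j$ has degree at most $\binom{d_y-1}{c-1}$ in \emph{each} $\alpha_i$ (since $\beta_J$ is linear in $\alpha_i$ and exactly $\binom{d_y-1}{c-1}$ of the $c$-subsets $J$ contain~$i$), and then invokes a structural lemma on symmetric functions (Lemma~\ref{lem:symmetric functions}): a symmetric polynomial of degree $\le d$ in each variable rewrites as a polynomial of total degree $\le d$ in the elementary symmetric functions. Your deformation $u_j\mapsto tu_j$ together with the Newton-polygon analysis at $t=\infty$ bypasses that lemma entirely: the total degree of $\Phi_k$ is read off as the $t$-growth of the coefficient under this one-parameter specialisation, and the bound $D_x$ arises from counting how many of the $\binom{d_y}{c}$ sums $\beta_J$ involve the single escaping root. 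The paper's approach is purely combinatorial and has the advantage that Lemma~\ref{lem:symmetric functions} is reused later (in the proof of Theorem~\ref{th:valuations}); yours is self-contained for this statement and makes the mechanism of the cancellation --- the coalescence of $d_y-1$ roots at finite limits --- geometrically explicit, at the mild cost of appealing to Laurent expansions at infinity.
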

\noindent These bounds are sharp. Experiments
suggest that for generic $P$ of bidegree $(d_x, d_y)$ the minimal polynomial of
$\alpha_{i_1}+\cdots+\alpha_{i_c}$ has bidegree precisely $(d_xD_x,D_y)$. Similarly, the complexity result is quasi-optimal up to a factor of $c$ only. 

\subsection{Bounds}
We start with the following effective version of a very classical result on symmetric functions~\cite[Theorem 6.21]{Yap1999}.
\begin{lem}\label{lem:symmetric functions}
Let $\alpha_1, \dots, \alpha_n$ be indeterminates, and $\sigma_1,\dots,\sigma_n$ be the associated elementary symmetric functions. Let $P\in\KK[\alpha_1,\dots,\alpha_n]$ be a symmetric polynomial satisfying
\[\deg_{\alpha_i}P\le d\ \text{for all}\ 1\le i\le n\]
Then $P$ can be expressed as a polynomial in $\sigma_1,\dots,\sigma_n$ of total degree at most $d$.
\end{lem}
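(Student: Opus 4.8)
The plan is to run the classical constructive proof of the fundamental theorem of symmetric polynomials (successive elimination of leading monomials, in the style of Gauss), while keeping careful track of degrees. Fix the lexicographic order on the monomials of $\KK[\alpha_1,\dots,\alpha_n]$ induced by $\alpha_1 > \alpha_2 > \dots > \alpha_n$. The first step is to observe that the leading monomial $\alpha_1^{e_1}\cdots\alpha_n^{e_n}$ of any nonzero symmetric polynomial has weakly decreasing exponents $e_1 \ge e_2 \ge \dots \ge e_n$: if $e_i < e_{i+1}$ for some $i$, then by symmetry the monomial obtained by swapping those two exponents also appears in the polynomial, and it is lexicographically larger, a contradiction. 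In particular the hypothesis $\deg_{\alpha_1} P \le d$ forces $e_1 \le d$.

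The second step is a single reduction. If $\alpha_1^{e_1}\cdots\alpha_n^{e_n}$ is the leading monomial of $P$, with coefficient $c \in \KK$, set $M := \sigma_1^{e_1-e_2}\sigma_2^{e_2-e_3}\cdots\sigma_{n-1}^{e_{n-1}-e_n}\sigma_n^{e_n}$. Since the leading term of $\sigma_k$ is $\alpha_1\cdots\alpha_k$ and the lexicographic order is multiplicative, a telescoping computation shows that the leading monomial of $M$ is exactly $\alpha_1^{e_1}\cdots\alpha_n^{e_n}$, so $P' := P - cM$ has a strictly smaller leading monomial, or vanishes. Two bookkeeping observations drive the induction. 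First, each $\sigma_k$ has degree $1$ in every $\alpha_i$, hence $\deg_{\alpha_i} M \le (e_1-e_2)+\dots+(e_{n-1}-e_n)+e_n = e_1 \le d$; therefore $P'$ still satisfies $\deg_{\alpha_i} P' \le d$ for every $i$. Second, the same telescoping sum shows that $M$, regarded as a polynomial in $\sigma_1,\dots,\sigma_n$, has total degree exactly $e_1 \le d$.

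The final step is to iterate this reduction. Every polynomial $P'$ produced again satisfies the per-variable bound $\le d$, so its leading exponent vector again has first entry $\le d$, and hence every monomial in the $\sigma_j$ subtracted along the way has total $\sigma$-degree $\le d$. The process terminates because the leading monomial strictly decreases and lives in the finite set of monomials with all exponents $\le d$. Adding up all the subtracted terms exhibits $P$ as a $\KK$-linear combination of monomials in $\sigma_1,\dots,\sigma_n$ of total degree $\le d$, which is exactly the assertion. I expect the only genuinely delicate point to be the first bookkeeping observation — that subtracting $M$ does not push the degree in any $\alpha_i$ above $d$ — since it is precisely what keeps the inequality $e_1 \le d$ available at every stage of the recursion; the remainder is the textbook argument.
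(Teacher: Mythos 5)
Your proof is correct. It takes a somewhat different route from the paper's: the paper first expands $P$ in the basis of monomial symmetric functions $m_\lambda$ (the per-variable degree bound forces every part $\lambda_i\le d$) and then cites the known triangular change of basis from monomial to elementary symmetric functions, under which each such $m_\lambda$ is a linear combination of products $\sigma_{\mu_1}\dotsm\sigma_{\mu_l}$ with $l\le\lambda_1\le d$. You instead run the constructive Gauss elimination directly on $P$, never mentioning monomial symmetric functions, and track degrees by hand. Both arguments ultimately rest on the same combinatorial fact --- a weakly decreasing leading exponent vector $(e_1,\dots,e_n)$ with $e_1\le d$ is matched by a product of exactly $e_1$ elementary symmetric polynomials --- but yours is self-contained and makes explicit the invariant the paper's citation-based argument does not need to address: that subtracting $c\,\sigma_1^{e_1-e_2}\dotsm\sigma_n^{e_n}$ keeps every per-variable degree at most $d$, so the bound $e_1\le d$ survives each iteration. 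You identify this as the delicate point and verify it correctly ($\deg_{\alpha_i}\sigma_k\le 1$ and $M$ is a product of $e_1$ of the $\sigma_k$'s), and your termination argument (strictly decreasing leading monomial within the finite set of monomials with all exponents at most $d$) is sound. The paper's route buys brevity and a pointer into the standard symmetric-function literature; yours buys an explicit algorithm and a proof that needs no external reference.
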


\begin{proof}
This is a consequence of the form of the matrix of the change of bases from the elementary symmetric functions to the monomial symmetric functions as described for instance in the proof of \cite[Theorem 7.4.4]{Stanley99}.
Since $P$ is symmetric and has degree at most $d$ with respect to each variable, it can be written as a linear combination of monomial symmetric functions of the form $\sum_{i_1<\dots<i_k}{\alpha_{i_1}^{\lambda_1}\dotsm \alpha_{i_k}^{\lambda_k}}$, where $\lambda_i\le d$ for all $i$. These monomial symmetric functions can in turn be written as linear combinations of elementary symmetric functions of the form $\sigma_{\mu_1}\dotsm\sigma_{\mu_l}$ where $l\le d$, which is exactly the result of the lemma.
\end{proof}

For the proof of the bounds in Theorem~\ref{thm:platypus-bound},
we write \[P=a(x)y^{d_y}+\sum_{i=0}^{d_y-1}{a_i(x)y^i}=a(x)\prod_{i=1}^{d_y}(y-\alpha_i(x)).\]
Let $\sigma_1(x),\dots,\sigma_{d_y}(x)$ denote the elementary symmetric functions of the $\alpha_i$'s.
	Then, the elementary symmetric functions of the roots $\alpha_{i_1}+\dots+\alpha_{i_c}$ of $\Sigma_c P$ have degree $\binom{d_y-1}{c-1}$ in each $\alpha_i$.
	Therefore, by Lemma~\ref{lem:symmetric functions}, the coefficients of $\Sigma_c P$ are polynomials of total degree at most $\binom{d_y-1}{c-1}$ in $\sigma_1,\dots,\sigma_{d_y}$. From there, the bound on $\deg_x (a^{D_x}\cdot\Sigma_c P)$ is immediately derived from the classical relations $(-1)^i\sigma_i=a_{d_y-i}/a$.
	
 \subsection{Complexity}
 The computation is performed by evaluation and interpolation at $1+d_xD_x$ values of~$x$. By Fact~\ref{fact:complexity}, at each of these values, the computation of the truncated series expansions~$\mathcal{N}({P})$ and $S$ in~$\KK[[y]]_{1+D_y}$ have complexity~$\softO(D_y)$; so do the computations of $\mathcal{N}(\Sigma_cP)$ and the last step; the most expensive step is the computation of~$F$, which costs~$\softO(cD_y)$ ops. in~$\KK$. Since this is executed $O(d_xD_x)$ times, the total cost is~$\softO(cd_xD_xD_y)$.

\section{Diagonals}\label{sec:diagonals}

In this section we turn to our main topic, namely the computation of annihilating polynomials for diagonals of bivariate rational functions. The algorithm relies on a classical expression of the diagonal as a sum of residues (see Lemma~\ref{lem:diagonal-as-residues}), and on the results of Sections~\ref{sec:residues} and~\ref{sec:summation of residues}. The conclusions of the analysis of Algorithm~\ref{algo:diagonal} can be found in Theorem~\ref{thm:bound diagonals} and Proposition \ref{prop:generic}.

\subsection{Algebraic equations for diagonals} \label{subsec:Algebraic equations for diagonals}

Let $F(x,y)=\sum_{i,j\ge 0}{a_{i,j}x^iy^j}$ be a rational function in $\KK(x,y)$, whose denominator does not vanish at $(0,0)$.
Then the diagonal of $F$ is defined as $\Diag F(t) =\sum_{i\ge 0}{a_{i,i}t^i}$. A first basic, but very important, remark is that
\[\Diag F(t)=[y^{-1}]\frac{1}{y}F\!\left(\frac{t}{y}, y\right).\]
When $\KK=\mathbb{C}$, this coefficient can be viewed as a Cauchy integral and computed by the residue formula~\cite{Furstenberg1967}. For general $\KK$ (of characteristic $0$), we proceed similarly with a purely algebraic approach, adapted from~\cite[Theorem~6.1]{Gessel80}. (The reader who is not interested in the general proof may also skip directly to Lemma~\ref{lem:diagonal-as-residues}.)
The starting point is the partial fraction decomposition of $G(t,y) := \frac{1}{y}F(\frac{t}{y}, y)$ considered as a rational function in $\KK(t)(y)$:
\begin{equation} \label{eq:partial fractions of G}
	G(t,y) = \sum_{i=1}^n\sum_{j=1}^{m_i}{f_{i,j}(t,y)},
\end{equation}
where
\begin{equation*} 
	f_{i,j}(t,y) = \frac{r_{i,j}(t)}{(y-y_i(t))^j}, \qquad 1\le i\le n,\quad 1\le j\le m_i.
\end{equation*}
In particular, $r_{i,1}(t)$ is the residue of $G$ at $y_i(t)$ for all $i\in\{1,2,\dots,n\}$. By Puiseux's theorem, there exists $N\in\NN^\star$ such that the $y_i$'s and $r_{i,j}$'s all lie in the field $\overline{\KK}((t^{1/N}))$. In order to apply the operator $[y^{-1}]$ on both sides of Equation~\eqref{eq:partial fractions of G}, it is necessary to find a ring where both the equality and the operator $[y^{-1}]$ make sense. We are going to check that $\AA = \KK((y))((t^{1/N}))$ and $[y^{-1}]$ computed coefficient-wise are suitable for this.

First, as a rational function, it is immediate that $G(t,y)$ belongs to $\AA$. In order to expand the right-hand side, we consider each term separately and  distinguish between the cases $\val_t(y_i)\le0$ and $\val_t(y_i)>0$.
If $\val_t(y_i)\le 0$,  $f_{i,j}$ can be written as follows:
\[f_{i,j}  =\frac{r_{i,j}}{(-y_i)^{j}}\cdot\frac{1}{\left(1-y/y_i\right)^{j}} 
   = \frac{r_{i,j}}{(-y_i)^{j}}\sum_{k\ge 0}{\binom{-{j}}{k}\frac{y^k}{y_i^k}}\in\overline{\KK}((t^{1/N}))[[y]].
\]
Since $\val_t(1/y_i)\ge 0$, the series $f_{i,j}/r_{i,j}$ actually belongs to $\overline{\KK}[[t^{1/N}]][[y]]\cong\overline{\KK}[[y]][[t^{1/N}]]$. Hence $f_{i,j}\in\overline{\KK}[[y]]((t^{1/N}))\subset \AA$, and in particular $[y^{-1}]f_{i,j}=0$.
On the other hand, if $\val_t(y_i)>0$ then $f_{i,j}$ can be expanded directly in $\AA$ as:
\[
f_{i,j}  =\frac{r_{i,j}}{y^{j}}\cdot\frac{1}{\left(1-y_i/y\right)^{j}} 
   =\frac{r_{i,j}}{y^{j}}\sum_{k\ge 0}{\binom{-{j}}{k}\frac{y_i^k}{y^k}}.\]
Since $y_i/y \in \AA$ and $\val_t(y_i/y)>0$, this last quantity is the sum of a convergent series (in the sense of formal Laurent series) of elements of $\AA$, hence belongs to $\AA$. In this case we obtain $[y^{-1}]f_{i,j}=r_{i,1}$.

We have everything we need to apply $[y^{-1}]$ on both sides of Equation~\eqref{eq:partial fractions of G}, leading to the generalization to any base field of characteristic $0$ of Furstenberg's classical result~\cite[\S 2]{Furstenberg1967}.
\begin{lem}\label{lem:diagonal-as-residues} If $F(x,y)$ is a rational function in~$\KK(x,y)$ whose denominator does not vanish at~$(0,0)$, then 
\begin{equation}\label{eq:diagonal-as-residues}
\Diag F(t) = \sum_{\substack{y(t)\in\mathcal{P}\\ \val_t(y(t))>0}}{\Residue\!\left(\frac{1}{y}F\!\left(\frac{t}{y}, y\right),y=y(t)\right)},
\end{equation}
where $\mathcal{P}$ is the set of poles of $\frac{1}{y}F(\frac{t}{y}, y)$. 
\end{lem}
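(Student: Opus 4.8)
The plan is to make rigorous the formal computation that was just carried out, namely applying the coefficient extraction operator $[y^{-1}]$ to both sides of the partial fraction decomposition~\eqref{eq:partial fractions of G}. First I would recall the setup: we write $G(t,y) = \frac{1}{y}F(\frac{t}{y},y)$, observe that it lies in the ring $\AA = \KK((y))((t^{1/N}))$ for a suitable $N$ given by Puiseux's theorem, and note that the partial fraction decomposition~\eqref{eq:partial fractions of G} is an identity in $\overline{\KK}(t)(y) \subset \AA$. The key point is that although the decomposition is over $\overline{\KK}(t)$, the sum $\sum_i r_{i,1}(t)$ over poles $y_i$ with $\val_t(y_i)>0$ is invariant under the Galois action permuting the conjugate Puiseux roots, hence actually lies in $\KK((t^{1/N}))$; combined with the fact that $\Diag F \in \KK[[t]]$ it will follow a posteriori that the exponents are integers.

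Next I would verify that $[y^{-1}]$, computed coefficient-wise on Laurent series in $y$ with coefficients in $\KK((t^{1/N}))$, is $\KK((t^{1/N}))$-linear and continuous for the $y$-adic-type topology on $\AA$ (more precisely, it is continuous on each summand and the decomposition is a finite sum, so no convergence subtlety beyond what was already checked term by term). Then I would invoke the two case analyses already established in the excerpt: for a term $f_{i,j}$ with $\val_t(y_i) \le 0$ one expands in nonnegative powers of $y$, so $f_{i,j} \in \overline{\KK}[[y]]((t^{1/N})) \subset \AA$ and $[y^{-1}]f_{i,j} = 0$; for a term with $\val_t(y_i)>0$ one expands in powers of $1/y$ and gets $[y^{-1}]f_{i,j} = \delta_{j,1}\, r_{i,1}$. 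Summing over all $i,j$ and using that $[y^{-1}]G = [y^{-1}]\frac{1}{y}F(\frac{t}{y},y) = \Diag F(t)$ (the basic remark preceding the lemma) yields
\[
\Diag F(t) = \sum_{\substack{i\,:\,\val_t(y_i)>0}} r_{i,1}(t),
\]
which is exactly~\eqref{eq:diagonal-as-residues} once we identify $r_{i,1}(t)$ with $\Residue(G, y=y_i(t))$ and rephrase the index set as the poles $y(t) \in \mathcal{P}$ of positive $t$-valuation.

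The main obstacle, and the step deserving the most care, is the bookkeeping around which ring everything lives in: one must be sure that (a) the finitely many terms $f_{i,j}$ each genuinely embed into $\AA$ via the expansion dictated by the sign of $\val_t(y_i)$ — this is the content of the two displayed expansions in the excerpt and uses that the binomial series converges $t$-adically when $\val_t(y_i/y)>0$ and $y$-adically when $\val_t(1/y_i)\ge 0$ — and (b) that adding these embeddings back up inside $\AA$ recovers $G$ itself and not some other element, which is legitimate precisely because $\AA$ is a field containing $\overline{\KK}(t)(y)$ and the partial fraction identity holds there. Everything else — linearity of $[y^{-1}]$, the identification of $r_{i,1}$ as the residue, the final re-indexing — is routine. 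I would close by remarking that when $\KK = \mathbb{C}$ this reduces to the classical Cauchy/residue derivation of Furstenberg, the present argument merely replacing complex-analytic convergence by formal Laurent series manipulations valid over any characteristic-zero field.
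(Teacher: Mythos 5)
Your proposal is correct and follows essentially the same route as the paper, whose proof of this lemma is precisely the discussion preceding its statement: the partial fraction decomposition of $G(t,y)=\frac1yF(\frac ty,y)$, the embedding of each term into $\AA=\KK((y))((t^{1/N}))$ according to the sign of $\val_t(y_i)$, and coefficient-wise extraction of $[y^{-1}]$. Your added remarks (Galois invariance of the sum over small branches, the explicit $\delta_{j,1}$ in $[y^{-1}]f_{i,j}=\delta_{j,1}\,r_{i,1}$) are correct refinements of the same argument.
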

The poles $y(t)\in\mathcal{P}$ such that $\val_t(y(t))>0$ are called the \emph{small branches} of~$Q$ and we denote their number by~$\nsmall(Q)$.

Since the elements of~$\mathcal{P}$ are algebraic and finite in number and residues are obtained by series expansion, which entails only rational operations, it follows that the diagonal is algebraic too. 
Combining the algorithms of the previous section gives Algorithm~\ref{algo:diagonal} that produces a polynomial equation for $\Diag F$.

\begin{example}
Let $d\ge 0$ be an integer, and let $F_d(x,y)$ be the rational function $1/(1-x-y)^{d+1}$. The diagonal of $F_d$ is equal to
\[ \sum_{n \ge 0} \binom{2n+d}{n}\binom{n+d}{d}t^n.\]
By the previous argument, it is an algebraic series, which is the sum of the residues of the rational function~$G_d$ of Example~\ref{ex:Bronstein} over its small branches (with $x$ replaced by $t$). In this case, the denominator is~$y-t-y^2$. It has one solution tending to~0 with~$t$; the other one tends to~$1$. Thus the diagonal is canceled by the quadratic polynomial~\eqref{eq:pol-diag-ex}.
\end{example}

\begin{algo}
	Algorithm \textbf{AlgebraicDiagonal}(A/B)
	
	\begin{algoenv}{Two polynomials $A$ and $B$ in $\KK(x,y)$, with $B(0,0)\neq 0$}{A polynomial $\Phi\in\KK[t,\Delta]$ such that $\Phi(t, \Diag A/B)=0$}
		\State $P, Q,\alpha \gets y^{\slope(A)}A(\frac{t}{y},y)$, $y^{\slope(B)}B(\frac{t}{y},y)$, $\slope(B)-\slope(A)-1$
		\If{$\alpha< 0$}
		\State $r\gets \textbf{AlgebraicResidues}(y^\alpha P/Q, y)$
		\EndIf
		\State $R \gets \textbf{AlgebraicResidues}(y^\alpha P/Q, Q)$
		\State $c \gets$ number of small branches of $Q$
		\State $\Phi(t,\Delta) \gets \operatorname{numer}(\textbf{PureComposedSum}(R, c))$
		\If{$\alpha<0$}
		\State $\Phi(t,\Delta)\gets\operatorname{numer}(\Phi(t,\Delta-r))$
		\EndIf
		\State\Return $\Phi(t,\Delta)$
	\end{algoenv}
	\caption{Polynomial canceling the diagonal of a rational function. The notation $\operatorname{ddeg}$ is defined in Eq.~\eqref{eq:def-slope}; $\operatorname{numer}$ denotes the numerator of the irreducible form of a fraction.\label{algo:diagonal}}
\end{algo}

\begin{example}For an integer~$d>0$, we consider the rational function
\[F_d(x,y)=\frac{x^{d-1}}{1-x^{d}-y^{d+1}},\]
of bidegree~$(d,d+1)$. The first step of Algorithm~\ref{algo:diagonal} produces
\[G_d(t,y)=y^\alpha\frac{P}{Q}=\frac{t^{d-1}}{y^{d}-t^{d}-y^{2d+1}},\]
of bidegree~$(d,2d+1)$, 
whose denominator is irreducible with $d$ small branches. From there, Algorithm~\ref{algo:diagonal} computes a polynomial~$\Phi_d$ annihilating $\Diag F_d$, which is experimentally irreducible and whose bidegrees for $d=1,2,3,4$ are $(2,3)$, $(18,10)$, $(120,35)$, $(700,126)$.
From these values, it is easy to conjecture that the bidegree is given by~\[\left(d(d+1)\binom{2d-1}{d-1},\binom{2d+1}{d}\right),\] of exponential growth in the bidegree of $F_d$. In general, these bidegrees do not grow faster than in this example. In Theorem~\ref{thm:bound diagonals} below, we prove bounds that are barely larger than the values above.
\end{example}

\smallskip \noindent \textbf{Sloped Diagonals.} If~$p$ and $q$ are relatively prime positive integers and $F(x,y)=\sum_{i,j\ge0}{f_{i,j}x^iy^j}$, then the \emph{sloped diagonal} of $F$, $\Diag_{p,q}F(t)$ is $\sum_{n\ge0}{f_{pn,qn}t^n}$. 
Direct manipulations show that \[\Diag_{p,q}F(t^{pq})=\Diag(F(x^q,y^p))(t),\] so that our bounds and algorithm apply almost directly to these more general diagonals.

\subsection{Degree Bounds and Complexity} 
The rest of this section is devoted to the derivation of bounds on the complexity of Algorithm~\ref{algo:diagonal} and on the size of the polynomial it computes, which are given in Theorem~\ref{thm:bound diagonals}.

\smallskip \noindent \textbf{Degrees.}
A bound on the bidegree of $\Phi$ will be obtained from the bounds successively given by Theorems~\ref{th:Bronstein} and~\ref{thm:platypus-bound}.

In order to follow the impact of the change of variables in the first step, we define the \emph{lower diagonal degree} and \emph{upper diagonal degree} of a polynomial $P(x,y) =\sum_{i,j}{a_{i,j}x^iy^j}$ respectively as the integers
\begin{equation}\label{eq:def-slope}
\begin{split}
\slope(P) = \sup\left\{i-j\ |\ a_{i,j}\neq 0\right\} \\ \slopesup(P) = \sup\left\{j-i\ |\ a_{i,j}\neq 0\right\}
\end{split}
\end{equation}

We collect the properties of interest in the following.
\begin{lem}\label{lemma:slope}
	For any~$P$ and~$Q$ in~$\KK[x,y]$,
	\begin{compactenum}
		\item[(1)] $\slope(P)\le\deg_xP$ and $\slopesup(P)\le\deg_yP$;
		\item[(2)] $\operatorname{ddeg}^\pm(PQ)=\operatorname{ddeg}^\pm(P)+\operatorname{ddeg}^\pm(Q)$;
		\item[(3)] there exists a polynomial $\tilde{P}\in\KK[x,y]$, such that\\
		$P(x/y,y)=y^{-\slope(P)}\tilde{P}(x,y)$, with $\tilde{P}(x,0)\neq0$ and
		\[\bideg(\tilde{P})= (\deg_x P, \slope(P)+\slopesup(P));\]
		\item[(4)] $\bideg((\tilde{P})^\star)=(\deg_xP^\star,\slope(P^\star)+\slopesup(P^\star))$.
	\end{compactenum}
\end{lem}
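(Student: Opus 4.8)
The plan is to prove the four statements in order, each following by a direct computation from the definitions in Eq.~\eqref{eq:def-slope} and basic properties of polynomial arithmetic. For part~(1), observe that if $a_{i,j}\neq 0$ then $i\le\deg_xP$ so $i-j\le i\le\deg_xP$ (since $j\ge 0$), and symmetrically $j-i\le j\le\deg_yP$; taking the supremum over nonzero terms gives the claim. The only subtlety is the case $P=0$, where both sides are $-\infty$ under the convention $\deg 0=-\infty$, so the inequality still holds.

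For part~(2), the identity $\operatorname{ddeg}^\pm(PQ)=\operatorname{ddeg}^\pm(P)+\operatorname{ddeg}^\pm(Q)$ is the statement that the linear functional $(i,j)\mapsto i-j$ (resp.\ $j-i$) attains its maximum on the Newton polygon of $PQ$ at a vertex that is the sum of the maximizing vertices for $P$ and $Q$. The clean way to see this: pick the monomial $x^{i_0}y^{j_0}$ of $P$ maximizing $i-j$, breaking ties by taking the one with largest $i$ (hence largest $j$ too among those); do likewise for $Q$ with $x^{i_1}y^{j_1}$. Then the coefficient of $x^{i_0+i_1}y^{j_0+j_1}$ in $PQ$ is a single product $a_{i_0,j_0}b_{i_1,j_1}\neq 0$, since any other pair of monomials summing to this bidegree would have to come from a monomial of $P$ with strictly smaller $i-j$ or a tie with the wrong index, and symmetrically for $Q$ — so no cancellation can occur. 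This shows $\slope(PQ)\ge\slope(P)+\slope(Q)$; the reverse inequality is immediate from the definition since every monomial of $PQ$ has $i-j$ at most $\slope(P)+\slope(Q)$. The same argument handles $\slopesup$.

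For part~(3), write $P(x/y,y)=\sum_{i,j}a_{i,j}x^iy^{j-i}$. The smallest exponent of $y$ appearing is $\min\{j-i : a_{i,j}\neq 0\}=-\slope(P)$, so $y^{\slope(P)}P(x/y,y)=\sum_{i,j}a_{i,j}x^iy^{j-i+\slope(P)}=:\tilde P(x,y)$ is a genuine polynomial with $\tilde P(x,0)=\sum_{i-j=\slope(P)}a_{i,j}x^i\neq 0$ (the coefficient of $x^{i}$ being $a_{i,i-\slope(P)}$, and at least one such term is nonzero by definition of $\slope$). For the bidegree: $\deg_x\tilde P=\deg_xP$ since the $x$-exponents are unchanged; and $\deg_y\tilde P=\max\{j-i+\slope(P):a_{i,j}\neq0\}=\slopesup(P)+\slope(P)$ by the definition of $\slopesup$. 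Finally, part~(4) follows by applying parts~(2) and~(3) to the square-free part: if $P^\star=\prod_k P_k$ is a square-free decomposition-based factorization, then by part~(2), $\slope(P^\star)=\sum\slope(P_k)$ and likewise for $\slopesup$ and $\deg_x$, while $(\tilde P)^\star$ is (up to a power of $y$, which is coprime to $\tilde P$ since $\tilde P(x,0)\neq 0$) the square-free part of $\tilde P=y^{\slope(P)}\widetilde{\vphantom{P}\cdot}$; more directly, $(\widetilde{P^\star})=y^{\slope(P^\star)}P^\star(x/y,y)$ is square-free whenever $P^\star$ is, because the substitution $x\mapsto x/y$ followed by clearing denominators is invertible and preserves square-freeness among polynomials not divisible by $y$, so $(\tilde P)^\star=\widetilde{P^\star}$ and part~(3) applied to $P^\star$ gives the stated bidegree.

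**Main obstacle.** The only step requiring care is part~(2): showing no cancellation occurs in the leading ($\operatorname{ddeg}$-maximal) coefficient of the product. This is where a tie-breaking choice of extremal monomial is needed, and one must verify that the chosen cross-term is the unique contribution to its bidegree. Everything else is bookkeeping with the conventions $\deg 0=-\infty$ and $\val 0=\infty$, which need to be checked not to produce spurious edge cases (e.g.\ when $P$ or $Q$ is a monomial, or divisible by a power of $y$).
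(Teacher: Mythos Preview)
Your proof is correct, and parts~(1), (3), and~(4) mirror the paper's argument closely. The paper's proof of~(4) proceeds exactly as you do: from the multiplicativity $\widetilde{PQ}=\tilde P\,\tilde Q$ (a consequence of~(2) and~(3)) one gets $(\tilde P)^\star=\widetilde{P^\star}$, and then~(3) applied to $P^\star$ gives the bidegree.

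The one genuine difference is in part~(2). You argue combinatorially, selecting an extremal monomial with a tie-breaking rule and checking that no cancellation occurs in the product. The paper instead observes once and for all that
\[
\slope(P)=-\val_y P(x/y,y),\qquad \slopesup(P)=\deg_y P(x/y,y),
\]
viewing $P(x/y,y)$ as a Laurent polynomial in~$y$ over $\KK(x)$. Since $\KK(x)[y,y^{-1}]$ is an integral domain, both $\val_y$ and $\deg_y$ are additive on products, and~(2) follows immediately with no cancellation analysis needed. This reformulation also makes~(3) transparent: $\tilde P$ is just the shift of $P(x/y,y)$ by its $y$-valuation, so $\tilde P(x,0)\neq0$ and $\deg_y\tilde P=\deg_y-\val_y=\slopesup(P)+\slope(P)$ are automatic. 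Your direct argument works fine, but the paper's reinterpretation via valuation and degree is shorter and sidesteps the ``main obstacle'' you identify.
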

\begin{proof}
	Part~{(\em1)} is immediate. The quantities $\slope(P)$ and $\slopesup(P)$ are nothing else than $-\val_yP(x/y,y)$ and $\deg_yP(x/y,y)$, which makes Parts~{(\em2)} and~{(\em3)} clear too. From there, we get the identity $\widetilde{PQ}=\tilde{P}\tilde{Q}$ for arbitrary~$P$ and~$Q$, whence $(\tilde{P})^\star=\widetilde{P^\star}$ and Part~{(\em4)} is a consequence of Parts~{(\em1)} and~{(\em3)}.
\end{proof}
	
Thus, starting with a rational function $F=A/B\in\KK(x,y)$, with $(d_x, d_y)$ a bound on the bidegrees of $A$ and $B$, and $(d_x^\star,d_y^\star)$ a bound on the bidegree of a square-free part $B^\star$ of $B$, the first step of the algorithm constructs $G(t,y)=y^{\alpha}\frac{P}{Q}$, with polynomials~$P$ and~$Q$ and
\begin{gather}
\alpha = \slope(B) -\slope(A) - 1 \label{eq:defalpha}\\
\notag\bideg P \le  (d_x, \slope(A) + \slopesup(A)),\quad 
\bideg Q \le  (d_x, \slope(B) + \slopesup(B)),\\
\notag \bideg Q^\star =(d_x^\star,\slope(B^\star)+\slopesup(B^\star)).
\end{gather}
We first explain how to compute the number $c$ of small branches of $Q$.

\smallskip \noindent \textbf{Small branches.} It is classical that for a polynomial~$P=\sum{a_{i,j}x^iy^j}\in\KK[x,y]$, the number of its solutions tending to~0 can be read off its Newton polygon (see, e.g.~\cite{Walker1978}). This polygon is the lower convex hull of the union of $(i,j)+\mathbb{N}^2$ for $(i,j)$ such that~$a_{i,j}\neq0$. The number of solutions tending to~0 is given by the minimal $y$-coordinate of its leftmost points. Since the number of small branches counts only distinct solutions, it is thus given by 
\begin{equation}\label{eq:Nsmall}
\nsmall(P)=\nsmall(P^\star)=\val_y([x^{\val_xP^\star}]P^\star).
\end{equation}

The change of variables~$x\mapsto x/y$ changes the coordinates of the point corresponding to~$a_{i,j}$ into~$(i,j-i)$. This transformation maps the vertices of the original Newton polygon to the vertices of the Newton polygon of the Laurent polynomial $P(x/y,y)$. Multiplying by~$y^{\slope(P)}$ yields a polynomial and shifts the Newton polygon up by~$\slope(P)$, thus
\[\nsmall\left(y^{\slope(P)}P(x/y,y)\right)=\nsmall(P^\star)+\slope(P^\star).\]

The number of small branches of the polynomial $Q$ constructed above is then given by
\begin{equation}\label{eq:nb_small}
c:=\nsmall{(B^\star)}+\slope(B^\star).
\end{equation}

\smallskip\noindent\textbf{Degree in $\Delta$.}
At this point, there is a slight difference between the cases $\alpha\ge 0$ and $\alpha<0$. Indeed, in the latter case we have to take the additional small branch at $0$ into account. To do this, we denote by $r$ the residue of $G$ at $0$. Since $r$ is rational, we may compute a polynomial $R$ that vanishes only on the residues at non-zero small branches of the denominator of $G$. If $\tilde{\Phi}(t, \Delta)$ is the polynomial produced by applying Algorithm~\ref{algo:Sigma_c} to $(R, c)$, then the polynomial $\Phi(t, \Delta)=\tilde{\Phi}(t, \Delta-r)$ cancels $\Diag F$.
Thus we apply Algorithm~\ref{algo:Bronstein} to $((y^\alpha P)/Q, Q)$ if $\alpha\ge 0$, and to $(P/(y^{-\alpha}Q), Q)$ otherwise. By Theorem~\ref{th:Bronstein}, in both cases we obtain a polynomial $R$ of degree $D_y$, with
\begin{equation}\label{eq:Dy}
D_y:=\slope(B^\star)+\slopesup(B^\star),
\end{equation}
and applying Algorithm~\ref{algo:Sigma_c} gives a polynomial $\Phi$ with $\deg_\Delta\Phi = \binom{D_y}{c}$.

\smallskip\noindent\textbf{Degree in $t$.} To bound the degree of $\Phi$ in $t$, we can neglect our optimization and apply Algorithm~\ref{algo:Bronstein} to $(y^\alpha P, Q, Q)$ or $(P, y^{-\alpha}Q, y^{-\alpha}Q)$ depending on wether $\alpha\ge 0$ or $\alpha<0$. Indeed, the polynomial $\Phi$ obtained this way is clearly a multiple of the one computed by the algorithm. By Theorem~\ref{th:Bronstein}, since the bidegrees of $P$, $y^\alpha P$, $Q$ and $y^{-\alpha}Q$ are all bounded by $(d_x, d_x+d_y+1)$, we compute a polynomial $R$ of degree bounded by $D_x$, where
\begin{equation}
D_x :=2d_x(d_x+d_y+1)+d_x\label{eq:Dx}
-2(d_x-d_x^\star)(d_x-d_x^\star + d_y-d_y^\star+1).
\end{equation}
Applying Theorem~\ref{thm:platypus-bound} to $(R, c)$ or $(R, c+1)$ depending on the sign of $\alpha$ yields in both cases $\deg_t{\Phi}\le D_x\binom{D_y}{c}$.

\smallskip \noindent \textbf{Complexity.} We now analyze the cost of Algorithm~\ref{algo:diagonal}. The computation of $P$ and $Q$ does not require any arithmetic operation. Next, the computation of~$R$ and $r$ takes $\softO((d_x+d_y)^6)$ ops. (see the comment after Theorem~\ref{th:Bronstein}). The number of small branches is obtained with no arithmetic operation from a square-free decomposition computed in Algorithm~\ref{algo:Bronstein}. The bounds of the discussion above and Theorem~\ref{thm:platypus-bound} show that Algorithm~\ref{algo:Sigma_c} uses~$\softO(cD_x\binom{D_y}{c}^2)$ ops. Finally, if a translation of the variable is needed, it can be performed by evaluation-interpolation in $\softO(D_x\binom{D_y}{c}^2)$ ops. (One may as well evaluate and interpolate wrt $x$ and apply better algorithms for univariate translation \cite[\S 5]{BoFlSaSc06}.)

We summarize all the results of this section in the following theorem.
\begin{thm} \label{thm:bound diagonals}
	Let $F=A/B$ be a rational function in $\KK(x,y)$ with $B(0,0)\neq0$. Let $(d_x, d_y)$ (resp. $(d_x^\star,d_y^\star)$) be a bound on the bidegrees of $A$ and~$B$ (resp. a square-free part of~$B$). Let~$D_x,D_y,c$ be defined as in Eqs.~(\ref{eq:Dx},\ref{eq:Dy},\ref{eq:nb_small}).
	Then there exists a polynomial $\Phi\in\KK[t,\Delta]$ such that $\Phi(t, \Diag F(t))=0$ and 
	\[\deg_\Delta \Phi=\binom{D_y}{c}, \qquad \deg_t \Phi \le D_x\binom{D_y}{c}.\]
	Algorithm~\ref{algo:diagonal} computes it in~$\softO \left(cD_x\binom{D_y}{c}^2+(d_x+d_y)^6 \right)$ ops.
\end{thm}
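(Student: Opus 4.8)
The statement is a synthesis of the results of Sections~\ref{sec:residues}--\ref{sec:summation of residues} together with Lemmas~\ref{lem:diagonal-as-residues} and~\ref{lemma:slope}, so the proof is essentially an assembly. The plan is to first establish correctness of Algorithm~\ref{algo:diagonal}, and then read off the three quantitative claims by feeding the bidegrees produced by the initial change of variables into Theorems~\ref{th:Bronstein} and~\ref{thm:platypus-bound}.

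\emph{Correctness.} I would first unwind the first step of Algorithm~\ref{algo:diagonal}: by Lemma~\ref{lemma:slope}(3), the substitution $x\mapsto t/y$ followed by multiplication by $y^{\slope(B)}$ (resp.\ $y^{\slope(A)}$) turns $B$ (resp.\ $A$) into a genuine polynomial $Q$ (resp.\ $P$) with $Q(t,0)\neq0$, so that $G(t,y)=\tfrac1y F(\tfrac ty,y)=y^{\alpha}P/Q$ with $\alpha$ as in~\eqref{eq:defalpha}. By Lemma~\ref{lem:diagonal-as-residues}, $\Diag F$ is the sum of the residues of $G$ over its small branches. When $\alpha\ge0$ every pole of $G$ is a root of $Q$, so $R=\textbf{AlgebraicResidues}(y^\alpha P/Q,Q)$ cancels all these residues (Theorem~\ref{th:Bronstein}), $c$ of~\eqref{eq:nb_small} counts the small branches (via the Newton-polygon identity~\eqref{eq:Nsmall}), and the numerator of $\textbf{PureComposedSum}(R,c)$ cancels the sum of the corresponding $c$ residues by Theorem~\ref{thm:platypus-bound}, hence cancels $\Diag F$. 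When $\alpha<0$ there is one extra small branch, at $y=0$, whose residue $r$ lies in $\KK(t)$; Algorithm~\ref{algo:diagonal} computes it separately, applies the preceding construction to the remaining roots of $Q$ to obtain $\tilde\Phi$ cancelling their contribution, and returns the numerator of $\tilde\Phi(t,\Delta-r)$, which cancels $\Diag F=(\text{that contribution})+r$.

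\emph{The three bounds.} For $\deg_\Delta\Phi$: the polynomial $R$ has degree in $\Delta$ exactly $\deg_y Q^\star=\slope(B^\star)+\slopesup(B^\star)=D_y$ by Theorem~\ref{th:Bronstein} and Lemma~\ref{lemma:slope}(4), so Theorem~\ref{thm:platypus-bound} gives $\deg_\Delta\Phi=\binom{D_y}{c}$ exactly, and both clearing the $t$-content and the translation $\Delta\mapsto\Delta-r$ leave this degree unchanged. For $\deg_t\Phi$: here I would use that running the algorithm \emph{without} the $y^\alpha$/translation optimization — feeding \textbf{AlgebraicResidues} the polynomial pairs $(y^\alpha P,Q)$ or $(P,y^{-\alpha}Q)$ and then \textbf{PureComposedSum} the count of all small branches — produces a polynomial that is a multiple of $\Phi$; since the bidegrees of $y^{\pm\alpha}P$ and $y^{\pm\alpha}Q$ are bounded by $(d_x,d_x+d_y+1)$ by Lemma~\ref{lemma:slope}(1), Theorem~\ref{th:Bronstein} bounds the $t$-degree of the intermediate resultant by $D_x$ of~\eqref{eq:Dx}, and Theorem~\ref{thm:platypus-bound} then yields $\deg_t\Phi\le D_x\binom{D_y}{c}$. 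For the complexity: computing $P,Q$ is free; $R$ and $r$ cost $\softO((d_x+d_y)^6)$ by the multiplicity-independent bound noted after Theorem~\ref{th:Bronstein}; $c$ is extracted from the square-free decomposition already computed, at no extra cost; \textbf{PureComposedSum} costs $\softO(cD_x\binom{D_y}{c}^2)$ by Theorem~\ref{thm:platypus-bound} (using $\binom{D_y-1}{c-1}\le\binom{D_y}{c}$), and the final translation, done by evaluation--interpolation in the $x$-direction, costs $\softO(D_x\binom{D_y}{c}^2)$, which is absorbed. Summing gives the announced estimate.

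\emph{Main obstacle.} There is no single hard step — the theorem is modular by design — but the delicate bookkeeping is twofold: keeping exact control of the bidegrees, and of the square-free parts, through the change of variables $x\mapsto t/y$ and the multiplication by $y^\alpha$, which is precisely what Lemma~\ref{lemma:slope} provides; and handling the case $\alpha<0$, where the branch at $y=0$ must be peeled off as a rational term before \textbf{PureComposedSum} is applied, and where the $t$-degree has to be bounded through the auxiliary non-optimized run rather than directly, since the translation by $r(t)$ could otherwise inflate it.
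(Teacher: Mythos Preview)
Your proposal is correct and follows essentially the same route as the paper: correctness via Lemma~\ref{lem:diagonal-as-residues}, tracking bidegrees through the change of variables with Lemma~\ref{lemma:slope}, applying Theorems~\ref{th:Bronstein} and~\ref{thm:platypus-bound} in turn, and bounding $\deg_t\Phi$ via the non-optimized run so the translation by $r$ cannot inflate it. The only point you leave slightly implicit, which the paper spells out, is that in the non-optimized run for $\alpha<0$ one feeds \textbf{PureComposedSum} the count $c+1$ (and $R$ then has $z$-degree $D_y+1$), so that Theorem~\ref{thm:platypus-bound} gives the factor $\binom{D_y}{c}$ directly rather than $\binom{D_y-1}{c-1}$; this does not affect the validity of your bound.
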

A general bound on $\bideg \Phi$ depending only on a bound $(d,d)$ on the bidegree of the input can be deduced from the above as
\[\bideg\Phi\le(d(4d+3),1)\times\binom{2d}{d}.\]

\subsection{Optimization}\label{subsec:Optimization}
Assume that the denominator of~$F(x/y)/y$ is already partially factored as $Q(y)=\tilde{Q}(y)\prod_{i=1}^k{(y-y_i(x))^{\alpha_i}}$, where the $y_i$'s are~$k$ distinct \emph{rational} branches among the~$c$ small branches of~$Q$. Then their corresponding (rational) residues $r_i$ contribute to the diagonal. 
The special case where $k = 1$ and $y_1=0$ is exactly the situation that occurred in the discussion on $\deg_\Delta \Phi$ before Theorem~\ref{thm:bound diagonals}, when $\alpha<0$. The trick that we used extends directly to the general case: it suffices to apply Algorithm~\ref{algo:Bronstein} to $\tilde{Q}$, Algorithm~\ref{algo:Sigma_c} with $c-k$ roots, and $\Phi$ is then recovered through a change of variable.

\subsection{Generic case}\label{subsec:Generic case}
The bounds from Theorem~\ref{thm:bound diagonals} on the bidegree of $\Phi$ are slightly pessimistic wrt the variable $t$, but generically tight wrt the variable~$\Delta$, as will be proved in Proposition~\ref{prop:generic} below. We first need a lemma.

\begin{lem}\label{lemma:galois groups}
	Let $\KK$ be a field of characteristic $0$, and $P\in \KK[y]$ be a polynomial of degree $d$, with Galois group $\mathfrak{S}_d$ over $\KK$. Assume that the roots $\alpha_1,\dots,\alpha_d$ of $P$ are algebraically independent over $\mathbb{Q}$.
	Then, for any $c\le d$, the degree $\binom{d}{c}$ polynomial $\Sigma_cP$ is irreducible in $\KK[y]$.
\end{lem}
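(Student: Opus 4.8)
The plan is to show that the group of permutations of the $\binom{d}{c}$ sums $\alpha_{i_1}+\dots+\alpha_{i_c}$ induced by the Galois action acts transitively on them, which for a separable polynomial is equivalent to irreducibility of $\Sigma_c P$ over $\KK$. Since $P$ has Galois group $\mathfrak{S}_d$ over $\KK$, its splitting field $\KK(\alpha_1,\dots,\alpha_d)$ carries the full symmetric group acting by permuting the $\alpha_i$, and this action descends to the obvious action of $\mathfrak{S}_d$ on the set of $c$-element subsets $\{i_1,\dots,i_c\}\subseteq\{1,\dots,d\}$, which is visibly transitive. So the only thing that can go wrong is a collision: two distinct $c$-subsets $I\neq J$ could give the same sum $\sum_{i\in I}\alpha_i=\sum_{j\in J}\alpha_j$, which would make $\Sigma_c P$ have a repeated root, and then transitivity of the Galois action on the (multiset of) roots would not immediately yield irreducibility.

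First I would record that $\Sigma_c P\in\KK[y]$ (already noted in the text, from symmetric function theory) and that the Galois group $G=\operatorname{Gal}(\KK(\alpha_1,\dots,\alpha_d)/\KK)\cong\mathfrak{S}_d$ permutes the roots $\beta_I:=\sum_{i\in I}\alpha_i$ of $\Sigma_c P$ transitively, simply because $\sigma(\beta_I)=\beta_{\sigma(I)}$ and $\mathfrak{S}_d$ is transitive on $c$-subsets. Second, I would use the algebraic independence hypothesis to kill collisions: if $\beta_I=\beta_J$ with $I\neq J$, then after cancelling the common indices $I\cap J$ we obtain a nontrivial $\mathbb{Q}$-linear (indeed $\{-1,0,1\}$-coefficient) relation $\sum_{i\in I\setminus J}\alpha_i-\sum_{j\in J\setminus I}\alpha_j=0$ among distinct $\alpha$'s, contradicting their algebraic independence over $\mathbb{Q}$. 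Hence the $\binom{d}{c}$ values $\beta_I$ are pairwise distinct, so $\Sigma_c P$ is separable with exactly $\binom{d}{c}$ distinct roots. Third, I would conclude: a separable polynomial whose roots are permuted transitively by the Galois group of a field containing all of them is irreducible over the base field — indeed its irreducible factors over $\KK$ correspond to the orbits of $G$ on the roots, and there is only one orbit, so $\Sigma_c P$ is (up to the leading constant, which is $1$ here since $P$ is monic in this lemma's setting with algebraically independent roots) irreducible in $\KK[y]$.

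The main obstacle is precisely the separability/no-collision step: transitivity alone gives irreducibility only for square-free polynomials, so the algebraic independence hypothesis is doing essential work and must be invoked exactly here. Everything else is standard Galois theory. One small point to be careful about: the reduction to a relation among \emph{distinct} variables requires noting that $I\setminus J$ and $J\setminus I$ are disjoint and nonempty when $I\neq J$ have the same cardinality, so the relation genuinely involves a variable with a nonzero coefficient and is not the trivial $0=0$; this is what rules out the degenerate case and makes the algebraic independence contradiction go through.
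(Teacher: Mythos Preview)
Your argument is correct, and it rests on the same key observation as the paper's proof: algebraic independence of the $\alpha_i$ over $\mathbb{Q}$ forces the map $I\mapsto\beta_I=\sum_{i\in I}\alpha_i$ from $c$-subsets to sums to be injective. The packaging of the conclusion differs slightly. You argue ``transitivity on distinct roots $\Rightarrow$ irreducible'': the $\beta_I$ are pairwise distinct, $\mathfrak{S}_d$ permutes them transitively, hence $\Sigma_cP$ has a single Galois orbit of roots and is irreducible. The paper instead fixes one root $\Sigma=\beta_{\{1,\dots,c\}}$, uses the same injectivity to identify its stabiliser in $\mathfrak{S}_d$ as $\mathfrak{S}_c\times\mathfrak{S}_{d-c}$, and then applies the Galois correspondence to get $[\KK(\Sigma):\KK]=d!/\bigl(c!(d-c)!\bigr)=\binom{d}{c}$. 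The two routes are equivalent via the orbit--stabiliser theorem; yours is arguably more direct, while the paper's makes the intermediate field $\KK(\Sigma)$ and its degree explicit.
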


\begin{proof}
	Since $\Sigma = \alpha_1 + \cdots+\alpha_c$ is a root of $\Sigma_c P$, it suffices to prove that $\KK(\Sigma)$ has degree $\binom{d}{c}$ over $\KK$.
	The $\alpha_i$'s being algebraically independent, any permutation $\sigma \in \mathfrak{S}_d$ of all the $\alpha_i$'s that leaves~$\Sigma$ unchanged has to preserve the sets $\left\{\alpha_1,\dots,\alpha_c\right\}$ and $\left\{\alpha_{c+1},\dots,\alpha_d\right\}$. Conversely, any such permutation induces an automorphism of $\KK(\alpha_1,\dots,\alpha_d)$ that leaves $\Sigma$ invariant. In other words, the Galois group of $\KK(\alpha_1,\dots,\alpha_d)$ over $\KK(\Sigma)$ is equal to $\mathfrak{S}_c\times\mathfrak{S}_{d-c}.$
	It follows that $\KK(\alpha_1,\dots,\alpha_d)$ has degree~$c!(d-c)!$ over~$\KK(\Sigma)$ and degree~$d!$ over~$\KK$, so that~$\KK(\Sigma)$ has degree~$\binom{d}{c}$ over~$\KK$.
\end{proof}

\begin{prop}\label{prop:generic}
	Let $A$ be a polynomial in $\QQ[x,y]$. Let $d_x, d_y$ be non-negative integers, $s^-\le d_x$, $s^+\le d_y$, and
	\[B(x,y)=\sum_{i=0}^{s^-}{b^{(x)}_ix^i}+\sum_{j=1}^{s^+}{b^{(y)}_jy^j}+\sum_{\substack{i\le d_x,j\le d_y \\ -s^-\le j-i \le s^+}}{b_{i,j}x^iy^j}\in\QQ[(b^{(x)}_i),(b^{(y)}_j),x,y],\]
	where the $b_i^{(x)}$ and $b_j^{(y)}$ are indeterminates and $b_{i,j}\in\QQ$. 
	
	Then the polynomial computed by Algorithm~\ref{algo:diagonal} with input $A/B$ is irreducible of degree $\binom{s^-+s^+}{s^-}$ over $\KK=\QQ[(b^{(x)}_i),(b^{(y)}_j),x,y]$.
\end{prop}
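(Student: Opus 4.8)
The plan is to track how the generic structure of $B$ survives the two main transformations in Algorithm~\ref{algo:diagonal}, and then to invoke Lemma~\ref{lemma:galois groups}. First I would analyze the change of variables. Setting $G(t,y)=y^\alpha P/Q$, the denominator $Q(t,y)$ is, up to a power of $y$, equal to $y^{\slopesup(B)}B(t/y,y)$. The point of the special shape of $B$ — the ``full'' band $-s^-\le j-i\le s^+$ together with the two pure axis pieces — is that $\slope(B)=s^-$ and $\slopesup(B)=s^+$ with the corresponding extreme monomials present, so after clearing denominators $Q$ has bidegree exactly $(s^-,\,s^-+s^+)$ in the relevant sense, and its coefficients are generic (affine-linear in the indeterminates $b_i^{(x)},b_j^{(y)},b_{i,j}$) subject only to these degree constraints. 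In particular $Q$, viewed as a polynomial in $y$ over the field $\KK'=\QQ((b^{(x)}_i),(b^{(y)}_j))(t)$, is generic of degree $D_y=s^-+s^+$; hence it is irreducible and separable, its Galois group over $\KK'$ is the full symmetric group $\mathfrak S_{D_y}$, and its roots $y_1(t),\dots,y_{D_y}(t)$ are algebraically independent over $\QQ$. The number of small branches is read off the Newton polygon exactly as in~\eqref{eq:nb_small}: the generic band forces $c=\nsmall(B^\star)+\slope(B^\star)=s^-$ (the axis pieces and the band endpoints are arranged precisely so that the leftmost lower edge of the shifted Newton polygon has the right vertical extent).

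Next I would pass from the branches $y_i(t)$ to the residues $\rho_i(t)=\Residue(G,y=y_i(t))$, which are what Algorithm~\ref{algo:Bronstein} produces a polynomial $R$ for. The key claim here is that the residues are in bijection with the branches without degeneration: over $\overline{\KK'}$ each $\rho_i$ is a nonzero rational expression in $y_i$ and the (generic) coefficients of $P,Q$, the $y_i$ are distinct and algebraically independent, and the $\mathfrak S_{D_y}$-action permutes the pairs $(y_i,\rho_i)$ faithfully; so the $\rho_i$ are themselves algebraically independent over $\QQ$ and $R$ — restricted to the $c$ small branches — is an irreducible separable polynomial of degree $c$ over $\KK'$ with Galois group $\mathfrak S_c$ and algebraically independent roots. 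This is the place I expect the only real subtlety: one must check that specializing the ``generic'' coefficients to the particular shape of $B$ does not accidentally make two residues coincide or make a residue algebraic over a subfield, i.e. that the residue map $(y\text{-roots})\to(\text{residues})$ stays generically injective and the independence is not lost. The cleanest way is to exhibit one explicit specialization of the $b$'s (rational numbers, with $A$ left symbolic if needed, or a convenient explicit $A$) for which $R$ has the full degree $c$, full Galois group, and independent roots — a single good point suffices since irreducibility, separability, and ``Galois group $=\mathfrak S_c$'' are Zariski-open conditions on the coefficient space, and algebraic independence of the roots follows because a polynomial with symmetric Galois group over a field generated by algebraically independent coefficients has algebraically independent roots.

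Finally I would apply the summation step. Algorithm~\ref{algo:diagonal} outputs (the numerator of) $\Sigma_c R$ — or, in the case $\alpha<0$, a translate $\Sigma_{c-1}\tilde R(t,\Delta-r)$ of such a polynomial by the rational residue $r$ at $0$; but note that the generic hypothesis $b_0^{(x)}$ indeterminate forces $B(0,0)\neq0$ as an element of $\KK$, so $\alpha\ge0$ and no branch at $0$ occurs: the output is exactly $\Sigma_c R$ up to a nonzero factor in $\KK$. Now I invoke Lemma~\ref{lemma:galois groups} with the field $\KK'$, the polynomial $R$ of degree $D_y$ (wait — here one must be careful: $R$ as produced has degree $D_y$ but only its $c$ small-branch roots are ``active''; restrict to the degree-$c$ squarefree factor carrying those, which by the previous paragraph is irreducible with symmetric Galois group and independent roots) and the integer $c$: this yields that $\Sigma_c R$ restricted to the small branches is irreducible over $\KK'$ of degree $\binom{D_y}{c}$ with $c\mapsto\binom{D_y}{c}$... no: of degree $\binom{c}{c}$? — rather, $\Sigma_c$ applied to a degree-$c$ polynomial gives the single linear factor, so in fact the right bookkeeping is that $R$ has $D_y$ roots, $c$ of them small, and the diagonal is the sum of those $c$, so the algorithm forms the degree-$c$ ``sum-of-all-roots'' of the degree-$c$ factor — equivalently $\Diag F\in\KK'$ would be rational, which is wrong. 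The correct reading (and the one I would write) is: $R$ has degree $D_y=s^-+s^+$ over $\KK'$ with symmetric Galois group and independent roots, $c=s^-$ of them are small, Algorithm~\ref{algo:Sigma_c} is applied with parameter $c$ to the \emph{full} $R$, and $\Diag F=\alpha_{i_1}+\cdots+\alpha_{i_c}$ for one particular $c$-subset of the $D_y$ roots. Lemma~\ref{lemma:galois groups} (with $d=D_y=s^-+s^+$ and this $c=s^-$) then gives directly that $\Sigma_c R$ is irreducible of degree $\binom{s^-+s^+}{s^-}$ over $\KK'$, hence over $\KK=\KK'[x,y]$ (adjoining the algebraically independent $x,y$ does not affect irreducibility of a polynomial in $t,\Delta$), and since $\Diag F$ is a root of it, $\Phi=\Sigma_c R$ up to a unit. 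Combining with Theorem~\ref{thm:bound diagonals} this is exactly the asserted degree $\binom{s^-+s^+}{s^-}$, completing the proof.
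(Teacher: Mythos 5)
Your overall strategy coincides with the paper's: use the genericity of $B$ to show that $Q$ has Galois group $\mathfrak{S}_{s^-+s^+}$ with algebraically independent roots, transfer these properties to the residue polynomial $R$, read off $c=s^-$ from the Newton polygon, and conclude by Lemma~\ref{lemma:galois groups} applied to the \emph{full} degree-$(s^-+s^+)$ polynomial $R$ with parameter $c=s^-$. Your final ``correct reading'' of the bookkeeping (after the mid-proof confusion about a degree-$c$ factor) is indeed the one the paper uses.

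There are, however, two genuine problems. First, the hypothesis of Lemma~\ref{lemma:galois groups} that actually does the work is the \emph{algebraic independence} of the roots of $R$ (i.e.\ of the residues $\rho_i$), and your argument for it does not hold up. The inference ``the $\mathfrak{S}_{D_y}$-action permutes the pairs $(y_i,\rho_i)$ faithfully, so the $\rho_i$ are algebraically independent'' is a non sequitur; and your proposed fallback --- exhibit one specialization of the $b$'s to rational numbers for which $R$ has independent roots --- cannot work, because after such a specialization all the $\rho_i$ lie in a field of transcendence degree $1$ over $\QQ$ (an algebraic closure of $\QQ(t)$), so no two of them can be algebraically independent. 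Irreducibility and ``Galois group $=\mathfrak{S}_n$'' do lift from a good specialization, but independence does not; it genuinely requires the indeterminate coefficients. The paper's route is direct: the coefficients of $Q(1,y)$ are affine-linear in distinct indeterminates, hence algebraically independent, so (van der Waerden) the roots $y_i$ are algebraically independent; then any algebraic relation among the $\rho_i=\tilde P(x,y_i)/\partial_yQ(x,y_i)$ would, after clearing denominators, yield one among the $y_i$, a contradiction, and this also gives the injectivity of $\operatorname{Gal}(Q/\KK)\to\operatorname{Gal}(R/\KK)$. Second, your claim that $B(0,0)\neq0$ forces $\alpha\ge0$ is false: $\alpha=\slope(B)-\slope(A)-1$ depends on $A$ (take $A=x^{10}$ with $s^-=1$). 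This is not fatal --- as you yourself note earlier, when $\alpha<0$ the output is a translate $\tilde\Phi(t,\Delta-r)$ by a rational residue $r$, which changes neither the degree nor irreducibility --- but the case cannot be dismissed on those grounds.
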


\begin{proof}
	First apply the change of variables to obtain $G=y^\alpha P/Q$, with \[Q(x,y) = \sum_{i=0}^{s^-}{b^{(x)}_ix^iy^{s^--i}}+\sum_{j=1}^{s^+}{b^{(y)}_jy^{s^-+j}}+\sum_{i, j}{b_{i,j}x^iy^{s^--i+j}}.\]
	Denote $d=s^-+s^+$. Then, the polynomial $Q(1,y)$ has the form $\sum_{j\le d}{t_jy^j}$ where each of the $t_j$'s is the sum of one of the indeterminates and rational constants. This implies that the $t_j$'s are algebraically independent over $\mathbb{Q}$. Therefore, $Q(1,y)$ has Galois group $\mathfrak{S}_{d}$ over $\QQ(t_0,\dots,t_d)$ and its roots are algebraically independent over $\mathbb{Q}$~\cite[\S 57]{Waerden49}. This property lifts to $Q(x,y)$ \cite[\S 61]{Waerden49}, which thus has Galois group $\mathfrak{S}_d$ and algebraically independent roots, denoted $y_1,\dots,y_d$.
	
	Now define the polynomial $R(x,y) = \prod_{i}{(y-\tilde{P}(x,y_i)/\partial_yQ(x,y_i))}$, where $\tilde{P}=y^\alpha P$ if $\alpha\ge 0$ and $\tilde{P}=P$ otherwise. Since $Q$ has simple roots, this is exactly the polynomial that is computed by Algorithm~\ref{algo:Bronstein}. The family $\left\{P(x,y_i)/\partial_yQ(x,y_i)\right\}$ is algebraically independent, since any algebraic relation between them would induce one for the $y_i$'s by clearing out denominators. In particular, the natural morphism $\operatorname{Gal}(Q/\KK)=\mathfrak{S}_{d}\rightarrow \operatorname{Gal}(R/\KK)$ is injective, whence an isomorphism. (Here, $\operatorname{Gal}(P/\KK)$ denotes the Galois group of $P\in\KK[y]$ over~$\KK$.) Since an immediate investigation of the Newton polygon of $Q$ shows that it has $s^-$ small branches, we conclude using Lemma~\ref{lemma:galois groups} and the fact that the translation of the variable doesn't change the irreducible character of $\Phi$.
\end{proof}

Proposition~\ref{prop:generic} should be viewed as an optimality result. Indeed, for a generic rational function $A/B$ as in the proposition, we have $B=B^\star$, $\slope(B)=s^-$, $\slopesup(B)=s^+$ and $B$ has $s^-$ small branches. This implies that the bound of Theorem~\ref{thm:bound diagonals} for $\deg_\Delta\Phi$ is optimal in this (generic) case.

If one believes that random examples should behave like the generic case, then the proposition means that the polynomial computed by Algorithm~\ref{algo:diagonal} will be irreducible most of the time.

As an example, we consider the special case of Proposition~\ref{prop:generic} where $s^-=s^+=d_x=d_y = d$. In this case, $\deg_\Delta\Phi$ is $\binom{2d}{d}$. We compare this to the following experiment on random examples.
\begin{example} 
	We consider a rational function $F(x,y)=1/{B(x,y)}$, where $B(x,y)$ is a dense polynomial of bidegree $(d,d)$ chosen at random. For~$d=1,2,3,4$, algorithm \textbf{AlgebraicDiagonal}($F$) produces \emph{irreducible} outputs with bidegrees
$(2,2)$, $(16,6)$, $(108, 20)$, $(640, 70) $, that are matched by the formulas 
\begin{equation}\label{eq:bound_generic}
\left(2d^2\binom{2d-2}{d-1},\binom{2d}{d}\right),
\end{equation}
so that the bound on $\deg_\Delta\Phi$ is tight in this case and the irreducibility of the output shows that Theorem~\ref{thm:bound diagonals} cannot be improved further.
\end{example}

\smallskip

\section{Walks}\label{sec:walks}

The key ingredient in the fact that diagonals may have a big minimal polynomial was the possibility to write them as a sum of residues. The same exponential growth as in Proposition~\ref{prop:generic} therefore occurs for other functions bearing this same structure. For instance, constant terms of rational functions in ${\mathbb C}(x)[[y]]$ can also be written as contour integrals of rational functions around the origin and thus by the residue theorem be expressed as a sum of residues.

By contrast, such sums of residues of rational functions always satisfy a differential equation of only polynomial size~\cite{BoChChLi10}.
Thus, when an algebraic function appears to be connected to a sum of residues of a rational function, the use of this differential structure is much more adapted to the computation of series expansions, instead of going through a potentially large polynomial. 

As an example where this phenomenon occurs naturally, we consider here the enumeration of unidimensional lattice walks, following Banderier and
Flajolet~\cite{BanderierFlajolet2002} and
Bousquet-M\'elou~\cite{Bousquet2006}. Our goal in this section is to study,
from the algorithmic perspective, the series expansions of various generating
functions (for bridges, excursions, meanders) that have been identified as algebraic~\cite{BanderierFlajolet2002}.
One of
our contributions is to point out that although algebraic series
can be expanded
fast~\cite{ChudnovskyChudnovsky1986,ChudnovskyChudnovsky1987a,BostanChyzakLecerfSalvySchost2007},
the precomputation of a polynomial equation could have prohibitive cost.
We overcome this
difficulty by precomputing differential (instead of polynomial) equations
that have polynomial size only, and using them to compute series expansions to
precision~$N$ for bridges, excursions and meanders in time quasi-linear
in~$N$.

\subsection{Preliminaries}

We start with some vocabulary on lattice walks.
A \emph{simple step} is a vector $(1, u)$ with $u\in\ZZ$. 
A \emph{step set} $S$ is a finite set of simple steps.
A \emph{unidimensional walk} in the plane $\mathbb{Z}^2$ built from  $S$ is a finite sequence $(A_0,A_1,\dots,A_n)$ of points in $\mathbb{Z}^2$, such that $A_0 = (0,0)$ and $\overrightarrow{A_{k-1}A_k} = (1, u_k)$ with $(1, u_k)\in S$. In this case $n$ is called the \emph{length} of the walk, and $S$ is the \emph{step set} of the walk. The $y$-coordinate of the endpoint $A_n$, namely $\sum_{i=1}^{n}{y_i}$, is called the final altitude of the walk.
The characteristic polynomial of the step set $S$ is
$$\Gamma_S(y) = \sum_{(1,u)\in S}{y^u}.$$

\begin{figure}
%	\vspace{-0.4cm}
\begin{minipage}{8.5cm}
\rule[.3cm]{0cm}{0cm}
\centerline{\includegraphics[scale=0.41]{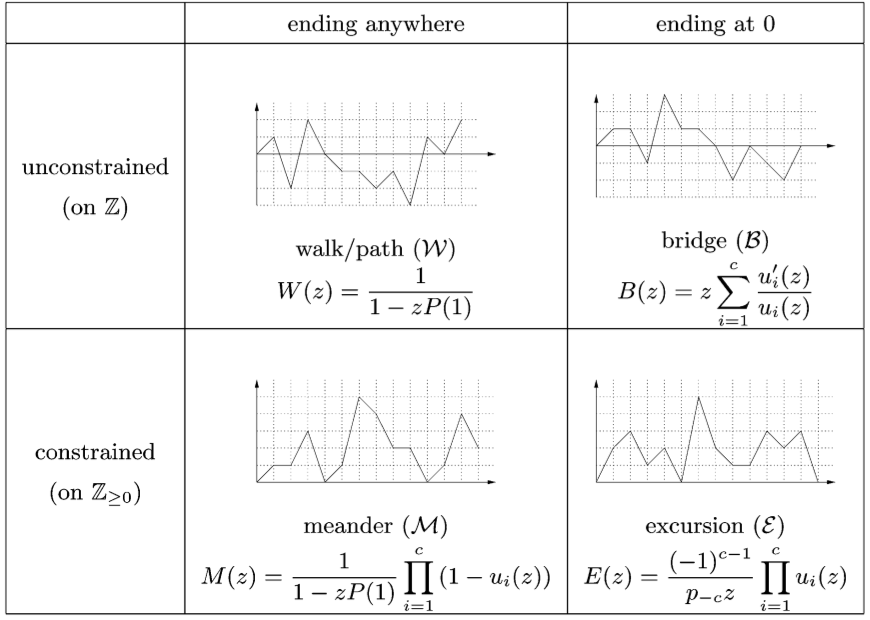}}
\end{minipage}
\caption{\cite{BanderierFlajolet2002} The four types of paths: walks, bridges, meanders and excursions\\ and the corresponding generating functions.
} 
\label{fig:BaFl02}
\vskip-10pt
\end{figure}
Following Banderier and Flajolet, we consider three specific families of walks: bridges, excursions and meanders~\cite{BanderierFlajolet2002}. \emph{Bridges} are walks with final altitude~$0$, \emph{meanders} are walks confined to the upper half plane, and \emph{excursions} are bridges that are also meanders. 
Figure~\ref{fig:BaFl02}, taken from~\cite{BanderierFlajolet2002}, summarizes these definitions graphically.

We define the full generating power series of walks \[W_S(x,y) = \sum_{n\ge 0,k\in\ZZ}{w_{n,k}x^ny^k} \; \in \mathbb{Z}[y,y^{-1}][[x]],\] where $w_{n,k}$ is the number of walks with step set~$S$, of length $n$ and final altitude $k$. We denote by $B_S(x)$ (resp. $E_S(x)$, and $M_S(x)$) the power series $\sum_{n\ge0}{u_{n}x^n}$, where $u_n$ is the number of bridges (resp. excursions, and meanders) of length~$n$ with step set~$S$.

We omit the step set~$S$ as a subscript when there is no ambiguity. Several properties of the power series $W$, $B$, $E$ and $M$ are classical:
\begin{fact}\cite[\S2.1-2.2]{BanderierFlajolet2002}\label{fact:walks}\ The power series $W$, $B$, $E$ and $M$ satisfy
	\begin{compactenum}
	\item[(1)] $W(x,y)$ is rational and $W(x,y) =1/(1-x\Gamma(y))$;
	\item[(2)] $B(x)$, $E(x)$ and $M(x)$ are algebraic;
	\item[(3)] $B(x)=[y^0]W(x,y)$;
	\item[(4)] $E(x) = \exp\left(\int{(B(x)-1)/x\,\mathrm{d}x}\right)$.
	\end{compactenum}
\end{fact}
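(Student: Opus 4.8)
The plan is to dispatch the four items separately: (1) and (3) are immediate, while (2) and (4) follow from the \emph{kernel method}, which here plays the role that the residue expansion of Lemma~\ref{lem:diagonal-as-residues} played for diagonals. For (1), a walk of length~$n$ is an ordered sequence of $n$ simple steps, a step $(1,u)$ contributing the monomial $xy^u$; its bivariate generating series is thus $(x\Gamma(y))^n$, and summing over $n\ge0$ — legitimate in $\ZZ[y,y^{-1}][[x]]$ since $x\Gamma(y)$ has positive $x$-valuation — gives $W(x,y)=1/(1-x\Gamma(y))$. For (3), a bridge of length~$n$ is a walk of length~$n$ and final altitude~$0$, so $B(x)=\sum_n w_{n,0}x^n=[y^0]W(x,y)$.

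For (2), write $\Gamma(y)=\sum_{u=-c}^{d}\gamma_u y^u$ with $\gamma_{-c}\gamma_d\ne0$ (the degenerate cases $c=0$, $d=0$ are elementary and treated apart), and set $\bar K(x,y):=y^{c}(1-x\Gamma(y))\in\KK[x,y]$, a polynomial of degree $c+d$ in~$y$. Since $\bar K(0,y)=y^{c}$, its Newton polygon shows it has exactly $c$ roots $u_1(x),\dots,u_c(x)$ of positive $x$-valuation — the \emph{small branches}, whose number matches the count $\nsmall$ discussed in Section~\ref{sec:diagonals} — and $d$ roots of nonpositive valuation; all $c+d$ roots are algebraic over $\KK(x)$. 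Working in $\KK((y))((x^{1/N}))$ as in the paragraph preceding Lemma~\ref{lem:diagonal-as-residues}, applying $[y^0]$ to $W$ retains only the residues at the poles of positive valuation ($y=0$ is not a pole, since there $W/y\sim y^{c-1}/(-x\gamma_{-c})$), so $B(x)=\sum_{j=1}^{c}\Residue\bigl(\tfrac1{y(1-x\Gamma(y))},y=u_j\bigr)=-\sum_{j=1}^{c}\tfrac1{x\,u_j\,\Gamma'(u_j)}$, a symmetric function of the $u_j$, hence algebraic. For meanders, the one-step recursion gives the functional equation $(1-x\Gamma(y))M(x,y)=1-x\sum_{j=1}^{c}\gamma_{-j}\sum_{k=0}^{j-1}([y^k]M)\,y^{k-j}$; multiplying by $y^{c}$ turns this into $\bar K(x,y)M(x,y)=N(x,y)$ with $N$ monic of degree $c$ in~$y$ and its remaining coefficients $\KK(x)$-linear in $[y^0]M,\dots,[y^{c-1}]M$. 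Evaluation at the small branches forces $N(x,u_j)=0$, hence $N(x,y)=\prod_{j=1}^{c}(y-u_j(x))$; this determines $M(x,y)=\prod_j(y-u_j)/\bar K(x,y)$ and $M(x)=M(x,1)$ as symmetric rational expressions in the $u_j$, so they are algebraic, as is $E(x)=[y^0]M(x,y)=M(x,0)=\tfrac{(-1)^{c+1}}{x\gamma_{-c}}\prod_{j=1}^{c}u_j(x)$.

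For (4), differentiating the relation $x\Gamma(u_j(x))=1$ that defines the small branches gives $x\Gamma'(u_j)u_j'=-\Gamma(u_j)=-1/x$, i.e.\ $xu_j'/u_j=-1/(x\,u_j\,\Gamma'(u_j))$; summing over $j$ and comparing with the residue formula for~$B$ above yields $\sum_j xu_j'/u_j=B(x)$. Therefore, starting from $E(x)=\tfrac{(-1)^{c+1}}{x\gamma_{-c}}\prod_j u_j(x)$, one gets $x\,\tfrac{\mathrm{d}}{\mathrm{d}x}\log E(x)=-1+\sum_j xu_j'/u_j=B(x)-1$; integrating, and using $E(0)=1$ together with $\val_x(B-1)\ge1$, gives $E(x)=\exp\bigl(\int(B(x)-1)/x\,\mathrm{d}x\bigr)$.

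The main obstacle is the kernel step for~$M$: rigorously isolating the $c$ small branches via the Newton polygon, justifying the Laurent expansions in $\KK((y))((x^{1/N}))$ (exactly as the paper already does for diagonals), dealing with a possible multiplicity among the $u_j$ in the identity $N=\prod_j(y-u_j)$, and disposing of the degenerate step sets $c=0$ or $d=0$. Everything afterwards is a routine symmetric-function and calculus computation.
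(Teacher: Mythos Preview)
The paper does not actually prove this statement: it is recorded as a \emph{Fact} with a citation to Banderier and Flajolet~\cite[\S2.1--2.2]{BanderierFlajolet2002}, and no argument is given in the text. So there is nothing to compare your proposal against inside the paper itself.

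That said, your argument is essentially the one in the cited reference and is correct. Items~(1) and~(3) are immediate; for~(2) you run the kernel method on $\bar K(x,y)=y^{c}(1-x\Gamma(y))$ to exhibit $B$, $E$, $M$ as symmetric expressions in the small branches $u_1,\dots,u_c$, and for~(4) you differentiate $x\Gamma(u_j)=1$ to identify $\sum_j xu_j'/u_j$ with $B(x)$ and integrate. The residue computation you use for $B$ is exactly the mechanism the paper sets up around Lemma~\ref{lem:diagonal-as-residues}, and your formula for $M$ via $N(x,y)=\prod_j(y-u_j)$ is the same device the paper exploits later in Proposition~\ref{walks-formulas2}. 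The caveats you flag (distinctness of the $u_j$, convergence of the substitutions in $\KK((y))((x^{1/N}))$, the degenerate cases $c=0$ or $d=0$) are genuine but routine, and are handled in the cited source; within the paper's conventions they are not needed since the statement is simply imported.
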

In what follows, we describe and analyze three methods to compute the power series expansions of $B$, $E$ and $M$. In the next two sections, we first study two previously known methods, then we introduce a new one.

\subsection{Expanding the generating power series}

From now on, we fix a step set $S$, and we denote by $u^-$ (resp. $u^+$) the largest $u$ such that $(1,-u)\in S$ (resp. $(1,u)\in S$). We also define $d=u^- + u^+$. The integer $d$ measures the vertical amplitude of $S$; this makes $d$ a good scale for measuring the complexity of the algorithms that will follow. We assume that both $u^-$ and $u^+$ are positive, since otherwise the study of the bridges, excursions and meanders becomes trivial. 

\smallskip \noindent \textbf{The direct method.}
The combinatorial definition of walks yields a recurrence relation for $w_{n,k}$:
\begin{equation}
\label{eq:rec}
 	w_{n, k} = \sum_{(1, u)\in S}{w_{n-1, k-u}},
\end{equation}
with initial conditions $w_{n,k} = 0$ if $n,k \le 0$ with $(n,k)\neq(0,0)$, and $w_{0,0} = 1$. If $\tilde{w}_{n,k}$ denotes the number of walks of length $n$ and final altitude $k$ that never exit the upper half plane, then $\tilde{w}_{n,k}$ also satisfies recurrence~\eqref{eq:rec}, but with the additional initial conditions $\tilde{w}_{n,k} = 0$ for all $k<0$. Then the bridges (resp. excursions, meanders) are counted by the numbers $w_{n,0}$ (resp. $\tilde{w}_{n,0}$, $\sum_k\tilde{w}_{n,k}$).

One can compute these numbers by unrolling the recurrence relation~\eqref{eq:rec}. Each use of the recurrence costs $\bigO(d)$ ops., and in the worst case one has to compute $\bigO(dN^2)$ terms of the sequence (for example, if the step set is $S = \{ (1,1), \dots, (1,d)\} $). This leads to the computation of each of the generating series in $\bigO(d^2N^2)$ ops. 

This quadratic complexity in $N$ is unsatisfactory, and any method that requires the complete expansion of the generating series $W(x,y)$ is bound to be quadratic in $N$. The two other methods that we are going to present are designed to achieve linear or quasi-linear complexity in $N$. As will be explained, this comes at the cost of a precomputation that must be taken into account in the analysis.

\smallskip \noindent \textbf{Using algebraic equations.}
In \cite[\S 2.3]{BanderierFlajolet2002}, a method relying on the algebraicity of $B$, $E$ and $M$ (Fact~\ref{fact:walks}{\em(2)})) is suggested. 
The series $E$ and $M$ can be expressed as products in terms of the small branches of the characteristic polynomial $\Gamma_S$ (see \cite[Th. 1, Cor. 1]{BanderierFlajolet2002}). From there, a polynomial equation can be obtained using the Platypus algorithm~\cite[\S2.3]{BanderierFlajolet2002}, which computes a polynomial canceling the products of a fixed number of roots of a given polynomial.  Given a polynomial equation $P(z,E)=0$, another one for~$B$ can be deduced from the relation~$B=zE'/E+1$ as $\Resultant_E((B-1)EP_E+zP_z,P)$.

Once a polynomial equation is known for one of these three series, it can be used to compute a linear recurrence with polynomial coefficients satisfied by its coefficients~\cite{ChudnovskyChudnovsky1986,ChudnovskyChudnovsky1987a,BostanChyzakLecerfSalvySchost2007}. The naive algorithm introduced above provides a way to compute a sufficiently large number of initial conditions to unroll this recurrence. (For a quantitative result on the required number of initial conditions, see Corollary~\ref{coro:Expand algebraic functions} below.) This method produces an algorithm that computes the first~$N$ terms of~$B$, $E$ and~$M$ in $\bigO(N)$ ops. For this to be an improvement over the naive method for large~$N$, the dependence on~$d$ of the constant in the $\bigO()$ should not be too large and the precomputation not too costly. 

Indeed, the cost of the precomputation of an algebraic equation is not negligible. The bound $\binom{d}{u^-}$ on the degrees of equations for excursions has been obtained by Bousquet-M\'elou, and showed to be tight for a specific family of step sets, as well as generically~\cite[\S2.1]{Bousquet2006}. This bound may be exponentially large with respect to $d$. Empirically, the polynomials for $B$ and $M$ are similarly large. 

The situation for differential equations and recurrences is different: $B$ satisfies a differential equation of only polynomial size (see below), whereas (empirically), those for $E$ and $M$ have a potentially exponential size. These sizes then transfer to the corresponding recurrences and thereby to the constant in the complexity of unrolling them.
The purpose of Theorem~\ref{thm:walks} below is to give explicitly the polynomial dependence in $d$ when using this method, showing at the same time that a true improvement over the naive method can be achieved. 
\begin{example} With the step set $S = \left\{(1,d), (1,1), (1,-d)\right\}$ and $d\ge 2$, the counting series $W_{S}$ equals
\[W_S(x,y) = \frac{y^d}{y^d-x(1+y^{d+1}+y^{2d})}.\]
Experiments indicate that the minimal polynomial of $B_S(x)$ has bidegree $(2d\binom{2d-2}{d-1}, \binom{2d}{d})$, exhibiting an exponential growth in~$d$. 
On the other hand, they show that $B_S(x)$ satisfies a linear differential equation of order $2d-1$ and coefficients of degree $d^2+3d-2$ for even~$d$, and $d^2+3d-4$ for odd $d$.
\end{example}

\smallskip \noindent {\bf New Method.}
We now give a method that runs in quasi-linear time (with respect to $N$) and avoids the computation of an algebraic equation. Our method relies on the fact that periods of rational functions such as the one in Part~{\em(3)} of Fact \ref{fact:walks} satisfy differential equations of polynomial size in the degree of the input rational function~\cite{BoChChLi10}.
We summarize our results in the following theorem, and then 
go over the proof in each case individually.

\begin{thm}\label{thm:walks} Let $S$ be a finite set of simple steps and $d=u^-+u^+$.
	The series $B_S$ (resp. $E_S$ and $M_S$) can be expanded at order $N$ in $\bigO(d^2N)$ ops. (resp. $\softO(d^2N)$ ops.), after a precomputation in $\softO(d^5)$~ops.
\end{thm}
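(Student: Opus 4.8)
The plan is to treat the three generating series $B_S$, $E_S$ and $M_S$ separately, building in each case on the structural facts of Fact~\ref{fact:walks} and the machinery of the previous sections. The common thread is that we never compute an algebraic equation; instead we precompute a linear differential equation of polynomial size, convert it to a linear recurrence, seed it with enough initial terms computed by the naive unrolling of~\eqref{eq:rec}, and then unroll. First I would handle $B_S$. By Fact~\ref{fact:walks}(3), $B_S(x)=[y^0]W(x,y)=[y^{-1}]\,(1/y)W(x,y)$, and since $W=1/(1-x\Gamma(y))$ is rational in $(x,y)$, $B_S$ is the residue sum (over the small branches) of a rational function whose bidegree is $O(d)$ after clearing $y$-denominators: precisely, $(1/y)W(x,y)$ becomes $y^{u^-}/(y^{u^-}(1-x\Gamma(y)))$, a rational function of bidegree $O(d)$ in $(x,y)$ with $y$-degree $d$ and squarefree denominator generically — but in any case the denominator's squarefree part has $y$-degree at most $d$. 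Applying Fact~\ref{fact:sqfree}(2), a minimal telescoper for this rational function has order $\le d$ and degree $O(d^3)$, and is computable in $\softO(d\cdot d^2\cdot d^3)=\softO(d^6)$ ops. — which needs to be tightened to $\softO(d^5)$, likely by exploiting that the $x$-degree of the numerator and denominator is only $1$, not $d$; with $d_x=O(1)$ the bound in Fact~\ref{fact:sqfree}(2) reads $\softO(d_x d_y^2 d_y^{\star 3})=\softO(d^5)$. This differential equation for $B_S$ has polynomial size $O(d^4)$; converting it to a recurrence and unrolling it costs $O(\text{ord}\cdot\text{deg})=O(d^4)$ per term if done naively, which is too slow. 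The fix is the standard observation that a recurrence of order $r$ and polynomial-coefficient degree $\delta$ can be unrolled in $O(r+\delta)$ ops.\ per coefficient once written as a vector recurrence, giving $O(d^2)$ per term, hence $O(d^2 N)$ total. The number of initial terms needed is $O(d^3)$ (the degree of the leading coefficient plus the order), computable by the naive method in $O(d^2\cdot d^6)=\softO(d^8)$ — too much, so in fact one wants only $O(\text{ord}+\text{valuation issues})=O(d^2)$ or so terms, or absorbs this into the precomputation if $N$ is at least polynomial in $d$; I would state the precomputation cost as $\softO(d^5)$ absorbing the initial-segment computation and the telescoper.

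Next I would deduce $E_S$ from $B_S$ via Fact~\ref{fact:walks}(4): $E_S=\exp(\int (B_S-1)/x\,dx)$. Given $B_S\bmod x^N$ computed above, the series operations $(B_S-1)/x$, integration, and $\exp$ are all performable in $\softO(N)$ ops.\ by Fact~\ref{fact:complexity}(1) and~(4); the constant in $d$ here is benign (these are univariate power-series operations, independent of $d$ except through the already-paid cost of producing $B_S$). Hence $E_S$ is expanded in $\softO(d^2 N)$ ops.\ (the soft-O now genuinely needed because of the $\exp$), with no extra precomputation beyond that for $B_S$. Equivalently, one may note that $E_S$ itself is a residue sum — or rather $\log E_S$ relates to $B_S$ — and that $\log E_S$ therefore satisfies a polynomial-size differential equation, but piggybacking on $B_S$ is cleaner.

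For $M_S$ I would argue analogously to the meander case sketched in the introduction: $M_S$ is not directly a residue sum of a rational function, but $\log M_S$ is, up to the differentially-finite closure operations. Concretely, the Banderier–Flajolet formula expresses $M_S$ as a ratio/product built from $W(x,1)$ and the small branches, so that $M_S'/M_S$ — the logarithmic derivative — is again a sum of residues of a rational function of bidegree $O(d)$, hence satisfies a differential equation of polynomial size by Fact~\ref{fact:sqfree}(2), computable in $\softO(d^5)$ ops. From a polynomial-size ODE for $\log M_S$ one recovers $M_S\bmod x^N$ by computing $\log M_S\bmod x^N$ via the recurrence (in $O(d^2 N)$ ops.) and then exponentiating in $\softO(N)$ ops.; the exponential forces the soft-O, giving $\softO(d^2 N)$. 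I expect the main obstacle to be the two bookkeeping points that make the exponents come out exactly as claimed: first, verifying that the relevant rational functions, after clearing $y$-denominators from $W(t/y,y)/y$-type expressions, have $x$-degree $O(1)$ so that Fact~\ref{fact:sqfree}(2) yields $\softO(d^5)$ rather than $\softO(d^6)$; and second, pinning down the precise order/degree of the differential equations for $\log E_S$ and $\log M_S$ and the exact number of initial conditions, so that unrolling is genuinely $O(d^2)$ per coefficient and the whole precomputation (telescoper plus initial segment) stays within $\softO(d^5)$. The rest is routine assembly of Facts~\ref{fact:complexity}, \ref{fact:sqfree} and the residue-sum viewpoint.
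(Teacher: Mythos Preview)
Your plan is essentially the paper's: for $B_S$ you compute a telescoper for $W(x,y)/y$ via Fact~\ref{fact:sqfree}(2) (exploiting $d_x=1$ to get cost $\softO(d^5)$ and a recurrence of order $O(d^2)$), then unroll; for $E_S$ you exponentiate via Fact~\ref{fact:walks}(4); for $M_S$ you pass through the logarithmic derivative, which is again a constant term of a rational function. Two points you leave open are exactly where the paper does real work. First, for $M_S$ the paper supplies the explicit identity (its Proposition~\ref{walks-formulas2}): setting $A(x)=[y^0]\,\frac{y}{1-y}W(x,y)$, one has $M(x)=\exp\bigl(-\int A(x)/x\,\mathrm{d}x\bigr)/(1-x\Gamma(1))$; this is what makes ``$\log M_S$ is a residue sum'' precise and keeps the bidegree of the rational function at $(1,d)$. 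Second, the initial-conditions issue you flag is not disposed of by hand-waving: the paper proves (its Theorem~\ref{th:valuations} and Proposition~\ref{prop:initial conditions}) that the largest integer root of the indicial polynomial is $O(d^3)$, by bounding valuations of Laurent-series solutions of the differential resolvent via a Wronskian argument; this is a genuine lemma, not bookkeeping. A minor slip: with $d_x=1$ the telescoper has degree $O(d^2)$, not $O(d^3)$, which is what makes the recurrence order $O(d^2)$ and the per-term cost $O(d^2)$.
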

\subsection{Fast Algorithms}

\noindent \textbf{Bridges.} To expand $B(x)$, we rely on Fact~\ref{fact:walks}{\em(3)}. The formula can be written $B=(1/2\pi i)\oint{W(x,y)\frac{dy}{y}}$, the integration path being a circle inside a small annulus around the origin~\cite[proof of Th.~1]{BanderierFlajolet2002}. Moreover, $W(x,y)/y$ is of the form $P/Q$, where $\bideg Q \le (1, d)$ and $\bideg P \le (0, d-1)$. Since $P$ and $Q$ are relatively prime and $Q$ is primitive with respect to $y$, Algorithm \textbf{HermiteTelescoping}~\cite[Fig.~3]{BoChChLi10}  computes a telescoper for $P/Q$, which is also a differential equation satisfied by $B$. By Fact~\ref{fact:sqfree}{\em(2)}, the resulting differential equation has order at most~$d$ and degree $O(d^2)$, and is computed using $\softO(d^5)$ ops.
This differential equation can be turned into a recurrence of order $r = O(d^2)$ in quasi-optimal time (see the discussion after~\cite[Cor. 2]{BoSc05}). We may use it to expand $B(x) \bmod x^N$ in $O(d^2N)$ ops, once enough initial conditions are known. Again, the initial conditions are computed by means of the direct method. The only remaining question is the number of initial conditions needed. Indeed, the recurrence may be singular, ie its leading coefficient may have positive integer roots. If we denote by $\alpha$ the largest such root, then we need to compute the first terms of the recurrence up to $\max(r-1,\alpha)$. In order not to break the flow of reading, we postpone the discussion on the size of $\alpha$ to the next section. For now, we only state the result.
\begin{prop}\label{prop:initial conditions}
	Let $S$ be a set of simple steps, and $d = \max_{(1,u), (1,v)\in S}{|u-v|}$.
	Then the largest integer root of the leading term of the recurrence computed by Algorithm~\ref{algo:bridges} is at most $\bigO(d^3)$
\end{prop}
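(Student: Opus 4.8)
The plan is to bound the largest integer root $\alpha$ of the leading coefficient of the recurrence by controlling the $x$-degree of the leading coefficient of the differential operator that \textbf{HermiteTelescoping} produces for $W(x,y)/y$. Recall that $W(x,y)/y = P(x,y)/Q(x,y)$ with $\bideg Q \le (1,d)$ and $\bideg P \le (0,d-1)$, and that $Q = y^d - x\,y^{d-u^-}\Gamma_S(y)$-type polynomial has a square-free part $Q^\star$ of degree $d^\star_y \le d$ in $y$ and degree at most $1$ in $x$. By Fact~\ref{fact:sqfree}{\em(2)} the telescoper has order at most $d^\star_y \le d$ and degree $O(d_x d^\star_y d_y) = O(d^2)$ in $x$; in particular its leading coefficient $\ell(x)$ is a polynomial of degree $O(d^2)$. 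The integer roots of $\ell$ are therefore bounded by $O(d^2)$ in the absence of further structure, but to get the sharper claimed bound $O(d^3)$ (or to be safe) one converts the differential equation of order $O(d^2)$ and $x$-degree $O(d^2)$ into the recurrence: the leading coefficient of the recurrence obtained by the standard substitution $x^n \mapsto$ shifts has degree in $n$ bounded by (order of the ODE) $+$ ($x$-degree of its leading coefficient), which is $O(d^2) + O(d^2) = O(d^2)$; combined with the order $r = O(d^2)$, the quantity $\max(r-1,\alpha)$ we actually need is $O(d^2)$. I would state the bound as $O(d^3)$ to leave room for the precise constants coming from Fact~\ref{fact:sqfree}{\em(2)}, where the degree bound is $O(d_x d^\star_y d_y)$ and with $d_x \le 1$, $d^\star_y, d_y \le d$ this is genuinely $O(d^2)$ — so in fact one can even claim $O(d^2)$, but the conservative statement $O(d^3)$ certainly holds.

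The key steps, in order: (i) identify $P/Q = W(x,y)/(y(1-x\Gamma_S(y)))$ and record $\bideg Q \le (1,d)$, $\bideg P \le (0,d-1)$, and that $Q$ is primitive in $y$ with $P,Q$ coprime (the coprimality is clear since numerator and denominator of $W/y$ share no factor; primitivity holds because the $y^d$ coefficient of $Q$ is a unit up to the $x^0$ term). (ii) Apply Fact~\ref{fact:sqfree}{\em(2)} to get a minimal telescoper $L \in \KK[x]\langle\partial_x\rangle$ for $P/Q$ of order $\le d^\star_y \le d$ and $x$-degree $O(d_x d^\star_y d_y) = O(d^2)$; by Fact~\ref{fact:walks}{\em(3)} this $L$ annihilates $B(x)$. (iii) Convert $L$ to a recurrence $\sum_{j} c_j(n)\, b_{n-j} = 0$: the order $r$ of the recurrence is $O(d^2)$ and the degree in $n$ of its leading coefficient $c_0(n)$ is bounded by the order of $L$ plus the degree of the leading coefficient of $L$, hence $O(d^2)$. (iv) Conclude that any positive integer root of $c_0$ — and hence the largest index $\alpha$ up to which we must supply initial conditions beyond $r-1$ — is $O(d^2) = O(d^3)$.

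The main obstacle is step (iii): being careful about \emph{which} coefficient of the recurrence plays the role of the ``leading term'' (the recurrence runs backwards in $n$, so the obstruction to unrolling forwards is the coefficient multiplying the highest-index unknown $b_n$, i.e. $c_0(n)$), and correctly reading off its degree in $n$ from the ODE. Under the standard correspondence $x^k \partial_x^m \leftrightarrow$ a shift operator whose coefficient is a falling-factorial of degree $m$ in $n$ times a shift by $k-m$, the coefficient of the extreme shift is governed by the term of $L$ of maximal $\partial_x$-order together with the term of maximal $x$-degree; bounding its $n$-degree by $\operatorname{ord}(L) + \deg_x(\text{lead. coeff. of }L)$ is the one calculation that must be done with care, but it is routine (see the discussion after~\cite[Cor.~2]{BoSc05}). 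Everything else is a direct substitution of the degree bounds $d_x \le 1$, $d^\star_y \le d$, $d_y \le d$ from Fact~\ref{fact:sqfree}{\em(2)} into the estimates already recorded in the text.
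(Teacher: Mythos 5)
There is a genuine gap at the decisive step, and it is exactly the difficulty that Section~\ref{subsec:singular recurrences} of the paper exists to address. In steps (iii)--(iv) you bound the \emph{degree} in $n$ of the leading coefficient $c_0(n)$ of the recurrence (and, in the preceding paragraph, the degree in $x$ of the leading coefficient $\ell(x)$ of the telescoper) and then conclude that the integer \emph{roots} of these polynomials are bounded by those degrees. The degree of a polynomial says nothing about the size of its roots: the operator $x\partial_x-N$ has order $1$ and coefficients of degree $1$, its associated recurrence has leading coefficient $n-N$, and $N$ may be arbitrarily large (correspondingly, the solution $x^N$ has valuation $N$). The same objection defeats the parenthetical claim that the integer roots of $\ell(x)$ are $O(d^2)$. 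So your argument establishes no bound at all on $\alpha$; the "routine calculation" you flag as the main obstacle (reading off $\deg_n c_0$ from the ODE) is not where the problem lies.

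What is true, and what the paper's proof exploits, is that the leading coefficient of the recurrence is the indicial polynomial of the operator at $x=0$, whose integer roots are valuations of Laurent series solutions of the operator~\cite[\S 15.31]{Ince1956}. One must therefore bound those valuations. Since $B$ is a sum of residues of $W(x,y)/y$, the solutions of the relevant operator are linear combinations of roots of the residue polynomial $R$ furnished by Theorem~\ref{th:Bronstein}; because $W$ has bidegree $(1,d)$, that theorem gives $\bideg R=(O(d),O(d))$. Hence every solution is annihilated by the differential resolvent of $R$, and Theorem~\ref{th:valuations} bounds the valuations of all Laurent series solutions of that resolvent by $O(d_xd_y^2)=O(d^3)$, via a Wronskian and symmetric-function argument combined with the Newton-polygon bound $\max(d_x,d_y)$ on the valuations of the individual algebraic branches. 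None of this machinery appears in your proposal, and without it the statement does not follow.
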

\begin{proof}
	See Section~$\ref{subsec:singular recurrences}$.
\end{proof}

Thus, a sufficient number of initial conditions is computed with $\bigO(d^5)$ ops by the direct method, and the total cost of the precomputation is $\softO(d^5)$, as announced.

\begin{algo}
	Algorithm \textbf{Walks}($S$, $N$)
	
	\begin{algoenv}{A set $S$ of simple steps and an integer $N$}{$B_S,E_S,M_S\bmod x^{N+1}$}
		\State $F \gets W(x,y)/y$ [case $B, E$] or $W(x,y)/(1-y)$ [case $M$]
		\State $D\gets \textbf{HermiteTelescoping}(F)$ \cite[Fig.~3]{BoChChLi10}
		\State $R\gets $ the recurrence of order $r$ associated to $D$
		\State $I\gets [y^0]W(x,y)\bmod x^{r+1}$ [case $B,E$]\\ \qquad $[y^0]yW(x,y)/(1-y)\bmod x^{r+1}$ [case $M$]
		\State $B \gets  [y^0]W(x,y)\bmod x^{N+1}$ (from $R,I$)
		\State $A \gets  [y^0]yW(x,y)/(1-y)\bmod x^{N+1}$ (from $R,I$)
		\State $E\gets \exp\left(\int{(B(x)-1)/x\,\mathrm{d}x}\right)\bmod x^{N+1}$
		\State $M\gets \exp\left(-\int{(A(x)/x)/(1-\Gamma(1)x)}\,\mathrm{d}x\right)\bmod x^{N+1}$
		\State\Return$B,E,M$
	\end{algoenv}
	\caption{Expanding the generating functions of bridges, excursions and meanders}\label{algo:bridges}
\end{algo}

\smallskip \noindent \textbf{Excursions.} If $B(x) \bmod x^{N+1}$ is known, it is then possible to recover $E(x) \bmod x^{N+1}$ thanks to Fact~\ref{fact:walks}{\em(4)}. Expanding $E(x)$ comes down to the computation of the exponential of a series, which can be performed using $\softO(N)$ ops. (Fact~\ref{fact:complexity}{\em(4)}).

\smallskip \noindent \textbf{Meanders.} As in the case of excursions, the logarithmic derivative of $M(x)$ is recovered from a sum of residues by the following.
\begin{prop} \label{walks-formulas2} The  series~$W$ and~$M$ are related through
\[A(x) = [y^0]\frac{y}{1-y}W(x,y),\quad M(x) = \frac{\exp\left(-\int\frac{A(x)}{x}\,\mathrm{d}x\right)}{1-x\Gamma(1)}.\]
\end{prop}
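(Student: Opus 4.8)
The plan is to realise $M$ through the kernel method and then transform the resulting ``small--branches product'' into the announced exponential of a residue.

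\emph{Step 1: the kernel equation.} Let $\tilde w_{n,k}$ be the number of walks of length $n$ staying in the upper half--plane and ending at altitude $k\ge 0$, and set $F(x,y)=\sum_{n,k\ge 0}\tilde w_{n,k}x^ny^k\in\KK[[x]][y]$, so that $M(x)=F(x,1)$ by definition of meanders. Writing $\{\cdot\}_{<0}$ for the part of a Laurent series in $y$ carrying strictly negative powers of $y$, and translating the recurrence~\eqref{eq:rec} together with the half--plane constraint $\tilde w_{n,k}=0$ for $k<0$, one obtains the functional equation $F(x,y)=1+\{x\Gamma(y)F(x,y)\}_{\ge 0}$, equivalently
\[F(x,y)\bigl(1-x\Gamma(y)\bigr)=1-N(x,y),\qquad N(x,y):=\{x\Gamma(y)F(x,y)\}_{<0}=\sum_{i=1}^{u^-}n_i(x)y^{-i},\]
with $n_i\in x\KK[[x]]$ (the bound $u^-$ on the number of negative powers comes from $\val_y\Gamma=-u^-$). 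Multiplying by $y^{u^-}$ and putting $K(x,y):=y^{u^-}(1-x\Gamma(y))=y^{u^-}-x\gamma(y)$, where $\gamma(y):=y^{u^-}\Gamma(y)\in\KK[y]$ has degree $d$ and $\gamma(0)\neq 0$, this reads $F(x,y)K(x,y)=y^{u^-}\bigl(1-N(x,y)\bigr)$, whose right--hand side is a \emph{monic} polynomial in $y$ of degree $u^-$.

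\emph{Step 2: the product formula and the residue form of $A$.} Since $\gamma(0)\neq 0$, the Newton polygon of $K$ shows that $K$ has exactly $u^-$ small branches $y_1(x),\dots,y_{u^-}(x)$, which are distinct Puiseux series with $y_i(0)=0$; each $F(x,y_i(x))$ is then a well--defined power series, so $y^{u^-}(1-N(x,y))$ vanishes at every $y_i$, and being monic of degree $u^-$ it equals $\prod_{i=1}^{u^-}(y-y_i(x))$. Evaluating at $y=1$ and using $F(x,1)=M(x)$ yields $M(x)\bigl(1-x\Gamma(1)\bigr)=\prod_{i=1}^{u^-}(1-y_i(x))$. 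On the other hand $\frac{y}{1-y}=\sum_{k\ge 1}y^k$, so $A(x)=\sum_{k\ge 1}[y^{-k}]W(x,y)$; writing $W(x,y)=1/(1-x\Gamma(y))=y^{u^-}/K(x,y)$ and decomposing $1/K$ into partial fractions over the (distinct) roots of $K$, expanding the small--branch fractions $1/(y-y_i)$ in descending powers of $y$ and the large--branch ones in ascending powers exactly as in the proof of Lemma~\ref{lem:diagonal-as-residues}, one checks that $y=0$ and $y=1$ contribute nothing (the former because $u^->0$, the latter because it lies outside the relevant annulus), and that only the small branches contribute to $[y^0]\frac{y}{1-y}W$, giving
\[A(x)=\sum_{i=1}^{u^-}\frac{y_i(x)^{u^-}}{\bigl(1-y_i(x)\bigr)\,\partial_yK(x,y_i(x))}.\]

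\emph{Step 3: the key identity and integration.} Differentiating $K(x,y_i(x))=0$ and using $\partial_xK=-\gamma(y)$ gives $y_i'(x)=\gamma(y_i)/\partial_yK(x,y_i)$, while $K(x,y_i)=0$ together with $y_i\neq 0$ forces $x\Gamma(y_i)=1$, i.e.\ $\gamma(y_i)=y_i^{u^-}\Gamma(y_i)=y_i^{u^-}/x$; combining these yields $y_i^{u^-}/\partial_yK(x,y_i)=x\,y_i'(x)$. Substituting back,
\[A(x)=\sum_{i=1}^{u^-}\frac{x\,y_i'(x)}{1-y_i(x)}=-x\,\frac{d}{dx}\log\prod_{i=1}^{u^-}\bigl(1-y_i(x)\bigr)=-x\,\frac{d}{dx}\log\bigl(M(x)(1-x\Gamma(1))\bigr).\]
Dividing by $x$ and integrating, the constant of integration vanishes because at $x=0$ both $M(x)$ and $1-x\Gamma(1)$ equal $1$, so $\log\bigl(M(x)(1-x\Gamma(1))\bigr)=-\int A(x)/x\,\mathrm dx$, which is the claimed formula.

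\emph{Main obstacle.} The delicate point is Step 2--3, the passage from the residue expression for $A$ to the logarithmic derivative of $M(1-x\Gamma(1))$: one must keep precise track in the partial--fraction bookkeeping of which branches contribute to $[y^0]$, and one must cope with the fact that the individual small branches $y_i$ are Puiseux rather than formal power series. I would handle the latter by noting that every expression eventually written down ($\sum_i y_i^{u^-}/\partial_yK(x,y_i)$, $\prod_i(1-y_i)$, $\sum_i x y_i'/(1-y_i)$) is symmetric in the $y_i$ and hence an honest element of $\KK[[x]]$, the symmetrising step being exactly what the simple identity $y_i^{u^-}/\partial_yK(x,y_i)=x\,y_i'$ makes transparent. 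The kernel equation of Step 1 and the product formula $M(1-x\Gamma(1))=\prod_i(1-y_i)$ are standard (the Banderier--Flajolet kernel method) and could alternatively be quoted from~\cite{BanderierFlajolet2002}.
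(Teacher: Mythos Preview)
Your proof is correct and follows essentially the same route as the paper's: both arguments compute $A(x)$ as a sum of residues of $W/(1-y)$ at the small branches $y_i$, rewrite each residue via implicit differentiation of the kernel relation as $x\,y_i'/(1-y_i)$, and then integrate to reach $\prod_i(1-y_i)=\exp(-\!\int A/x)$. The only differences are cosmetic: you work with the polynomial kernel $K=y^{u^-}(1-x\Gamma)$ rather than $1-x\Gamma$ directly, and you re-derive the product formula $M(x)(1-x\Gamma(1))=\prod_i(1-y_i)$ via the kernel method in Step~1, whereas the paper simply quotes it from~\cite[Cor.~1]{BanderierFlajolet2002}.
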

\begin{proof}
	Denote by $y_1,\dots,y_{u^-}$ the small branches of the polynomial $y^{u^-} - xy^{u^-}\Gamma(y)$. Then $M$ is given as~\cite[Cor.~1]{BanderierFlajolet2002}:
  \[M(x)=\frac{1}{1-x\Gamma(1)}\prod_{i=1}^{u^{-}}{(1-y_i)}.\]
  On the other hand,
\begin{align*}
    A(x) &= \frac1{2\pi i} \oint{\frac{W(x,y)}{1-y}\mathrm{d}y}\\
    &= \sum_{i=1}^{u^-}{\Residue_{y=y_i(x)}\left(\frac{1}{(1-y)(1-x\Gamma(y))}\right)} 
    =  -\sum_{i=1}^{u^-}{\frac{1}{(1-y_i)x\Gamma '(y_i)}},
\end{align*}
  where the integral has been taken over a circle 
  around the origin and the small branches.
	Differentiating the equation $1-x\Gamma(y)$ = 0 with respect to $x$ leads to
$-x\Gamma '(y_i) = {1}/({xy_i'})$, whence
	$A(x) = x\sum_{i=1}^{u^{-}}{{y_i'}/({1-y_i})}.$ Therefore, $\prod ( 1- y_i) = \exp ( - \int A/x\,\mathrm{d}x) )$, finishing the proof.
\end{proof}
Thus we apply the same method as in the case of the excursions. We first compute a differential equation for $A(x)$ using the method of~\cite{BoChChLi10}. The computation of the initial conditions for $A$ can also be performed naively from its definition as a constant term, by simply expanding $yW(x,y)/(1-y)$. The formula of the proposition then recovers $M(x)$. The complexity analysis goes exactly as in the previous case, giving a global cost of $\softO(d^5)$ ops.

\subsection{Singular recurrences}\label{subsec:singular recurrences}

We now come back to the problem of singular recurrences. In our context, the recurrences that we come across have a very specific structure: they are associated to differential resolvents of polynomials. (The differential resolvent of a polynomial is the least order differential operator canceling all of its roots.) This structure can be exploited to derive bounds on the singularities of our recurrences.

If $P\in\KK[x][y]$ is a polynomial, consider the recurrence associated to its differential resolvent $L$. The leading coefficient of this recurrence is called the indicial polynomial of $L$ at~0. Its largest integer root will be denoted $\alpha$. The fundamental idea is that there exists a Laurent series solution of $L$ which has valuation $\alpha$~\cite[\S 15.31]{Ince1956}
Therefore, it is sufficient to find bounds on the valuations of the solutions of $L$. This is done in the following theorem.

\begin{thm}\label{th:valuations}
	Let $P$ be a polynomial in $\KK[x][y]$, of bidegree at most $(d_x, d_y)$, and $L$ be the differential resolvent of $P$.
	Then all the Laurent series solutions $y(x)$ of $L$ uniformly satisfy
	\[\val_x(y(x)) = \bigO(d_xd_y^2).\]
\end{thm}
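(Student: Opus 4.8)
The plan is to bound the valuations of all Laurent-series solutions $y(x)$ of the differential resolvent $L$ of $P$, since by the argument preceding the statement (the existence of a Laurent solution of valuation $\alpha$, where $\alpha$ is the largest integer root of the indicial polynomial, cf.~\cite[\S15.31]{Ince1956}) this immediately yields the claimed bound on $\alpha$. The roots $y_1(x), \dots, y_{d_y}(x)$ of $P$ form a fundamental system of solutions of $L$ (they span its solution space, which has dimension $\deg_y P \le d_y$), so a general solution is a $\overline{\KK}$-linear combination $y(x) = \sum_i c_i y_i(x)$. The key observation is that these roots are algebraic functions, hence have Puiseux expansions around $x=0$, and the valuation of any such Puiseux series is controlled by the Newton polygon of $P$.

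First I would recall that the Puiseux expansions of the branches of $P$ at $x=0$ have slopes read off from the lower-left edges of the Newton polygon of $P$ viewed in the $(i,j)$-plane (with $i$ the exponent of $x$, $j$ the exponent of $y$). Since $\bideg P \le (d_x, d_y)$, these edges connect lattice points with $x$-coordinates in $\{0, \dots, d_x\}$ and $y$-coordinates in $\{0, \dots, d_y\}$, so each slope is a rational number of the form $p/q$ with $|p| \le d_x$ and $q \le d_y$; the valuation (in $x$) of each branch is thus of the form $p/q$ with $|p|\le d_x$, $q \le d_y$, and in particular bounded in absolute value by $d_x$. Next I would pass from the individual branches to arbitrary $\overline{\KK}$-linear combinations of them: the valuation of a sum is at least the minimum of the valuations, but could in principle be larger if leading terms cancel. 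Cancellation only occurs among branches sharing the same valuation, and then only to a limited extent — one needs to control how far the valuation can be pushed up when several branches share a Puiseux expansion agreeing to high order. This is where the analysis must be done carefully: the branches are grouped into Galois orbits / conjugacy classes, and within a group of $k$ branches all of whose expansions agree up to some order, a linear combination can vanish to order at most (number of coinciding terms) before a genuinely new term appears; since the denominators of the exponents are at most $d_y$ and there are at most $d_y$ branches, the valuation of a combination exceeds the minimal branch valuation by $O(d_x d_y)$ at worst (roughly: the expansions of two distinct branches in a common ramified group differ by the $x^{1/q}$-th root of unity, and agreement forces at most $d_y$ successive terms, each contributing a shift of at most $\approx d_x/q \le d_x$, but one must also account for the ramification denominator $q \le d_y$, giving the extra factor). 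Combining, $\val_x(y(x)) = O(d_x) + O(d_x d_y) = O(d_x d_y)$; I would then track the constants to see that the bound is in fact $O(d_x d_y^2)$, the extra $d_y$ coming from clearing ramification denominators so that the Laurent solution is genuinely a Laurent series (integer exponents) rather than a Puiseux series — multiplying the variable or rescaling to integralize exponents multiplies valuations by at most $q \le d_y$.

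The main obstacle I expect is precisely the cancellation analysis: bounding how high the valuation of a linear combination $\sum c_i y_i(x)$ can jump above $\min_i \val_x(y_i)$. A clean way to handle this is to use the Wronskian: $L$ can be written with leading coefficient the Wronskian $\Wr(y_1, \dots, y_{d_y})$ (up to normalization), and the indicial polynomial at $0$ is governed by $\val_x \Wr$ together with the local exponents. One then bounds $\val_x \Wr(y_1,\dots,y_{d_y})$ directly: it is a sum of products of $d_y$ branches and their derivatives, so its valuation is at least $\sum_i \val_x(y_i) - \binom{d_y}{2}$ roughly (the $-\binom{d_y}{2}$ absorbing derivative shifts and near-cancellations among equal-valuation branches), and each $\val_x(y_i)$ is a rational of height $\le d_x$ over a denominator $\le d_y$. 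Summing $d_y$ such quantities and clearing the common denominator $\mathrm{lcm} \le d_y!$ — or more efficiently, noting each denominator is $\le d_y$ so after scaling the valuations are integers bounded by $d_x d_y$ — gives $\val_x \Wr = O(d_x d_y^2)$, and the exponents at $0$ are likewise $O(d_x d_y)$, so the largest root $\alpha$ of the indicial polynomial, being at most (a shift of) $\val_x \Wr$ plus the spread of exponents, is $O(d_x d_y^2)$. I would present the Wronskian route as the rigorous backbone and use the Newton-polygon picture to supply the valuation estimates for the branches.
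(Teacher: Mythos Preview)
Your overall architecture is correct and matches the paper's: reduce to bounding $\val_x(y)$ for an arbitrary solution $y=\sum_i\lambda_i y_i$, use multilinearity to get
\[
\val(\Wr(y_1,\dots,y_n)) = \val(\Wr(y,y_2,\dots,y_n)) \ge \val(y) + \sum_{i\ge 2}\val(y_i) - \binom{n}{2},
\]
bound the $|\val(y_i)|$ via Newton-polygon slopes, and then bound $\val(\Wr(y_1,\dots,y_n))$ from \emph{above}. The gap is in this last step, and it is the heart of the matter. Your argument only supplies a \emph{lower} bound for $\val_x\Wr$: the observation ``each term of the Leibniz expansion has valuation $\ge \sum_i\val(y_i)-\binom{n}{2}$'' says nothing about how large $\val_x\Wr$ can be, because cancellation among the $n!$ terms of the determinant could a priori push the valuation arbitrarily high. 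Your subsequent arithmetic (summing the branch valuations, ``clearing ramification denominators'') is being applied to the wrong side of the inequality, so the conclusion $\val_x\Wr=O(d_xd_y^2)$ is unsupported. The ``extra $d_y$'' does not come from integralizing Puiseux exponents.

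The paper closes this gap with an idea absent from your sketch: it exploits the algebraicity of the $y_i$ to write each derivative as $y_i^{(k)}=W_k(x,y_i)/P_y(x,y_i)^{2k-1}$ for explicit polynomials $W_k$ of controlled bidegree (as in~\cite{BostanChyzakLecerfSalvySchost2007}). After clearing the denominators $P_y(x,y_i)^{2n-3}$, the Wronskian becomes $R/D$ with $R,D\in\KK[x,y_1,\dots,y_n]$; since $R^2$ and $D$ are symmetric in the $y_i$, Lemma~\ref{lem:symmetric functions} shows they are rational in $x$ alone, so $\Wr$ is the square root of a rational function $U(x)/(p(x)V(x))$. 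Then $\val(\Wr)\le\frac12\val(U^2)\le\frac12\deg_x(U^2)$, and tracking bidegrees through the determinant and the symmetric-function substitution yields $\deg_x(U^2)=O(d_xd_y^2)$. This degree-bounds-the-valuation step is the missing ingredient; without it (or an equivalent upper bound on $\val_x\Wr$) your proposal does not go through.
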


\begin{proof}
	Choose a subfamily $y_1,y_2,\dots,y_n$ of the Puiseux series roots of $P$ that constitutes a basis of the solution space of the resolvent (in particular, $n\le d_y$).
	Let $y=\sum_{i=1}^n{\lambda_iy_i}$ be a Laurent series solution of the differential resolvent of $P$. 

	Then the fact that $\val(f') \ge \val(f)-1$ for any Laurent series $f\in\KK[[x]]$ implies that
	\[\val(\Wr(y,y_2,\dots,y_n)) \ge \val(y)+\sum_{i=2}^n{\val(y_i)} - \binom{n}{2}.
	\]
	By the multilinearity of the Wronskian, the left-hand side of this inequality is nothing more than $\val(\Wr(y_1,y_2,\dots,y_n))$. On the other hand, the absolute values of the valuations of the $y_i$'s are bounded by $\max(d_x, d_y)$ (because they are slopes of edges in the Newton polygon of $P$). A bound for $\val(y)$ is thus obtained:
	\[\val(y)\le \val(\Wr(y_1,y_2,\dots,y_n)) + (d_y-1)\max(d_x,d_y) + \frac{d_y(d_y-1)}{2}.\] 
	The proof is then reduced to showing that $\val(\Wr(y_1,y_2,\dots,y_n)) = \bigO(d_xd_y^2)$.
	This is very similar to the computations conducted in \cite[\S 2.2]{BostanChyzakLecerfSalvySchost2007}. We start by recalling some facts that are proved there. There exist polynomials $W_k\in\KK[x,y]$ such that for all $i\in\{1,2,\dots,n\}$ and all $k\ge 1$, the derivative $y_i^{(k)}$ can be expressed as
	\[y_i^{(k)}=\frac{W_k(x,y_i)}{P_y(x,y_i)^{2k-1}}.\]
	Moreover, the polynomials $W_k$ satisfy
	\begin{equation}
	\deg_x W_k \le (2d_x-1)k-d_x, \quad \deg_y W_k\le 2(d_y-1)k-d_y+2.
	\end{equation}

	It follows that $D = \prod_{i=1}^n{P_y(x,y_i)^{2n-3}}\in\KK[x,y_1,y_2,\dots,y_n]$ is a polynomial such that $\Wr(y_1,y_2,\dots,y_n)\cdot D\in\KK[x,y_1,y_2,\dots,y_n]$.
	We will denote by $R$ this last polynomial. $R$ is the determinant of the matrix
	\begin{equation*}
	\mathcal{N}=
	\begin{pmatrix}
	y_1P_y(x,y_1)^{2n-3} & \cdots & y_nP_y(x, y_n)^{2n-3} \\
	W_1(x,y_1)P_y(x, y_1)^{2n-4} & \cdots & W_1(x, y_n)P_y(x, y_n)^{2n-4} \\
	W_2(x,y_1)P_y(x, y_1)^{2n-6} & \cdots & W_2(x, y_n)P_y(x, y_n)^{2n-6} \\
	\vdots & \vdots & \vdots \\
	W_{n-1}(x, y_1) & \cdots & W_{n-1}(x, y_n)
	\end{pmatrix}.
	\end{equation*}
	$R$ is an anti-symmetric polynomial in $y_1,y_2,\dots,y_n$, but $R^2$ is symmetric, as well as $D$, so we can apply Lemma~\ref{lem:symmetric functions} to see that $R^2$ and $D$ belong to $\KK(x)$. Therefore, the equality \[\Wr(y_1,y_2,\dots,y_n) = \frac{R}{D}\]
	shows that $\Wr(y_1,y_2,\dots,y_n)$ is the square root of a rational function in $x$. We are going to use this structure and Lemma~\ref{lem:symmetric functions} to derive the desired bound on the valuation of the Wronskian determinant.
	
	If $\det(\mathcal{N})$ is viewed as a polynomial in $\KK[x,y_1,y_2,\dots,y_n]$, then
	\begin{align*}
		\deg_x \det(\mathcal{N})^2 \le & \ 2 \sum_{k=0}^{n-1}{((2n-3)d_x-k)} \\
		\le & \ 2n(2n-3)d_x + n(n-1),
	\end{align*}
	and for all $i\in\{1,2,\dots,n\}$,
	\[ \deg_{y_i} \det(\mathcal{N}) \le 2(2n-3)d_y-2(2n-4). \]
	Similarly, when $D$ is viewed as a polynomial in $\KK[x,y_1,y_2,\dots,y_n]$, we have:
	\begin{align*}
		\deg_x D = n(2n-3)d_x, \quad \deg_{y_i}D = (2n-3)(d_y-1).
	\end{align*}
	Applying Lemma ~\ref{lem:symmetric functions}, we deduce that, denoting by $p(x)$ the leading coefficient of $P(x,y)$,
	\begin{align*}
		\Wr(y_1,y_2,\dots,y_n) = \frac{U(x)}{p(x)V(x)},
	\end{align*}
	where
	\[\deg_x U^2 \le 2n(2n-3)d_x+n(n-1)+2(2n-3)d_xd_y-2(2n-4)d_x.\]
	Finally, the inequalities $\val(\Wr(y_1,y_2,\dots,y_n))\le \frac{1}{2}\val(U^2)\le \frac{1}{2}\deg_x(U^2)$ and $n\le d_y$ yield
	$$\val(\Wr(y_1, y_2,\dots,y_n)) = \bigO(d_xd_y^2),$$
	which concludes the proof.	
\end{proof} 

We immediately deduce the following corollary on the number of initial conditions required to expand an algebraic power series.

\begin{coro}\label{coro:Expand algebraic functions}
	Let $P\in\KK[x,y]$ be a polynomial of bidegree bounded by $(d_x, d_y)$. Let $R$ be the recurrence associated to the differential resolvent of $P$.
	Then the largest integer root of the leading coefficient of $R$ is at most $\bigO(d_xd_y^2)$.
\end{coro}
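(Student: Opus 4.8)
The statement will follow almost immediately from Theorem~\ref{th:valuations}, once it is combined with the classical correspondence between the indicial equation of a linear differential operator at the origin and the valuations of its formal (Laurent) series solutions, which was already recalled just before Theorem~\ref{th:valuations}. Here is how I would organize it.

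First I would make precise the relation between $R$ and the differential resolvent $L$ of $P$. Writing $L$ and substituting a formal series $y(x)=\sum_{n}c_nx^{n}$, the recurrence $R$ governing the $c_n$ has, as its leading coefficient (viewed as a polynomial in the shift/index variable), the indicial polynomial $\chi$ of $L$ at $0$, up to an index shift that does not change its set of roots. In particular, if a positive integer $n_0$ is the valuation of a formal Laurent series solution of $L$, then $\chi(n_0)=0$, so the largest integer root $\alpha$ of the leading coefficient of $R$ is at least as large as every such valuation.

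Next I would invoke the converse direction, namely that the largest integer root $\alpha$ of $\chi$ is actually attained: there is a formal Laurent series solution $y(x)$ of $L$ with $\val_x(y(x))=\alpha$. This is the ``fundamental idea'' quoted from \cite[\S15.31]{Ince1956}: because $\alpha$ is maximal among the integer exponents predicted by the indicial equation, the Frobenius construction of a solution $x^{\alpha}(c_0+c_1x+\cdots)$ with $c_0\ne 0$ cannot be obstructed by a resonance with a larger exponent, and hence produces an honest series (no logarithmic term). I expect this to be the only delicate point of the argument; everything else is bookkeeping about normalizations, and the degenerate cases (no integer root, or only negative ones) are trivial since the claimed bound is a nonnegative quantity.

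Finally, Theorem~\ref{th:valuations} bounds $\val_x(y(x))=\bigO(d_xd_y^{2})$ for every Laurent series solution of $L$, in particular for the one constructed above, so $\alpha=\bigO(d_xd_y^{2})$, which is exactly the assertion. The substance of the corollary therefore lies entirely in Theorem~\ref{th:valuations}; the corollary itself is just the transfer of that valuation bound to the indicial polynomial through the Frobenius/indicial-equation dictionary.
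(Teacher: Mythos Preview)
Your proposal is correct and follows exactly the route the paper takes: the paper's own proof is the one-line ``Immediate from the theorem and the discussion that precedes it,'' where that discussion is precisely the indicial-equation/Frobenius correspondence you spell out (with the same reference to \cite[\S15.31]{Ince1956}). Your write-up simply unpacks that correspondence in more detail; there is no substantive difference.
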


\begin{proof}
	Immediate from the theorem and the discussion that precedes it.
\end{proof}

We are now able to prove Proposition~\ref{prop:initial conditions}.

\begin{proof} (of Proposition~\ref{prop:initial conditions})
	We only treat the case where the recurrence is computed for $B$, and the proof transposes directly to the case of $A$.
	Let $S$ and $d$ be as in the Proposition, and denote by $P$ the minimal polynomial of $B$. Then the recurrence computed by Algorithm~\ref{algo:bridges} is associated to the minimal annihilating differential operator for $B$, which is also the differential resolvent of $P$. We denote it by $L_P$. Now since $B = [y^{-1}]W(x,y)/y$, it can be written as a sum of residues similar to formula~(\ref{eq:diagonal-as-residues}). If we denote by $R$ the polynomial that cancels these residues, then $P$ divides $\Sigma_c R$ for some $c$. This implies in particular that all the solutions of $L_P$ are linear combinations of the roots of $R$. Thus, if $L_R$ is the differential resolvent of $R$, then all the solutions of $L_P$ are solutions of $L_R$. Since $W$ has bidegree $(1,d)$, Theorem~\ref{th:Bronstein} and Theorem~\ref{th:valuations} show that all the roots of $P$ have valuation at most $\bigO(d^3)$, and the result follows.
\end{proof}

\section{Conclusion}
We gave a complete and efficient algorithm that calculates a polynomial equation satisfied by the diagonal of a bivariate rational function in  characteristic~$0$. Generically, the degree in $\Delta$ of the polynomial  $P(t,\Delta)$ output by the algorithm is optimal. The bound on the degree in~$t$ is not tight. The gap between this bound and the actual degrees is not yet fully understood: it is already present for the Rothstein-Trager and Bronstein resultants. 
Our complexity results are given in the arithmetic complexity model. The
corresponding study in the binary model remains to be done.

The case of positive characteristic requires different methods and algorithms.
In that case, diagonals are algebraic even for rational functions with more than two variables. To the best of our knowledge, these questions have never been studied from the complexity viewpoint. One possible direction is to try and make effective the proof by Furstenberg that these diagonals are algebraic~\cite{Furstenberg1967}. Some work has also been done by Adamczewski and Bell~\cite{AB13} who among other things studied how the sizes of the polynomial equations satisfied by diagonals vary with the characteristic of the base field.

\bibliographystyle{abbrv}

\end{document}